\newtheorem{theorem}{Theorem}
\newtheorem{lemma}{Lemma}
\newtheorem{proposition}{Proposition}
\newtheorem{corollary}{Corollary}
\theoremstyle{definition}
\newtheorem{remark}{Remark}
\newcommand{\ee}{E}
\newcommand{\rr}{\mathbb R}
\newcommand{\dd}{\text d}
\newcommand{\ii}{\text i}
\newcounter{comcount}
\newcommand{\comment}[1]{}
\newcommand{\commenttwo}[1]{}
\title{Tracking errors from discrete hedging in exponential Lévy models.}
\author{Mats Brod\'en \\ Centre for Mathematical Sciences\\ Lund University, 22100 Lund Sweden \\E-mail: \texttt{matsb@maths.lth.se} \and 
Peter Tankov\footnote{Corresponding author} \\ Centre de Math\'ematiques Appliqu\'ees,\\ Ecole Polytechnique, 91128 Palaiseau Cedex France \\E-mail: \texttt{peter.tankov@polytechnique.org}}
\date{}
\begin{document}

\maketitle

\begin{abstract}
We analyze the errors arising from discrete readjustment of the hedging portfolio when hedging options in exponential Lévy models, and establish the rate at which the expected squared error goes to zero when the readjustment frequency increases. We compare the quadratic hedging strategy with the common market practice of delta hedging, and show that for discontinuous option pay-offs the latter strategy may suffer from very large discretization errors.  For options with discontinuous pay-offs, the convergence rate depends on the underlying Lévy process, and we give an explicit relation between the rate and the Blumenthal-Getoor index of the process.  
\end{abstract}

\noindent\textbf{Key words}\quad exponential Lévy models, quadratic hedging, delta hedging, discretization error, $L^2$ convergence, digital options

\noindent\textbf{2000 Mathematics subject classification}\quad 60F25, 60G51, 91B28


\section{Introduction}
We study the problem of hedging an option with a discretely rebalanced portfolio in an exponential Lévy model. This setting corresponds to an incomplete market and therefore gives rise to two kinds of hedging errors. The market incompleteness error is the difference between the option's pay-off and the theoretical hedging portfolio which assumes continuous rebalancing. This error and its minimization has been analyzed in several papers in the context of exponential Lévy models \cite{kallsen.hubalek.al.06,cont.al.05}. In this study we therefore focus on the discretization error, denoted by $\varepsilon_T$, and defined as the difference between the theoretical continuously rebalanced portfolio and the discretely rebalanced one. 


The error from discrete-time hedging and the related problem of approximating a stochastic integral with a Riemann sum has been analyzed by several authors in the context of diffusion models or continuous Itô processes. Bertsimas, Kogan and Lo \cite{bertsimas.kogan.lo.00} and later Hayashi and Mykland \cite{hayashi.mykland.05} gave the conditions under which the renormalized hedging error $\sqrt{n}\varepsilon_T$ converges weakly to a nondegenerate limiting distribution as the number of discretization dates $n$ goes to infinity. The rate of $L^2$ convergence of the discretization error to zero was analyzed by Zhang \cite{zhang.couverture}, who showed that for European Call and Put options $n E[\varepsilon_T^2]$ converges to a nonzero finite limit as $n\to \infty$ and by Gobet and Temam \cite{gobet.temam.01}, who studied irregular pay-offs and showed in particular that for digital options $\sqrt{n} E[\varepsilon_T^2]$ converges to a nondegenerate limit. Geiss \cite{geiss.02,geiss.geiss.06}, showed that for irregular pay-off functions the convergence rate of $n$ rather than $\sqrt{n}$ may be recovered by taking a non-equidistant (but deterministic) time net, where the rebalancing frequency increases as the option approaches expiry. 

In the context of discontinuous processes, the limiting behavior of the discretization error was studied in \cite{tankov.voltchkova.09} from the point of view of weak convergence, and it was shown in particular that if the underlying process has no diffusion component, $\sqrt{n}\varepsilon_T\to 0$ in probability as $n\to \infty$. However, in financial applications the risk is more commonly measured by an $L^2$ criterion. In this paper we therefore concentrate on the rate of $L^2$ convergence of the discretization error to zero, and we show that for this criterion, the convergence rates are different both from the rates of weak convergence and from the rates found by various authors for continuous processes. 

First, the limit $\lim_{n\to \infty} n E[\varepsilon_T^2]$ is positive in all cases and may be infinite. This means that for pure-jump Lévy processes, the rate of $L^2$ convergence is different from the rate of convergence in probability. This phenomenon is not encountered in diffusion models, and is explained by the fact that the big jumps do not contribute to the rate of convergence in probability, while they do contribute to the rate of $L^2$ convergence. 

Second, the convergence rate of the discretization error may depend on the hedging strategy. In this paper, we analyze and compare two specific hedging strategies: the quadratic hedging, which is the optimal strategy for the $L^2$ criterion, and the delta hedging, which is known to be suboptimal in exponential Lévy models, but is commonly used in practice and has been shown to be relatively close to optimal in terms of hedging error \cite{denkl}. We find that although for options with regular pay-offs the two strategies have similar discretization errors, in the case of irregular pay-offs, the delta hedging strategy, because it involves differentiation of the option price function, suffers from much larger discretization errors than quadratic hedging.

Finally we show that for options with irregular pay-offs, such as digitals, the convergence rate of the discretization error depends on the fine properties of the Lévy measure near zero.  We assume that the small jumps have stable-like behavior with index $\alpha$ (which is the case in many models used in practice) and characterize the convergence rates for the two strategies depending on $\alpha$.

In this paper, we suppose that the rebalancing dates are equidistant. Equidistant rebalancing is common market practice, especially for not-so-liquid underlyings which are not observed continuously. Although the convergence rates for options with irregular pay-offs may be improved by taking non-equidistant dates as in \cite{geiss.02}, in many practical situations such non-equidistant time grids cannot be used (for example, when one needs to hedge a portfolio of options on the same underlying with different expiry dates).  

The rest of the paper is structured as follows.  
After recalling the Fourier transform approach to option pricing in exponential Lévy models in section \ref{pricing.sec}, we establish, in section \ref{general.sec}, a general criterion for the $L^2$ convergence with a given rate of the error from discrete hedging.  In section \ref{specific.sec}, we first study the case of options with regular (Lipschitz) pay-offs and show that in this case for both quadratic hedging and delta hedging maximum convergence rate is attained. Next we turn to options with discontinuous payoffs and compute the convergence rates for both strategies as function of the parameter $\alpha$ characterzing the small jumps. In this case, the convergence rate for delta hedging is found to be strictly lower than for quadratic hedging, which shows that the former strategy may suffer from much larger discretization errors and should therefore be avoided in practice.

\section{Pricing and hedging in exp-Lévy models}
\label{pricing.sec}

\paragraph{Standard notation and basic assumptions} We now introduce the common notation for the rest of the paper. 
Given a filtered probability space $(\Omega,\mathcal F, (\mathcal F_t)_{t\geq 0}, P)$, let the stock price be modeled by $S_t=e^{X_t}$ where $X$ is a Lévy process with characteristic triple $(a^2,\nu,\gamma)$. The assumption that $S_0=1$ is with no loss of generality. Since we study the $L^2$ hedging error, we will always suppose that $S$ is square integrable. 
 The characteristic function of $X$ is denoted by $\phi_t$ and the characteristic exponent by $\psi$: $E[e^{iuX_t}] \equiv \phi_t(u)\equiv e^{t\psi(u)}$.  The process $S$ can be written in the form
\begin{align*}
S_t = 1  + \int_0^t b S_u du  + \int_0^t a S_u dW_u + \int_0^t S_{u-}\int_{\mathbb R}(e^z-1)\tilde J(du\times dz) \, ,
\end{align*}
where $W$ is a standard Brownian motion, $\tilde J$ a compensated Poisson random measure with intensity measure $dt \times \nu$ and $b:= \gamma + \frac{1}{2}a^2 + \int_{\rr} (e^z-1-z 1_{|z| \leq 1}) \nu(\dd z)$. Furthermore, we denote $A:= a^2 + \int_{\mathbb R}(e^z-1)^2 \nu(dz)$. 

We assume that there exists a risk-neutral probability $Q\sim P$, such that the prices of all assets are martingales under $Q$ (the interest rate is assumed to be zero). Moreover, we assume that $X$ is a Lévy process under $Q$ with characteristic exponent $\bar\psi$, characteristic function $\bar\phi_t$ and Lévy measure $\bar\nu$. 

\paragraph{Option pricing}
Consider a European option with pay-off $G(S_T)$ at time $T$ and denote by $g$ its log-payoff function: $G(e^x)\equiv g(x)$. Prices of European options can be computed from the risk-neutral characteristic function $\bar\phi$.
\begin{proposition}\label{optionfourier.prop}${}$
\begin{itemize}
\item[(i)]Suppose that there exists $R\in \mathbb R$ such that
\begin{align}
&g(x)e^{-Rx} \quad \text{has finite variation on $\mathbb R$,}\label{condopt1}\\
&g(x)e^{-Rx}\in L^1(\mathbb R),\label{condopt3}\\
&E^Q[e^{RX_{T-t}}]<\infty \quad \text{and}\quad \int_{\mathbb R}\frac{|\bar\phi_{T-t}(u-iR)|}{1+|u|}du < \infty.\label{condopt4}
\end{align}
Then
\begin{align}
C(t,S_t):=E^Q[G(S_T)|\mathcal F_t] = \frac{1}{2\pi}\int_{\mathbb R+iR}\hat g(u)\bar\phi_{T-t}(-u)S_t^{-iu}du,\label{optionfourier}
\end{align}
where 
$$
\hat g(u) := \int_{\mathbb R} e^{iux}g(x)dx
$$
and moreover
\begin{align}
|\hat g(u+iR)|\leq \frac{C}{1+|u|},\quad u\in \mathbb R\label{digibound}
\end{align}
for some $C>0$.
\item[(ii)]
Suppose that $g$ is differentiable and there exists $R\in \mathbb R$ such that
\begin{align}
&g'(x)e^{-Rx} \quad \text{has finite variation on $\mathbb R$,}\label{europt1}\\
&g(x)e^{-Rx}\in L^1(\mathbb R)\quad \text{and}\quad g'(x)e^{-Rx}\in L^1(\mathbb R)\label{europt3}\\
&E^Q[e^{RX_{T-t}}]<\infty.\label{europt4}
\end{align}
Then the representation \eqref{optionfourier} holds and 
\begin{align}
|\hat g(u+iR)|\leq \frac{C}{1+|u|^2},\quad u\in \mathbb R\label{eurobound}
\end{align}
for some $C>0$.
\end{itemize}
\end{proposition}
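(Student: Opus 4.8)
The plan is to handle, in each part, the Fourier bounds (\eqref{digibound}, \eqref{eurobound}) first and the pricing representation \eqref{optionfourier} second, since the representation will rely on the bounds.

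For the bounds I would start from the observation that the complex Fourier transform at height $R$ is the ordinary transform of the dampened payoff: setting $h(x):=g(x)e^{-Rx}$ one has $\hat g(u+iR)=\int_{\mathbb R}e^{iux}h(x)\,dx=\hat h(u)$. In part (i), \eqref{condopt3} says $h\in L^1(\mathbb R)$, giving at once the uniform bound $|\hat h(u)|\le\|h\|_{L^1}$, which controls the region $|u|\le 1$. For $|u|>1$ I would use \eqref{condopt1}: since $h$ is of finite variation and integrable it has vanishing limits at $\pm\infty$, so integration by parts yields $\hat h(u)=-\frac{1}{iu}\int_{\mathbb R}e^{iux}\,dh(x)$, whence $|\hat h(u)|\le V(h)/|u|$ with $V(h)$ the total variation. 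Combining the two estimates gives \eqref{digibound}. For part (ii) I would iterate this once more: by \eqref{europt3} the function $h'=(g'-Rg)e^{-Rx}$ is integrable, so $h$ is absolutely continuous with $h(\pm\infty)=0$ and $\hat h(u)=-\frac{1}{iu}\widehat{h'}(u)$; then \eqref{europt1} together with the integrability of $h$ shows that $h'$ is itself of finite variation and vanishes at $\pm\infty$, so a second integration by parts bounds $\widehat{h'}(u)$ by $V(h')/|u|$, producing the $1/u^2$ decay of \eqref{eurobound}.

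For the representation I would first use the independence and stationarity of the increments of $X$: since $S_T=S_te^{X_T-X_t}$ and $X_T-X_t$ is independent of $\mathcal F_t$ with the law of $X_{T-t}$, the conditional expectation reduces to $C(t,S_t)=f(S_t)$ with $f(s)=E^Q[g(\log s+X_{T-t})]$. The formal computation is Fourier inversion of $g$ followed by Fubini, which would reproduce \eqref{optionfourier} once one identifies $E^Q[e^{-i(u+iR)X_{T-t}}]=\bar\phi_{T-t}(-u-iR)$ and absorbs the damping factor $s^{R}e^{-iu\log s}=s^{-i(u+iR)}$ into the contour integral over $\mathbb R+iR$. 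The main obstacle is that this interchange is not legitimate as stated: by \eqref{digibound} the transform $\hat g(\cdot+iR)$ decays only like $1/|u|$ and is not integrable, so neither the inversion of $g$ nor the exchange of $E^Q$ with the $u$-integral converges absolutely on its own. This is precisely why the extra hypothesis $\int_{\mathbb R}|\bar\phi_{T-t}(u-iR)|/(1+|u|)\,du<\infty$ is imposed in \eqref{condopt4}.

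To exploit it I would avoid inverting $g$ in isolation and work with the tilted law directly. Writing $f(s)=s^{R}E^Q[e^{RX_{T-t}}h(\log s+X_{T-t})]$ and introducing the finite measure $\tilde\mu(dy):=e^{Ry}\,Q\!\circ\!X_{T-t}^{-1}(dy)$ (finite by \eqref{condopt4}), the expectation becomes a convolution $(h*\tilde\mu^-)(\log s)$, whose Fourier transform is the product $\hat h(u)\,\bar\phi_{T-t}(-u-iR)$. By \eqref{digibound} and \eqref{condopt4} this product is integrable even though $\hat h$ alone is not, so the Fourier inversion theorem applies to the convolution; evaluating the inverse transform at $\log s$ and reinstating the factor $s^{R}$ yields \eqref{optionfourier}. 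In part (ii) the stronger bound \eqref{eurobound} already makes $\hat g(\cdot+iR)$ integrable, so the naive inversion-and-Fubini argument runs directly, the domination being $s^{R}e^{RX_{T-t}}|\hat g(u+iR)|$ with $E^Q[e^{RX_{T-t}}]<\infty$ from \eqref{europt4}; this is why no characteristic-function integrability condition is needed there. The remaining technical point is that the inversion theorem gives the identity only for almost every $s$; I would upgrade it to every $s>0$ by observing that the right-hand side is continuous in $s$ (its integrand is dominated uniformly in $s$ by an integrable function), which suffices since $C(t,S_t)$ is itself defined only up to a $Q$-null set.
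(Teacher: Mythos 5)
Your derivation of the decay bounds \eqref{digibound} and \eqref{eurobound} is correct (damping, then one or two integrations by parts against the finite-variation measure), and your tilting/convolution device for \eqref{optionfourier} --- writing $f(s)=s^{R}\,(h*\tilde\mu^-)(\log s)$ with $\tilde\mu(dy)=e^{Ry}\,Q\circ X_{T-t}^{-1}(dy)$, so that \eqref{condopt4} makes the product $\hat h(u)\,\bar\phi_{T-t}(-u-iR)$ integrable --- is a sensible way to exploit the hypothesis; note that the paper itself gives no proof to compare with (it defers to a reference), so only correctness is at issue.

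There is, however, a genuine gap in your last step. Fourier inversion gives $(h*\tilde\mu^-)(x)=\frac{1}{2\pi}\int_{\mathbb R}e^{-iux}\hat h(u)\bar\phi_{T-t}(-u-iR)\,du$ only for Lebesgue-almost every $x$, and you dismiss the exceptional set on the grounds that ``$C(t,S_t)$ is defined only up to a $Q$-null set''. This conflates two different null ideals: what you need is that the exceptional set is null for the law of $\log S_t = X_t$ under $Q$, and a Lebesgue-null set need not be, unless that law is absolutely continuous with respect to Lebesgue measure --- which is not among the hypotheses. Marginal laws of L\'evy processes can be singular; in part (ii), where nothing at all is assumed about $\bar\phi_{T-t}$, they can even carry atoms (compound Poisson), so the argument as written genuinely breaks down there. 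Continuity of the right-hand side alone cannot repair this: an a.e.\ identity between a continuous function and an arbitrary function says nothing at a given exceptional point; you need regularity of the \emph{left}-hand side as well. In part (ii) the repair is immediate: $g$ is differentiable, hence continuous, so $h$ is continuous and the inversion formula for $h$ holds at \emph{every} $x$; your Fubini step then yields \eqref{optionfourier} for every $s$, and this --- not the null-set remark --- is the correct justification. In part (i), show that $F:=h*\tilde\mu^-$ is itself continuous: $h$ is bounded with at most countably many discontinuities (being of finite variation), and $\tilde\mu$ is atomless under \eqref{condopt4}. Indeed, if the law of $X_{T-t}$ under $Q$ had an atom, $X$ would be a compound Poisson process with drift, in which case $|\bar\phi_{T-t}(u-iR)|$ stays bounded away from zero and $\int_{\mathbb R}|\bar\phi_{T-t}(u-iR)|(1+|u|)^{-1}du$ diverges, contradicting \eqref{condopt4}; alternatively, that integrability forces $(2T')^{-1}\int_{-T'}^{T'}|\bar\phi_{T-t}(u-iR)|^{2}du\to 0$, so Wiener's lemma excludes atoms. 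Dominated convergence then gives continuity of $F$; two continuous functions equal Lebesgue-a.e.\ are equal everywhere, so the identity holds for every $s>0$ and hence surely at $s=S_t$.
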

For the proof, see \cite{tankov.09}.

For the digital option with pay-off $G(S_T) = 1_{S_T \geq K}$, conditions \eqref{condopt1} and \eqref{condopt3} are satisfied for all $R>0$ and
$$
\hat g(u+iR) = \frac{K^{iu-R}}{R-iu}.
$$
For the European call option with pay-off $G(S_T) = (S_T-K)^+$, conditions \eqref{europt1} and \eqref{europt3} are satisfied for all $R>1$ and
$$
\hat g(u+iR) = \frac{K^{iu+1-R}}{(R-iu)(R-1-iu)}.
$$ 
In any case, conditions \eqref{condopt1} and \eqref{condopt3} imply 
$|G(S)|\leq C S^R$ for some $C>0$ and all $S>0$.

In this paper, we study the behavior of the discretization error for the commonly used hedging strategies: the delta hedging strategy and the quadratic hedging strategy.
Our method is based on the integral representation for the strategy $F$ of the form
\begin{align}
F_t = F_0 + \int_0^t \mu_u du + \int_0^t \sigma_u dW_u + \int_0^t \int_{\mathbb R}\gamma_{u-}(z)\tilde J(du\times dz),\quad \forall t<T.\label{levyito}
\end{align}
Below we show how this representation can be obtained for the strategies we are interested in.

\paragraph{Delta hedging}
The delta hedging strategy is the classical hedging strategy inherited from the Black-Scholes model and given by $F_t = \frac{\partial C(t,S_t)}{\partial S}$. It is not optimal in exponential Lévy models but is nevertheless commonly used by market practitioners. 
\begin{proposition}[Delta hedging]\label{delta.prop}
Let the conditions \eqref{condopt1}, \eqref{condopt3} and \eqref{condopt4} for all $t<T$ be satisfied, and assume that
\begin{align}
\int_{\mathbb R+iR}|\hat g(u)\bar\phi_{T-t}(-u)(-iu)|du<\infty,\quad \forall t<T. \label{deltacond.eq}
\end{align}
Then the delta hedging strategy is given by
\begin{align}
F_t = \frac{\partial C(t,S_t)}{\partial S}= \frac{1}{2\pi}\int_{\mathbb R+iR}\hat g(u)\bar\phi_{T-t}(-u)(-iu)S_t^{-iu-1}du.\label{delta.eq}
\end{align}
Assume in addition
\begin{align}
&\int_{|x|>1}e^{2(R-1)x}\nu(dx)<\infty \, .
\end{align}
Then the representation \eqref{levyito} holds for $F$ with \comment{Added `,' and `.' to the equations below.}
\begin{align}
\mu_t &= \frac{1}{2\pi} \int_{\mathbb R+iR}\hat g(u)\bar\phi_{T-t}(-u)(-iu)S_t^{-1-iu}(\psi(u+i)-\bar\psi(-u))du\label{deltarep1} \, , \\
\sigma_t &= \frac{a}{2\pi} \int_{\mathbb R+iR} \hat g(u)\bar\phi_{T-t}(-u)(-iu)(-1-iu)S_t^{-1-iu}du \, , \\
\gamma_t(z)&= \frac{1}{2\pi} \int_{\mathbb R+iR} \hat g(u)\bar\phi_{T-t}(-u)(-iu) S_t^{-1-iu}(e^{(-1-iu)z}-1)du \, . \label{deltarep3}
\end{align}
\end{proposition}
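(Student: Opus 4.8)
The plan is to treat the two assertions in turn: first to differentiate the pricing formula \eqref{optionfourier} with respect to $S$ to obtain \eqref{delta.eq}, and then to apply It\^o's formula term-by-term under the contour integral to obtain the representation \eqref{levyito} with the coefficients \eqref{deltarep1}--\eqref{deltarep3}.

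For \eqref{delta.eq}, I would start from \eqref{optionfourier} and differentiate under the integral sign in $S$. On the contour $u\in\mathbb{R}+iR$ one has $|S^{-iu}|=S^{R}$, so $\partial_S S^{-iu}=(-iu)S^{-iu-1}$ is dominated, uniformly for $S$ in compact subsets of $(0,\infty)$, by $\mathrm{const}\cdot|\hat g(u)\bar\phi_{T-t}(-u)(-iu)|$. Assumption \eqref{deltacond.eq} makes this an integrable dominating function, so differentiation under the integral is justified by dominated convergence and yields \eqref{delta.eq} directly.

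For the representation \eqref{levyito}, the idea is not to apply It\^o to the whole (irregular) delta at once, but to the integrand for each fixed $u$ and then to integrate over the contour. Write the $u$-integrand of \eqref{delta.eq} as $\frac{1}{2\pi}\hat g(u)(-iu)\,m_t(u)\,S_t^{-1-iu}$, where $m_t(u):=\bar\phi_{T-t}(-u)=e^{(T-t)\bar\psi(-u)}$ is deterministic and of finite variation with $\dot m_t(u)=-\bar\psi(-u)m_t(u)$. Applying It\^o's formula for semimartingales with jumps to the (complex, smooth in $S>0$ and $C^1$ in $t$) map $(t,S)\mapsto m_t(u)S^{-1-iu}$, using the dynamics of $S$ from the standard-notation paragraph, I would collect: the Brownian coefficient $(-1-iu)a$, the jump coefficient $e^{(-1-iu)z}-1$ in front of $\tilde J$, and a $dt$-coefficient. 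The latter combines $\dot m_t(u)$, which gives the $-\bar\psi(-u)$ contribution, with the drift of $S_t^{-1-iu}$; by the L\'evy--Khintchine formula the drift of $e^{pX_t}$ equals its cumulant exponent $\psi(-ip)$, and with $p=-1-iu$ this identifies the value of $\psi$ producing the coefficient $\psi(u+i)-\bar\psi(-u)$ of \eqref{deltarep1}. Multiplying by the $t$-independent prefactor $\frac{1}{2\pi}\hat g(u)(-iu)$ reproduces exactly the $u$-integrands of \eqref{deltarep1}--\eqref{deltarep3}.

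It remains to integrate this pointwise-in-$u$ identity over $\mathbb{R}+iR$ and to interchange the contour integral with the stochastic ones, which I would carry out via a stochastic Fubini theorem; this is the technical heart of the argument. The interchange, and indeed the very definition of the stochastic integrals, requires $\mu_t$, $\sigma_t$ and $\gamma_t(\cdot)$ to have the right integrability. Here the added hypothesis $\int_{|x|>1}e^{2(R-1)x}\nu(dx)<\infty$ is essential: it gives $E[e^{2(R-1)X_t}]=E[S_t^{2(R-1)}]<\infty$, i.e.\ square-integrability of $S_t^{R-1}=|S_t^{-1-iu}|$, the modulus appearing throughout. Combining this with the decay bound \eqref{digibound} on $\hat g$ and the (Gaussian, when $a\neq0$) decay of $\bar\phi_{T-t}(-u)$ in $u$ valid for every $t<T$, I would verify absolute integrability of the $\mu$-integrand, the $L^2([0,t]\times\Omega)$ bound for $\sigma$, and the estimate $E\int_0^t\!\int_{\mathbb{R}}|\gamma_s(z)|^2\nu(dz)\,ds<\infty$ needed for the compensated jump integral. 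I expect the jump term $\gamma$, where one must control $\int_{\mathbb{R}+iR}\hat g(u)\bar\phi_{T-t}(-u)(-iu)S_t^{-1-iu}(e^{(-1-iu)z}-1)\,du$ simultaneously in $u$ and in $z$ against $\nu$, together with the justification of the stochastic Fubini interchange, to be the main obstacle; the drift and Brownian terms are comparatively routine once the dominating bounds are in place.
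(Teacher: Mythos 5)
Your proposal is correct and is essentially the paper's own argument: the paper deduces \eqref{delta.eq} from \eqref{optionfourier} by dominated convergence using \eqref{deltacond.eq}, and then obtains \eqref{deltarep1}--\eqref{deltarep3} by citing its appendix Lemma \ref{martrep.prop} with $f(u)=\frac{1}{2\pi}\hat g(u+iR)(R-iu)$ and $R'=R-1$, the proof of which is exactly what you sketch --- It\^o's formula applied to $\bar\phi_{T-s}(-u)S_s^{-1-iu}$ for each fixed $u$, followed by a pathwise Fubini theorem for the drift term and stochastic Fubini theorems for the Brownian and compensated Poisson parts, with the applicability conditions verified from $\int_{|x|>1}e^{2(R-1)x}\nu(dx)<\infty$ together with \eqref{condopt4} (which is the source of the lemma's hypothesis $\int_{|x|>1}e^{Rx}\bar\nu(dx)<\infty$). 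One remark: your L\'evy--Khintchine computation (drift rate of $e^{pX_t}$ equals $\psi(-ip)$, with $p=-1-iu$) actually yields the coefficient $\psi(-u+i)-\bar\psi(-u)$, not $\psi(u+i)-\bar\psi(-u)$ as you claim to match; this agrees with Lemma \ref{martrep.prop} and with \eqref{muquad.eq}, so the argument of $\psi$ in \eqref{deltarep1} as printed appears to be a typo and your derivation gives the correct coefficient.
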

\begin{proof}
The expression \eqref{delta.eq} is deduced directly from \eqref{optionfourier} using the dominated convergence theorem and the condition \eqref{deltacond.eq}. The martingale representation follows by applying Lemma \ref{martrep.prop} with $f(u):=\frac{1}{2\pi}\hat g(u+iR)(R-iu)$ and $R'=R-1$.
\end{proof}

\paragraph{Quadratic hedging under the martingale probability}
Quadratic hedging in the literature comes in three different flavors: one can (i) minimize the global $L^2$ hedging error computed under the martingale probability (as in \cite{follmer.sondermann.86} and many subsequent papers); (ii) minimize the local variation of the hedging portfolio under the historical probability (as in e.g. \cite{follmer.schweizer.91}) or (iii) minimize the global $L^2$ hedging error under the historical probability (as in \cite{kallsen.hubalek.al.06,cerny.kallsen.07}). 
In this paper we choose the martingale approach, that is, we minimize
\begin{align}
E^Q\left[\left(G(S_T)-C(0,S_0)-\int_0^T F_{t}dS_t\right)^2\right]. \label{minrisk.eq}
\end{align}
This approach is the simplest of the three and thus enables us to explain the main ideas and insights in a less technical setting. See  \cite{cont.al.05} for some arguments towards using this strategy in practice rather than minimizing the quadratic hedging error under the historical measure. Our methodology can also be applied to the local risk minimization, and yields the same results, with more technicalities and under appropriately modified assumptions. 

The solution to the minimization problem \eqref{minrisk.eq} is given by the Kunita-Watanabe decomposition, and can be explicitly written as (see \cite{cerny.kallsen.07})
$$
F_t = \frac{d\langle C,S \rangle^Q _t}{d \langle S, S\rangle_t^Q},
$$
where we denote $C_t:=C(t,S_t)$. 
\begin{proposition}\label{martquad.prop}
Assume \eqref{condopt1}--\eqref{condopt3}; \eqref{condopt4} for all $t<T$ and
\begin{align}
\int_{|x|>1}e^{2(Rx\vee x)}\bar\nu(dx)<\infty.\label{squareint.eq}
\end{align}
Then the quadratic hedging strategy under the martingale probability is given by 
\begin{align}
&F_t = \frac{1}{2\pi}\int_{\mathbb R+iR}\hat g(u)\bar\phi_{T-t}(-u)S_t^{-iu-1}\Upsilon(u)du \label{martquad.eq}\\
&\text{where}\quad \Upsilon(u) = \frac{\bar\psi(-u-i))-\bar\psi(-u)-\bar\psi(-i)}{\bar\psi(-2i)-2\bar\psi(-i)}.
\end{align}
Assume in addition that
$$
\int_{|x|>1}e^{2(R-1)x}\nu(dx)<\infty.
$$
Then the representation \eqref{levyito} holds for $F$ with \comment{Added `,' and `.' to the equations below.}
\begin{align}
\mu_t &= \frac{1}{2\pi} \int_{\mathbb R+iR}\hat g(u)\bar\phi_{T-t}(-u)\Upsilon(u)S_t^{-1-iu}(\psi(-u+i)-\bar\psi(-u))du \, , \label{muquad.eq}\\
\sigma_t &= \frac{a}{2\pi} \int_{\mathbb R+iR} \hat g(u)\bar\phi_{T-t}(-u)\Upsilon(u)(-1-iu)S_t^{-1-iu}du \, , \\
\gamma_t(z)&= \frac{1}{2\pi} \int_{\mathbb R+iR} \hat g(u)\bar\phi_{T-t}(-u)\Upsilon(u) S_t^{-1-iu}(e^{(-1-iu)z}-1)du \, . \label{gammaquad.eq}
\end{align}
\end{proposition}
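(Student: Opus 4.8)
The plan is to obtain \eqref{martquad.eq} from the Kunita--Watanabe formula $F_t = d\langle C,S\rangle^Q_t / d\langle S,S\rangle^Q_t$ by inserting the Fourier representation \eqref{optionfourier} for $C_t=C(t,S_t)$ and computing the two brackets explicitly. Since $\bar\phi_{T-t}(-u)=e^{(T-t)\bar\psi(-u)}$ is deterministic and of finite variation in $t$, it contributes nothing to the quadratic covariation, so the only stochastic ingredient is the covariation of the complex exponentials $S_t^{-iu}=e^{-iuX_t}$ with $S_t=e^{X_t}$ under $Q$. The first step is therefore to write, at least formally,
\[
d\langle C,S\rangle^Q_t = \frac{1}{2\pi}\int_{\mathbb R+iR}\hat g(u)\bar\phi_{T-t}(-u)\,d\langle e^{-iuX},e^{X}\rangle^Q_t\,du,
\]
reducing the whole problem to a single covariation computation inside the integral.

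For the second step I would exploit the exponential structure of $X$ under $Q$. Writing $\kappa(\lambda):=\bar\psi(-i\lambda)$ for the $Q$-Laplace exponent, a direct computation of the diffusion and compensated jump parts of $e^{\lambda X}$ gives, for admissible $\lambda_1,\lambda_2$,
\[
d\langle e^{\lambda_1 X},e^{\lambda_2 X}\rangle^Q_t = e^{(\lambda_1+\lambda_2)X_{t-}}\bigl(\kappa(\lambda_1+\lambda_2)-\kappa(\lambda_1)-\kappa(\lambda_2)\bigr)\,dt,
\]
the key algebraic identity being $(e^{\lambda_1 z}-1)(e^{\lambda_2 z}-1)=e^{(\lambda_1+\lambda_2)z}-e^{\lambda_1 z}-e^{\lambda_2 z}+1$ for the jump part. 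Specializing to $\lambda_1=-iu,\ \lambda_2=1$ produces the factor $e^{(1-iu)X_{t-}}=S_{t-}^{-1-iu}S_{t-}^{2}$ multiplied by $\kappa(1-iu)-\kappa(-iu)-\kappa(1)=\bar\psi(-u-i)-\bar\psi(-u)-\bar\psi(-i)$, while $\lambda_1=\lambda_2=1$ gives $d\langle S,S\rangle^Q_t=S_{t-}^2(\bar\psi(-2i)-2\bar\psi(-i))\,dt$. Dividing, the power $S_{t-}^{-1-iu}$ survives, the denominator is $\bar\psi(-2i)-2\bar\psi(-i)$, and one recognizes exactly $\Upsilon(u)$, yielding \eqref{martquad.eq} after replacing $S_{t-}$ by $S_t$. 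Note that $\bar\psi(-i)=0$ because $S$ is a $Q$-martingale, so the normalizing constant equals $\bar\psi(-2i)=a^2+\int_{\mathbb R}(e^z-1)^2\bar\nu(dz)$, which is finite precisely by \eqref{squareint.eq}.

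The main obstacle is rigor in the first step: justifying the interchange of the $du$-integral with the covariation, i.e.\ a stochastic-Fubini argument, together with the differentiation of \eqref{optionfourier} under the integral sign. This is where the hypotheses enter. I would bound the integrand on the contour $\mathbb R+iR$ using \eqref{digibound} for $\hat g$, the decay in $u$ of $\bar\phi_{T-t}(-u)$ coming from \eqref{condopt4}, and a polynomial growth estimate for $\Upsilon(u)$; condition \eqref{squareint.eq} guarantees that $\bar\psi(-u-i)$ and the constant $\bar\psi(-2i)$ are well defined and finite on this contour and that $S$ is square integrable, so that $\langle S,S\rangle^Q$ and $\langle C,S\rangle^Q$ exist. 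Establishing the uniform integrability needed to pass the covariation inside $\int\cdots du$ is the technical heart of this part.

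For the Lévy--Itô representation \eqref{levyito}, the crucial observation is that \eqref{martquad.eq} has precisely the same form as the delta-hedging formula \eqref{delta.eq}, with the factor $(-iu)$ replaced by $\Upsilon(u)$. I would therefore repeat the argument of Proposition \ref{delta.prop}, applying Lemma \ref{martrep.prop} to $S_t^{-1-iu}=e^{(-1-iu)X_t}$ with the weight $f(u):=\frac{1}{2\pi}\hat g(u+iR)\Upsilon(u+iR)$ and $R'=R-1$; the added hypothesis $\int_{|x|>1}e^{2(R-1)x}\nu(dx)<\infty$ is exactly the integrability under $P$ required by that lemma. The representation then assembles as follows: $\mu_t$ collects the time derivative $\partial_t\bar\phi_{T-t}(-u)=-\bar\psi(-u)\bar\phi_{T-t}(-u)$ together with the $P$-drift $\psi(-u+i)$ of the Lévy--Itô decomposition of $e^{(-1-iu)X_t}$, giving the combination $\psi(-u+i)-\bar\psi(-u)$ in \eqref{muquad.eq}; $\sigma_t$ comes from the Brownian part $(-1-iu)a\,S_t^{-1-iu}$, and $\gamma_t(z)$ from the jump part $S_{t-}^{-1-iu}(e^{(-1-iu)z}-1)$, reproducing \eqref{gammaquad.eq}. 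The only work beyond the delta case is verifying that $\Upsilon(u+iR)$ obeys the same growth bound on the contour that $(R-iu)$ does, so that Lemma \ref{martrep.prop} applies unchanged.
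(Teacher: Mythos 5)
Your handling of the representation part coincides with the paper's own proof: both apply Lemma \ref{martrep.prop} with $f(u)=\frac{1}{2\pi}\hat g(u+iR)\Upsilon(u+iR)$ and $R'=R-1$, and your identification of $\mu_t$, $\sigma_t$, $\gamma_t(z)$ is correct. For the formula \eqref{martquad.eq} itself you take a genuinely different route: the paper simply cites \cite[Proposition 7]{tankov.09} (alternatively \cite{kallsen.hubalek.al.06}), whereas you rederive it from the Kunita--Watanabe ratio by computing $d\langle e^{\lambda_1 X},e^{\lambda_2 X}\rangle^Q_t=e^{(\lambda_1+\lambda_2)X_{t-}}\bigl(\kappa(\lambda_1+\lambda_2)-\kappa(\lambda_1)-\kappa(\lambda_2)\bigr)dt$ and specializing $\lambda_1=-iu$, $\lambda_2=1$. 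The algebra is right (including $\bar\psi(-i)=0$ and the identification of the normalizing constant with $a^2+\int_{\mathbb R}(e^z-1)^2\bar\nu(dz)$), and this buys self-containedness; the price is exactly the stochastic-Fubini/interchange argument you flag but do not carry out, which is the nontrivial content of the cited reference and is what the paper avoids by citation.

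The one genuine gap is in your closing claim that it suffices to check that $\Upsilon(u+iR)$ obeys ``the same growth bound that $(R-iu)$ does,'' so that Lemma \ref{martrep.prop} ``applies unchanged.'' In the delta case, the applicability condition \eqref{martcond.eq} of Lemma \ref{martrep.prop} is not derived but \emph{assumed}, in the form of hypothesis \eqref{deltacond.eq} of Proposition \ref{delta.prop}; Proposition \ref{martquad.prop} carries no such hypothesis, so \eqref{martcond.eq} must be proved from \eqref{condopt1}--\eqref{condopt4} and \eqref{squareint.eq}. A linear bound $|\Upsilon(u+iR)|\leq C(1+|u|)$ is useless here: combined with the payoff bound \eqref{digibound} it leaves an integrand of order $|\bar\phi_{T-t}(u-iR)|$, whose integrability does not follow from \eqref{condopt4}, which only controls $\int |\bar\phi_{T-t}(u-iR)|(1+|u|)^{-1}du$. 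The paper instead invokes Lemma \ref{lem:UVpsi}, which gives the sharper estimate $|\Upsilon(u+iR)|\leq C\bigl(1+\sqrt{|\Re\bar\psi(u)|}\bigr)$ (sharper because the second argument in the difference $\bar\psi(-u-i)-\bar\psi(-u)-\bar\psi(-i)$ is $-i$, whose real part is zero); the factor $\sqrt{|\Re\bar\psi(u)|}$ can then be absorbed into the exponential decay of $|\bar\phi_{T-t}(u-iR)|$ via Corollary \ref{repsi.bound}, reducing \eqref{martcond.eq} to \eqref{condopt4} evaluated at a slightly later time $t''\in(t,T)$, which is available since \eqref{condopt4} is assumed for all $t<T$. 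This step is fixable, but as written your plan would not close it.
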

\begin{proof}
The first part of this result (expression \eqref{martquad.eq} for the optimal strategy) is proved in \cite[Proposition 7]{tankov.09}; under slightly different conditions this result also follows from the general theorem in \cite{kallsen.hubalek.al.06}.
 
To obtain the martingale representation \eqref{muquad.eq}--\eqref{gammaquad.eq}, we apply, once again, Lemma \ref{martrep.prop}, with $R' = R-1$ and $f(u) = \frac{1}{2\pi} \hat g(u+iR)\Upsilon(u+iR)$. The validity of condition \eqref{martcond.eq} follows from Lemma \ref{lem:UVpsi}, assumption \eqref{condopt4} and assumption \eqref{squareint.eq}.
\end{proof}

\section{Errors from discrete hedging: general result}
\label{general.sec}
Since continuously rebalancing one's portfolio is unfeasible in practice, we assume that the hedging portfolio is rebalanced at equally spaced dates $T_i = iT/n$, $i=0,\dots,n-1$, and denote by $h$ the distance between the rebalancing dates: $h:=T/n$. For $t\in(0,T]$ we denote by $\underline{\eta}(t)$ the rebalancing date immediately before $t$ and by $\overline{\eta}(t)$ the rebalancing date immediately after $t$:
$$
\underline{\eta}(t) = \sup \{T_i,T_i< t\},\qquad \overline{\eta}(t) = \inf \{T_i,T_i\geq t\}.
$$
The trading strategy is therefore piecewise constant and is assumed to be given by
$F_{\underline{\eta}(t)}$, where $(F_t)$ is the `ideal' continuous-time hedging strategy that the agent would use if continuous rebalancing were possible. The value of the hedging portfolio at time $t$ is $V_0+\int_0^t F_{s-} dS_s$ with continuous hedging and $V_0+\int_0^t F_{\eta(s)} dS_s$ with discrete hedging.
$F^h_t$ denotes the left-continuous difference between the continuously rebalanced strategy and the discretely rebalanced one: $F^h_t:=F_{t-} - F_{\underline{\eta}(t)}$. We study the $L^2$ convergence to $0$, when $h \to 0$, of the difference between discrete and continuous hedging portfolio
$$
\int_0^T (F_{t-} -F_{\underline{\eta}(t)})dS_t\equiv \int_0^T F^h_{t} dS_t. 
$$

Choose a function $r(h):(0,\infty)\to(0,\infty)$ with $\lim_{h\downarrow 0} r(h) = 0$ (the rate of convergence to zero of the hedging error). We shall see that under suitable assumptions $\ee [(\int_0^T F^h_{t} dS_t)^2/r(h)]$ converges to a finite nonzero limit when $h \downarrow 0$.

\begin{theorem}\label{levyerror}
Assume that the hedging strategy $F$ is of the form \eqref{levyito} and \comment{Added `.' and `,' in the equations below.}
\begin{align}
\lim_{h\downarrow 0} \frac{h}{r(h)} E\left[\int_0^T S_t^2(\overline{\eta}(t)-t)\left(\mu_t^2 + \int_{\mathbb R} \gamma_t^2(z) \nu(dz)\right)dt\right]=0 \, . \label{ass1}
\end{align}
Then
\begin{align}
\lim_{h\downarrow 0}\frac{1}{r(h)}E\left[\left(\int_0^T F^h_{t}dS_t\right)^2\right] = \lim_{h\downarrow 0} \frac{A}{r(h)} E\left[\int_0^T S_t^2(\overline{\eta}(t)-t)\left(\sigma_t^2 + \int_{\mathbb R} \gamma_t^2(z) e^{2z}\nu(dz)\right)dt\right] \, , \label{limit} 
\end{align}
whenever the limit on the right-hand side exists.
\end{theorem}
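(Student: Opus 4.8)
The plan is to split the price process into its $P$-martingale part and its drift, which reduces the computation to an Itô isometry plus a careful moment expansion of the rebalancing increment. Writing $S_t = 1 + \int_0^t bS_u\,du + M_t$ with $M_t = \int_0^t aS_u\,dW_u + \int_0^t S_{u-}\int_{\mathbb R}(e^z-1)\tilde J(du,dz)$, the process $M$ is a square-integrable $P$-martingale with predictable quadratic variation $\langle M\rangle_t = A\int_0^t S_u^2\,du$. Correspondingly I would decompose $\int_0^T F^h_t\,dS_t = \int_0^T F^h_t\,dM_t + b\int_0^T F^h_t S_t\,dt =: \varepsilon^M_T + \varepsilon^b_T$. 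Since $F^h_t = F_{t-}-F_{\underline{\eta}(t)}$ is left-continuous, hence predictable, the Itô isometry yields the exact identity $E[(\varepsilon^M_T)^2] = A\,E[\int_0^T (F^h_t)^2 S_t^2\,dt]$, which already isolates the factor $A$ and the weight $S_t^2$. The theorem then rests on (a) showing the drift and cross contributions are $o(r(h))$, and (b) evaluating $E[(F^h_t)^2 S_t^2]$ to leading order in $h$.

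For step (b), the heart of the argument, I would insert the representation \eqref{levyito} to write $F^h_t$ as the sum of its drift increment $\int_{\underline{\eta}(t)}^t \mu_u\,du$, its Brownian increment $\int_{\underline{\eta}(t)}^t \sigma_u\,dW_u$, and its compensated-jump increment $\int_{\underline{\eta}(t)}^t\int_{\mathbb R}\gamma_{u-}(z)\tilde J(du,dz)$ over the current mesh interval. The drift increment is $O(t-\underline{\eta}(t))$ and contributes only at order $h^2$; the Brownian--jump cross term vanishes at leading order by orthogonality of $W$ and $\tilde J$. The two surviving diagonal contributions are where the subtlety lies: the factor $S_t^2 = S_{\underline{\eta}(t)}^2\,e^{2(X_t - X_{\underline{\eta}(t)})}$ is correlated with the jump increment of $F$, since both are driven by the jumps of $X$ on $(\underline{\eta}(t),t)$. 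Conditioning on $\mathcal F_{\underline{\eta}(t)}$ and absorbing $e^{2(X_t - X_{\underline{\eta}(t)})}$ by an Esscher change of measure tilts the jump intensity from $\nu(dz)$ to $e^{2z}\nu(dz)$, so that the jump increment contributes $(t-\underline{\eta}(t))\int_{\mathbb R}\gamma_t^2(z)e^{2z}\nu(dz)$, while the Brownian increment (whose variance is unchanged by the Girsanov drift shift) contributes $(t-\underline{\eta}(t))\sigma_t^2$. This is precisely where the weight $e^{2z}$ in \eqref{limit} originates, and it is the feature separating the discontinuous setting from the diffusion case. Integrating in $t$, and noting that the weights $t-\underline{\eta}(t)$ and $\overline{\eta}(t)-t$ may be interchanged at the relevant order since their difference integrates to a lower-order term, one arrives at the right-hand side of \eqref{limit}.

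For step (a), I would control $\varepsilon^b_T$ and the cross term $E[\varepsilon^b_T\varepsilon^M_T]$ by Cauchy--Schwarz, using that on each mesh interval the increment $F_{t-}-F_{\underline{\eta}(t)}$ starts from zero, so $\varepsilon^b_T$ gains an extra power of $h$ over the naive estimate; contributions from distinct intervals are nearly orthogonal because the conditional mean of each interval's increment is itself $O(h)$. The same expansion produces the remainders of step (b): the drift increment of $F$, the jump compensator, and the $O((t-\underline{\eta}(t))^2)$ correction to the tilted second moment. After one application of Cauchy--Schwarz and the square-integrability of $S$, all of these are dominated by a constant times $\frac{h}{r(h)}E[\int_0^T S_t^2(\overline{\eta}(t)-t)(\mu_t^2 + \int_{\mathbb R}\gamma_t^2(z)\nu(dz))\,dt]$, which tends to zero exactly by assumption \eqref{ass1}. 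This is the purpose of the $\int_{\mathbb R}\gamma_t^2(z)\nu(dz)$ term, carrying the extra factor $h$, in \eqref{ass1}: it kills the drift-type (uncompensated) jump contributions, just as the $\mu_t^2$ term kills the genuine drift.

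The main obstacle I anticipate is making the leading-order evaluation of $E[(F^h_t)^2 S_t^2]$ uniform enough to survive the passage to the limit for a general rate $r(h)$. For the irregular payoffs of interest the coefficients $\mu_t,\sigma_t,\gamma_t$ blow up as $t\uparrow T$, and $\nu$ may have infinite activity, so the Esscher-tilt identity and every remainder estimate must be carried out with bounds that remain integrable against $dt$ up to $T$ and compatible with $r(h)$. Assumption \eqref{ass1} is precisely the hypothesis that renders the non-leading, drift-type contributions negligible after normalization, while the genuine second-order martingale structure delivers the stated limit.
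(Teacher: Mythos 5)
Your proposal follows essentially the same route as the paper's proof: decompose $S$ into its martingale and drift parts, apply the It\^o isometry to get $A\,E[\int_0^T (F^h_t)^2 S_t^2\,dt]$, absorb the factor $S_t^2$ by the Esscher change of measure $dP^2/dP|_{\mathcal F_t}=e^{2X_t}/e^{t\psi(-2i)}$ (which tilts the jump compensator to $e^{2z}\nu(dz)$ and is exactly where the paper gets the $e^{2z}$ weight), and control the drift and cross contributions by Cauchy--Schwarz together with the near-orthogonality of increments over distinct mesh intervals. One fine point to correct: under the tilted measure the drift of $F$ picks up the term $\int_{\mathbb R}\gamma_t(z)(e^{2z}-1)\nu(dz)$, whose square is only bounded by $C\int_{\mathbb R}\gamma_t^2(z)(1+e^{2z})\nu(dz)$, so the $e^{2z}$-weighted part of these remainders is \emph{not} dominated by the quantity in \eqref{ass1}; as in the paper, you must also invoke the assumed existence of the finite limit on the right-hand side of \eqref{limit} (which, multiplied by the extra factor $h$, tends to zero) to dispose of them.
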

\begin{corollary}\label{cor:levyerror}
Assume that \eqref{ass1} is satisfied and
\begin{align}
 E\left[\int_0^T S_t^2\left(\sigma_t^2 + \int_{\mathbb R} \gamma_t^2(z) e^{2z}\nu(dz)\right)dt\right]<\infty.  \label{cor:ass1}
\end{align} 
Then
\begin{align*}
\lim_{h\downarrow 0}\frac{1}{h}E\left[\left(\int_0^T F^h_{t}dS_t\right)^2\right] =  \frac{A}{2} E\left[\int_0^T S_t^2\left(\sigma_t^2 + \int_{\mathbb R} \gamma_t^2(z) e^{2z}\nu(dz)\right)dt\right] \, .
\end{align*} 
\end{corollary}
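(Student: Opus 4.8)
The plan is to read the corollary off Theorem \ref{levyerror} with the rate $r(h)=h$ and then to evaluate the resulting right-hand side of \eqref{limit} explicitly. With $r(h)=h$ the factor $h/r(h)$ appearing in \eqref{ass1} equals $1$, so the hypothesis \eqref{ass1} of the theorem coincides exactly with the standing assumption of the corollary that \eqref{ass1} holds. Hence Theorem \ref{levyerror} applies and the problem reduces to showing that the limit
\[
\lim_{h\downarrow 0}\frac{1}{h}E\left[\int_0^T S_t^2(\overline{\eta}(t)-t)\Big(\sigma_t^2+\int_{\mathbb R}\gamma_t^2(z)e^{2z}\nu(dz)\Big)dt\right]
\]
exists and equals $\tfrac12\,E[\int_0^T S_t^2(\sigma_t^2+\int_{\mathbb R}\gamma_t^2(z)e^{2z}\nu(dz))\,dt]$; the extra factor $\tfrac12$ is precisely the source of the constant in the corollary.

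The key deterministic ingredient is the averaging behaviour of $\overline{\eta}(t)-t$. Setting $\rho_h(t):=(\overline{\eta}(t)-t)/h$, one has $0\le\rho_h\le 1$, and on each interval $(T_i,T_{i+1}]$ the function $\rho_h$ equals $(T_{i+1}-t)/h$, a sawtooth of period $h$ whose mean over one period is $\int_0^1(1-u)\,du=\tfrac12$. I would first establish the elementary lemma that, for every $f\in L^1([0,T])$,
\[
\int_0^T \rho_h(t)f(t)\,dt\ \xrightarrow[h\downarrow 0]{}\ \tfrac12\int_0^T f(t)\,dt,
\]
i.e.\ $\rho_h\rightharpoonup \tfrac12$ in the weak-$*$ topology of $L^\infty([0,T])$. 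This is immediate for continuous $f$ by a Riemann-sum computation on each subinterval, and it extends to all $f\in L^1$ by density together with the uniform bound $\rho_h\le 1$.

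I would then apply this lemma pathwise. Assumption \eqref{cor:ass1} guarantees finiteness of the expectation, hence for almost every $\omega$ the map $t\mapsto S_t^2\big(\sigma_t^2+\int_{\mathbb R}\gamma_t^2(z)e^{2z}\nu(dz)\big)$ belongs to $L^1([0,T])$, so the lemma gives, for a.e.\ $\omega$,
\[
\frac1h\int_0^T S_t^2(\overline{\eta}(t)-t)\Big(\sigma_t^2+\int_{\mathbb R}\gamma_t^2(z)e^{2z}\nu(dz)\Big)dt\ \xrightarrow[h\downarrow 0]{}\ \tfrac12\int_0^T S_t^2\Big(\sigma_t^2+\int_{\mathbb R}\gamma_t^2(z)e^{2z}\nu(dz)\Big)dt.
\]
Since $\rho_h\le 1$, the left-hand side is dominated, uniformly in $h$, by the integrable random variable $\int_0^T S_t^2(\sigma_t^2+\int_{\mathbb R}\gamma_t^2(z)e^{2z}\nu(dz))\,dt$; dominated convergence then permits interchanging $\lim_{h\downarrow 0}$ with $E[\cdot]$ and yields the asserted value of the right-hand limit. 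Combined with Theorem \ref{levyerror}, this proves the corollary.

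The only substantive points beyond invoking the theorem are the averaging lemma and the passage to the limit under the expectation. The lemma itself is routine; the genuinely necessary ingredient is the uniform-in-$h$ bound $\rho_h\le 1$, which, together with the integrability hypothesis \eqref{cor:ass1}, supplies the dominating function. Justifying this exchange of $\lim_{h\downarrow 0}$ and $E[\cdot]$ is the step I would treat most carefully.
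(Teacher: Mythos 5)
Your proof is correct, and it enters through Theorem \ref{levyerror} with $r(h)=h$ exactly as the paper does; the difference lies in how the limit on the right-hand side of \eqref{limit} is evaluated. The paper first moves the expectation inside the time integral (Tonelli, the integrand being nonnegative), reducing everything to the deterministic statement $\int_0^T g_h(t)f(t)\,dt \to \tfrac12\int_0^T f(t)\,dt$ for the single $L^1$ function $f(t)=E\bigl[S_t^2\bigl(\sigma_t^2+\int_{\mathbb R}\gamma_t^2(z)e^{2z}\nu(dz)\bigr)\bigr]$, where $g_h(t)=(\overline{\eta}(t)-t)/h$ is your $\rho_h$; it then runs a Riemann--Lebesgue-type argument, approximating $f$ from below by piecewise constant functions and using $|g_h|\le 1$ to make the approximation error uniform in $h$. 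You instead prove the sawtooth-averaging lemma for all of $L^1([0,T])$ (via continuous functions and density, again exploiting $0\le\rho_h\le1$), apply it pathwise on the almost-sure event where $t\mapsto S_t^2\bigl(\sigma_t^2+\int_{\mathbb R}\gamma_t^2(z)e^{2z}\nu(dz)\bigr)$ is integrable --- which is exactly what \eqref{cor:ass1} guarantees --- and then exchange $\lim_{h\downarrow0}$ with $E[\cdot]$ by dominated convergence, the dominating variable $\int_0^T S_t^2(\cdots)\,dt$ being integrable again by \eqref{cor:ass1}. The two routes are logically equivalent and of comparable length: yours trades the paper's single Tonelli swap for an almost-sure argument plus dominated convergence in $\omega$; in exchange you avoid the paper's monotone piecewise-constant approximation (which, as stated there, is slightly delicate, since pointwise monotone approximation by genuinely piecewise-constant functions is not available for arbitrary measurable $f$, whereas your density-in-$L^1$ argument is watertight). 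Both proofs correctly isolate the two essential ingredients: the uniform bound on the sawtooth and the integrability hypothesis \eqref{cor:ass1}.
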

If the condition \eqref{cor:ass1} is not satisfied, then clearly the limit in \eqref{limit} can only exist with a convergence rate worse than $r(h)=h$. 
Therefore the best possible convergence rate which can be obtained with Theorem \ref{levyerror}, and which is realized for regular strategies, is $r(h)=h$. However, worse rates may arise in the presence of irregular pay-offs. In the following, we will refer to the situation when \eqref{cor:ass1} is satisfied and $r(h)=h$ as \emph{regular regime} and to the other situations as \emph{irregular regime}. 

\begin{proof}[Proof of Corollary \ref{cor:levyerror}]
The proof is very similar to that of the Riemann-Lebesgue lemma. Let 
$$
g_h(t):=\frac{\bar\eta(t)-t}{h},\quad f(t):= E\left[ S_t^2\left(\sigma_t^2 + \int_{\mathbb R} \gamma_t^2(z) e^{2z}\nu(dz)\right)\right].
$$
For any piecewise constant function $u:[0,T]\to \mathbb R$, we clearly have
$$
\lim_{h\downarrow 0} \int_0^T g_h(t) u(t)dt = \frac{1}{2}\int_0^T u(t)dt. 
$$
Let $(f_n)_{n\geq 1}$ be a sequence of piecewise constant functions satisfying $f_n(t)\leq f_{n+1}(t) \leq f(t)$ and $\lim_{n\to \infty} f_n(t) = f(t)$ for all $t\in [0,T]$. Then, by monotone convergence, since $|g_h|\leq 1$,
$$
\lim_{n\to \infty} \int_0^T (f(t)-f_n(t))g_h(t)dt = 0,
$$
uniformly on $h$,
which proves that
$$
\lim_{h \downarrow 0} \int_0^T f(t)g_h(t)dt  =\frac{1}{2}\int_0^T f(t)dt.
$$
\end{proof}

\begin{proof}[Proof of Theorem \ref{levyerror}]
We define auxiliary probability measures $P^1$ and $P^2$ by
$$
\frac{dP^k}{dP}\Big|_{\mathcal F_t} := \frac{e^{kX_t}}{e^{t\psi(-ik)}},\quad k=1,2.
$$
Under $P^k$, the process $(W^{(k)}_t)$ defined by $W^{(k)}_t = W_t - akt$ is a standard Brownian motion and $$\tilde J^{(k)}(dt\times dz) = \tilde J(dt\times dz) - dt \times (e^{kz}-1)\nu(dz)$$ is a compensated Poisson random measure. Therefore, the drift of $F$ under $P^k$ is given, by
\begin{align}
\mu^{(k)}_t &= \mu_t + ak\sigma_t + \int_{\mathbb R}\gamma_t(z)(e^{kz}-1)\nu(dz).\label{mu1}
\end{align} 

The hedging error satisfies
\begin{align}
&\frac{1}{r(h)}E\left[\left(\int_0^T F^h_{t}dS_t\right)^2\right] = \frac{1}{r(h)}E\left[\left(\int_0^T F^h_{t}dS^m_t\right)^2\right] \notag\\ &\qquad+\frac{b^2}{r(h)}E\left[\left(\int_0^T F^h_{t}S_t dt\right)^2\right]+ \frac{2b}{r(h)}E\left[\int_0^T F^h_{t}S_t dt\times \int_0^T F^h_{t}dS^m_t\right],\label{twoterms} 
\end{align}
where $S^m$ denotes the martingale part of $S$. The first term in the right-hand side satisfies
$$
E\left(\int_0^T F^h_{t}dS^m_t\right)^2 = A E\left[\int_0^T (F^h_{t})^2 S_t^2 dt\right] = A\int_0^T e^{t\psi(-2i)} E^{P^2}[(F^h_{t})^2] dt
$$
if the expectations are finite. The expectation under the integral sign can be decomposed as follows:
\begin{align}
&E^{P^2}[(F^h_t)^2] = E^{P^2}\left[\left(\int_{\underline{\eta}(t)}^t \mu_s^{(2)}ds\right)^2\right] \notag\\&\qquad + E^{P^2}\left[\left(\int_{\underline{\eta}(t)}^t \sigma_s dW^{(2)}_s + \int_{\underline{\eta}(t)}^t\int_{\mathbb R} \gamma_{s-}(z)\tilde J^{(2)}(ds\times dz)\right)^2\right]\notag \\ &\qquad + E^{P^2}\left[\left(\int_{\underline{\eta}(t)}^t \sigma_s dW^{(2)}_s + \int_{\underline{\eta}(t)}^t\int_{\mathbb R} \gamma_{s-}(z)\tilde J^{(2)}(ds\times dz)\right)\int_{\underline{\eta}(t)}^t \mu_s^{(2)}ds\right].\label{underint}
\end{align}
The second term in the right-hand side above satisfies
\begin{align*}
E^{P^2}\left[\left(\int_{\underline{\eta}(t)}^t \sigma_s dW^{(2)}_s + \int_{\underline{\eta}(t)}^t\int_{\mathbb R} \gamma_{s-}(z)\tilde J^{(2)}(ds\times dz)\right)^2\right] = E^{P^2}\left[\int_{\underline{\eta}(t)}^t\left(\sigma_s^2 + \int_{\mathbb R}\gamma^2_s(z)e^{2z}\nu(dz)\right) \right]
\end{align*}
and its integral gives, using integration by parts and switching back to the probability $P$,
\begin{align*}
&\frac{A}{r(h)}\int_0^T e^{t\psi(-2i)} E^{P^2}\left[\int_{\underline{\eta}(t)}^t\left(\sigma_s^2 + \int_{\mathbb R}\gamma^2_s(z)e^{2z}\nu(dz)\right)\right]dt\\
&\qquad = \frac{A}{r(h)} \int_0^T dt E^{P^2}\left[\sigma_t^2 + \int_{\mathbb R}\gamma^2_t(z)e^{2z}\nu(dz)\right]\int_t^{\overline{\eta}(t)}e^{s\psi(-2i)}ds\\
&= \frac{A(1+O(h))}{r(h)} E\left[\int_0^T S_t^2(\overline{\eta}(t)-t)\left(\sigma_t^2 + \int_{\mathbb R} \gamma_t^2(z) e^{2z}\nu(dz)\right)\right],
\end{align*}
which converges to the same limit as \eqref{limit}. In view of this result and of the fact that the cross terms in \eqref{twoterms} and \eqref{underint} can be estimated using the Cauchy-Schwartz inequality, to prove the theorem it remains to show that, under the assumptions and when the limit in the right-hand side of \eqref{limit} exists, 
\begin{align}
&\lim_{h\downarrow 0} \frac{1}{r(h)} \int_0^T e^{t\psi(-2i)}E^{P^2}\left[\left(\int_{\underline{\eta}(t)}^t \mu_s^{(2)}ds\right)^2\right]dt = 0\label{driftstrat}\\ 
\text{and}\quad  &\lim_{h\downarrow 0} \frac{1}{r(h)}E\left[\left(\int_0^T F^h_{t}S_t dt\right)^2\right]=0.\label{driftasset}
\end{align}
\paragraph{Proof of \eqref{driftstrat}}
The expression under the $\lim$ sign satisfies
\begin{align*}
&\frac{1}{r(h)} \int_0^T e^{t\psi(-2i)}E^{P^2}\left(\int_{\underline{\eta}(t)}^t \mu_s^{(2)}ds\right)^2dt\leq \frac{h}{r(h)} \int_0^T e^{t\psi(-2i)}E^{P^2}\left[\int_{\underline{\eta}(t)}^t (\mu_s^{(2)})^2ds\right]dt\\
&\qquad = \frac{h(1+O(h))}{r(h)}\int_0^T (\overline{\eta}(t)-t)E\left[S_t^2 (\mu_t^{(2)})^2\right]dt \\
&\qquad \leq \frac{Ch}{r(h)}\int_0^T (\overline{\eta}(t)-t)E\left[S_t^2 \mu_t^2 + S_t^2 \sigma_t^2 + S_t^2 \left(\int_{\mathbb R}\gamma_t(z)(e^{2z}-1)\nu(dz)\right)^2\right]dt,
\end{align*}
for some constant $C<\infty$, where the last estimate follows from \eqref{mu1}. By the Jensen inequality (for $|z|>1$) and the Cauchy-Schwartz inequality (for $|z|\leq 1$), 
\begin{align*}
&\left(\int_{\mathbb R}\gamma_t(z)(e^{2z}-1)\nu(dz)\right)^2 \leq  2\int_{|z|\leq 1}(e^{2z}-1)^2 \nu(dz) \int_{|z|\leq 1}\gamma_t^2(z)\nu(dz)\\ &\qquad+ 2 \int_{|z|>1}|e^{2z}-1|\nu(dz)\int_{|z|>1}\gamma_t^2(z)|e^{2z}-1|\nu(dz)
\leq C \int_{\mathbb R}\gamma_t^2(z)(1+e^{2z})\nu(dz).
\end{align*}

The limit \eqref{driftstrat} now follows from the assumptions of the theorem and the existence of the limit \eqref{limit}. 
\paragraph{Proof of \eqref{driftasset}} 
The error term in \eqref{driftasset} can be rewritten as
\begin{align*}
&\frac{1}{r(h)}E\left[\left(\int_0^T F^h_{t}S_t dt\right)^2\right] = \frac{2}{r(h)}\sum_{i=1}^n \int_{T_{i-1}}^{T_i}dt \int_{t}^{T_i}ds E[(F_t-F_{T_{i-1}})(F_s-F_{T_{i-1}})S_tS_s] \\
&\qquad + \frac{2}{r(h)}\sum_{1\leq i<j \leq n} \int_{T_{i-1}}^{T_i}dt \int_{T_{j-1}}^{T_j}ds E[(F_t-F_{T_{i-1}})(F_s-F_{T_{j-1}})S_tS_s]\\
&= \frac{2}{r(h)}\sum_{i=1}^n \int_{T_{i-1}}^{T_i}dt \int_{t}^{T_i}ds e^{(s-t)\psi(-i)}E[(F_t-F_{T_{i-1}})^2 S_t^2]\\
&\qquad + \frac{2}{r(h)}\sum_{i=1}^n \int_{T_{i-1}}^{T_i}dt \int_{t}^{T_i}ds e^{s\psi(-i)}E^{P^1}[(F_t-F_{T_{i-1}})(F_s-F_t)S_t]\\
&\qquad + \frac{2}{r(h)}\sum_{1\leq i<j \leq n} \int_{T_{i-1}}^{T_i}dt \int_{T_{j-1}}^{T_j}ds e^{s\psi(-i)} E^{P^1}[(F_t-F_{T_{i-1}})(F_s-F_{T_{j-1}})S_t]\\
&= \frac{O(h)}{r(h)}\int_{0}^{T}dt E[(F_t-F_{\underline{\eta}(t)})^2 S_t^2]+\frac{2}{r(h)}\int_0^T dt \int_t^T ds e^{s\psi(-i)}E^{P^1}[(F_t-F_{\underline{\eta}(t)})(F_s-F_{\underline{\eta}(s)\vee t})S_t].
\end{align*}
The first term in the last line converges to zero by the first part of the proof. To compute the second term, we introduce the conditional expectation with respect to $\mathcal F_{\underline{\eta}(s)\vee t}$ inside the expectation $E^{P1}$.
The fact that the local martingale part of $F$ has zero expectation can be justified using \eqref{limit}. Finally, we get
\begin{align*}
&\left|\frac{2}{r(h)}\int_0^T dt \int_t^T ds e^{s\psi(-i)}E^{P^1}[(F_t-F_{\underline{\eta}(t)})(F_s-F_{\underline{\eta}(s)\vee t})S_t]\right|\\
&= \left|\frac{2}{r(h)}\int_0^T dt \int_t^T ds e^{s\psi(-i)}E^{P^1}\left[(F_t-F_{\underline{\eta}(t)})S_t\int_{\underline{\eta}(s)\vee t}^s \mu^{(1)}_u du\right]\right|\\
&= \left|\frac{2}{r(h)}\int_0^T dt \int_t^T ds E[\mu^{(1)}_s (F_t - F_{\eta(t)})S_t S_s]\int_s^{\overline{\eta}(s)}e^{(u-s)\psi(-i)}du\right| \\
&\leq \frac{2}{r(h)} \int_0^T dt E[(F_t - F_{\eta(t)})^2S^2_t]^{\frac{1}{2}}  \int_0^T ds E[(\mu^{(1)}_s)^2 S_s^2]^{\frac{1}{2}} \int_s^{\overline{\eta}(s)}e^{(u-s)\psi(-i)}du\\
&\leq C\left(\frac{1}{r(h)} \int_0^T dt E[(F_t - F_{\eta(t)})^2S^2_t]\right)^{\frac{1}{2}} \left(\frac{1}{r(h)} \int_0^T ds E[(\mu^{(1)}_s)^2 S_s^2] \left\{\int_s^{\overline{\eta}(s)}e^{(u-s)\psi(-i)}du\right\}^2 \right)^{\frac{1}{2}}, 
\end{align*}
where the last implication follows from the Jensen inequality. The first factor in the in the right-hand side above was shown to be bounded in the beginning of the proof. As for the second factor,
\begin{align*}
\frac{1}{r(h)} \int_0^T ds E[(\mu^{(1)}_s)^2 S_s^2] \left\{\int_s^{\overline{\eta}(s)}e^{(u-s)\psi(-i)}du\right\}^2 = \frac{O(h)}{r(h)} \int_0^T ds (\overline{\eta}(s)-s) E[(\mu^{(1)}_s)^2 S_s^2], 
\end{align*}
which can be shown to converge to zero in the same way as we did in the proof of \eqref{driftstrat}.
\end{proof}

\section{Convergence rates of specific strategies}
\label{specific.sec}
We start by introducing a set of assumptions on the Lévy measure $\nu$ of $X$, which will be used in different theorems later in this section. 
In theorems dealing with the delta-hedging strategy we require:
\begin{align}
\int_{|x|>1}e^{Rx}\bar\nu(dx)<\infty,\quad \int_{|x|>1}e^{2(Rx\vee x)}\nu(dx)<\infty,\quad \text{and}\quad \int_{|x|>1}e^{2(R-1)x}\nu(dx)<\infty.\label{deltaint.eq}
\end{align}
The first condition guarantees the integrability of the option payoff under $Q$ (recall that pay-off function satisfies $|G(S)|\leq C S^R$), the second ensures the square integrability of the option price and the stock price under $P$, and the last condition allows to construct a martingale-drift representation for the strategy.  

For analyzing the quadratic hedging under the martingale probability we require the stock price and the option pay-off to be square integrable under $Q$ as well:
\begin{align}
&\int_{|x|>1}e^{2(Rx\vee x)}\bar\nu(dx)<\infty,\quad \int_{|x|>1}e^{2(Rx\vee x)}\nu(dx)<\infty,\quad \text{and}\quad \int_{|x|>1}e^{2(R-1)x}\nu(dx)<\infty.\label{quadint.eq}
\end{align}

The following alternative assumptions determine the decay properties of characteristic function of $X$ at infinity. 
\begin{itemize}
\item[(H1)] The Lévy measure $\nu$ is of the form $\nu = \nu_0 + \nu_1$ where $\nu_0$ is a finite measure on $\mathbb R$ and $\nu_1$ has a positive density of the form
$$
\nu_1(x) = \frac{k(x)}{|x|},
$$ 
where the function $k$ is right-continuous and increasing on $(-\infty,0)$ and left-continuous and decreasing on $(0,\infty)$.
\item[(H2-$\alpha$)] The Lévy measure $\nu$ satisfies
$$
\limsup_{r\to 0} r^{\alpha-2}\int_{[-r,r]}x^2 \nu(dx)>0. 
$$
\label{H1}
\item[(H3-$\alpha$)] The Lévy measure $\nu$ satisfies
$$
\int_{[-1,1]} |x|^{\alpha}\nu(dx)<\infty. 
$$
\item[(H4-$\alpha$)] The Lévy measure $\nu$ has a density satisfying
\begin{align*}
\nu(x) = \frac{f(x)}{|x|^{1+\alpha}}, \quad \lim_{x\to 0+}f(x) = f_+,\quad \lim_{x\to 0-}f(x) = f_-
\end{align*}
for some constants $f_->0$ and $f_+>0$.

\end{itemize}

The assumption H1 guarantees at least power-law decay of the characteristic function at infinity (see Lemma \ref{cond.lm} in the Appendix). It is satisfied by most parametric infinite intensity processes used in financial modeling: for the variance gamma \cite{madan98} and CGMY \cite{finestructure} processes this is immediately clear by looking at the Lévy measure while for the normal inverse Gaussian process \cite{bns_nig} and the generalized hyperbolic distribution \cite{eberlein} it follows from the self-decomposability of these distributions shown in \cite{halgreen} and the characterization of self-decomposable distributions in \cite[Chapter 3]{sato}.

The assumptions H2-$\alpha$, H3-$\alpha$ and H4-$\alpha$ with $0<\alpha<2$ characterize different aspects of stable-like behavior of small jumps of the Lévy process. They are satisfied by the CGMY process (with $\alpha=Y$), the normal inverse Gaussian process (with $\alpha=1$) and the generalized hyperbolic distribution (with $\alpha=1$ in general; see \cite[pages 125--126]{levybook}). They are not satisfied by the variance gamma process. It is clear that the assumption H4-$\alpha$ implies H2-$\alpha$ and H3-$\alpha$.  

We start our analysis with regular pay-offs, in which case the convergence takes place in the regular regime with the rate $r(h)=h$. In the following theorem and its proof, we use the notation of Propositions \ref{delta.prop} and \ref{martquad.prop}.

\begin{theorem}[Regular pay-offs] \label{thm:euro_delta} Let the pay-off function and the Lévy process satisfy the conditions \eqref{europt1}--\eqref{europt3} for some $R\in \mathbb R$ and assume that one of the three alternative conditions holds: 
\begin{itemize}
\item $\nu$ satisfies the assumption H1 and $a=0$; 
\item $\nu$ satisfies the assumption H2-$\alpha$ with $\alpha\in (0,2)$ and $a=0$; 
\item $a>0$. 
\end{itemize}
Let the hedging strategy be given by Proposition \ref{delta.prop} and assume that \eqref{deltaint.eq} holds or let the hedging strategy be given by Proposition \ref{martquad.prop} and assume that \eqref{quadint.eq} holds.
Then
\begin{equation}
\lim_{h \downarrow 0} \frac{1}{h} \ee \left[ \left( \int_0^T F_t^h  \dd S_t \right)^2 \right]  = \frac{A}{2} \ee \Biggl[ \int_0^T S_t^2 \left(\sigma_t^2 +  \int_{\rr}  \gamma_t(z) e^{2z} \nu(\dd z)\right) \dd t \Biggr] \, .\label{result.eq}
\end{equation} 
\end{theorem}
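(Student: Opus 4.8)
The statement is exactly the conclusion of Corollary \ref{cor:levyerror}, so the plan is to verify that corollary's two hypotheses, namely \eqref{ass1} with the choice $r(h)=h$ (so that $h/r(h)\equiv 1$) and the finiteness condition \eqref{cor:ass1}. The strategy $F$ is of the form \eqref{levyito} by Proposition \ref{delta.prop} or Proposition \ref{martquad.prop}, whose integral representations \eqref{deltarep1}--\eqref{deltarep3} and \eqref{muquad.eq}--\eqref{gammaquad.eq} are valid under \eqref{deltaint.eq} resp. \eqref{quadint.eq}, so both results apply once their integrability hypotheses are checked. The first hypothesis is the easy one: since $0\le\overline{\eta}(t)-t\le h$, the expectation in \eqref{ass1} is bounded by $h\,E[\int_0^T S_t^2(\mu_t^2+\int_{\mathbb R}\gamma_t^2(z)\nu(dz))\,dt]$, which tends to $0$ as $h\downarrow0$ as soon as that expectation is finite. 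Hence the whole proof reduces to establishing the finiteness of the four quantities $E[\int_0^T S_t^2\mu_t^2\,dt]$, $E[\int_0^T S_t^2\sigma_t^2\,dt]$, $E[\int_0^T S_t^2\int\gamma_t^2(z)\nu(dz)\,dt]$ and $E[\int_0^T S_t^2\int\gamma_t^2(z)e^{2z}\nu(dz)\,dt]$; once these hold, Corollary \ref{cor:levyerror} delivers \eqref{result.eq} with no further work. Note that in the two cases with $a=0$ the diffusion coefficient $\sigma_t$ of $F$ vanishes identically, so there the $\sigma$-terms drop out.

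To bound these quantities I would substitute the Fourier representations of $\sigma_t,\gamma_t,\mu_t$ and expand each square into a double integral over $(\mathbb R+iR)^2$. Each factor carries a power $S_t^{-1-iu}$, so the squared integrand carries $S_t^{-2-iu-iu'}$, and multiplying by the weight $S_t^2$ leaves $S_t^{-iu-iu'}=e^{-i(u+u')X_t}$, whose expectation is $\phi_t(-u-u')=e^{t\psi(-u-u')}$. Together with the factor $\bar\phi_{T-t}(-u)\bar\phi_{T-t}(-u')=e^{(T-t)(\bar\psi(-u)+\bar\psi(-u'))}$ coming from the two option-price factors, the $t$-integration can be carried out explicitly, producing the kernel
$$K(u,u')=\frac{e^{T\psi(-u-u')}-e^{T(\bar\psi(-u)+\bar\psi(-u'))}}{\psi(-u-u')-\bar\psi(-u)-\bar\psi(-u')}.$$
The problem thus becomes the absolute convergence, after the substitution $u=v+iR,\ u'=v'+iR$, of a double integral over $(v,v')\in\mathbb R^2$ of $K(u,u')$ times a polynomial-type prefactor and (for the $\gamma$-terms) a jump integral such as $\int(e^{(-1-iu)z}-1)(e^{(-1-iu')z}-1)e^{2z}\nu(dz)$. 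The regularity of the pay-off enters here through the bound \eqref{eurobound}, $|\hat g(v+iR)|\le C/(1+v^2)$: the quadratic factor $(-iu)(-1-iu)$ in $\sigma_t$ and the linear factor $(-iu)$ in $\gamma_t,\mu_t$ are absorbed by the two copies of $\hat g$, leaving an $O(1)$ (or decaying) prefactor. The jump integrals are finite, with their large-jump part controlled by the integrability assumptions \eqref{deltaint.eq}/\eqref{quadint.eq} and their small-jump part harmless because the integrand vanishes to second order at $z=0$.

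The main obstacle is the convergence of these two-dimensional integrals, which is precisely where the three alternative conditions are used, each of them supplying decay of the characteristic functions $\phi,\bar\phi$ at infinity. A clean estimate for the kernel is $|K(u,u')|\le T\,e^{T\max(\mathrm{Re}\,\psi(-u-u'),\,\mathrm{Re}\,\bar\psi(-u)+\mathrm{Re}\,\bar\psi(-u'))}$, obtained from the elementary inequality $|(e^{\lambda T}-1)/\lambda|\le T\,e^{T(\mathrm{Re}\,\lambda)^{+}}$, which also shows that the apparent singularity of $K$ on the zero set of the denominator is removable and causes no trouble. When $a>0$ both exponents behave like $-\tfrac{a^2}{2}(\cdot)^2$, so $K$ is Gaussian-small and convergence is routine; one only has to note that in the region where $v,v'$ are large but $v+v'$ stays bounded the real part of the denominator grows like $-a^2vv'\to+\infty$, so $K$ decays there as well. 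When $a=0$ the decay must instead be read off from the jumps, via the power-law decay of the characteristic functions furnished by Lemma \ref{cond.lm} under H1, and via the lower bound $-\mathrm{Re}\,\psi(u)\ge c|u|^\alpha$ (obtained from $\int(1-\cos uz)\,\nu(dz)\ge c\,u^2\int_{|z|\le1/|u|}z^2\nu(dz)$ together with H2-$\alpha$) under H2-$\alpha$. I expect the genuinely delicate point to be the diagonal regime $v+v'$ bounded with $v,v'\to\infty$, where only the $\bar\phi$-factors decay while the jump integral grows; here one must exploit the growth of the denominator of $K$ (rather than the crude bound above) to beat the growth of the prefactor and the jump integral, and it is for this balance that the small-jump behavior encoded in H1/H2-$\alpha$ is essential.
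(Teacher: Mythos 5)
Your proposal is correct and follows essentially the same route as the paper's proof: reduce to Corollary \ref{cor:levyerror} by establishing $\int_0^T I_i(t)\,\dd t<\infty$ for the four quantities built from $\mu_t,\sigma_t,\gamma_t$, expand each $I_i$ as a double Fourier integral using the representations of Propositions \ref{delta.prop}/\ref{martquad.prop} and the pay-off decay \eqref{eurobound}, and beat the growth of the jump-integral prefactors with the characteristic-function decay supplied by Lemma \ref{cond.lm} under H1/H2-$\alpha$/$a>0$ together with the gain from the time integration. Your kernel $K(u,u')$ is just a repackaging of the paper's explicit $t$-integration (Step 4, after the change of variables $v_1=u_1+u_2$, $v_2=u_1-u_2$ and Lemma \ref{lem:growth_imp}), and the denominator gain you correctly flag as the crux of the diagonal regime is exactly what the paper extracts as the factor $1/(1+|\Re\psi(v_2)|)$, made to dominate the prefactor growth via the comparison of $\psi$ and $\bar\psi$ in Lemma \ref{lem:PQUVpsi} and the integral estimate of Lemma \ref{doubleint}.
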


\begin{remark}
As will become clear from the proof, for example, for delta hedging, the limiting renormalized discretization error can be evaluated via a two-dimensional integral.
\begin{align*}
&\lim_{h \downarrow 0} \frac{1}{h} \ee \left[ \left( \int_0^T F_t^h  \dd S_t \right)^2 \right]\\ &\qquad = \frac{A}{8\pi^4} \int_{\mathbb R +iR} \int_{\mathbb R +iR}  du_1\,du_2  (\phi_T(-u_1-u_2)-\bar \phi_{T}(-u_1)\bar \phi_{T}(-u_2)) \hat g(u_1) \hat g(u_2)f(u_1,u_2) , \\ 
&\text{where}\, \, f(u_1,u_2) = -u_1 u_2 \frac{\psi(-u_1-u_2) - \psi(-u_1-i) - \psi(-u_2-i) + \psi(-2i)}{\psi(-u_1-u_2)-\bar \psi(-u_1)-\bar \psi(-u_2)}.
\end{align*}
In any case, our goal in this paper is not to compute the hedging error explicitly but rather to gain an understanding of its behavior as the rebalancing step tends to zero.    
\end{remark}

\begin{proof}
\textit{Step 1.} From Lemma \ref{cond.lm} or, under the condition $a>0$, directly from the form of the characteristic function, and from Lemma \ref{lem:PQUVpsi} it follows that
$$
\int_{\mathbb R} \frac{|\bar \phi_{T-t}(u-iR)|}{1+|u|}du <\infty,\quad \forall t<T,
$$
and therefore Proposition \ref{delta.prop} holds.

 With $\mu$, $\sigma$ and $\gamma$ as in \eqref{deltarep1}--\eqref{deltarep3} for the delta hedging strategy or as in \eqref{muquad.eq}--\eqref{gammaquad.eq} for the quadratic hedging strategy, define 
\begin{align}
& I_1(t) := \ee [S_t^2 \mu_t^2]  \, , && I_2(t) := \ee [S_t^2 \sigma_t^2]  \, , \label{i1i2}\\
& I_3(t) := \ee [S_t^2 \int_{\rr} \gamma_t^2(z) \nu(\dd z)]  \, , && I_4(t) := \ee [S_t^2 \int_{\rr} \gamma_t^2(z) e^{2 z} \nu(\dd z)]  \, .\label{i3i4}
\end{align}
Suppose that we can show that $\int_0^T I_i(t) \dd t < \infty$ for $i \in\{1,2,3,4\}$. Then assumption \eqref{cor:ass1} of Corollary \ref{cor:levyerror} is satisfied and assumption \eqref{ass1} of Theorem \ref{levyerror} is satisfied as well ($r(h)=h$). Therefore, by an application of Corollary \ref{cor:levyerror} the proof is complete.

\textit{Step 2.}
For the delta hedging strategy (proposition \ref{delta.prop}, ) from equations \eqref{deltarep1}--\eqref{deltarep3} and the bound \eqref{eurobound},
\begin{align*}
&I_i(t) \leq  C\int_{\rr+iR} \int_{\rr+iR} \frac{|f_i(u_1,u_2)\bar\phi_{T-t}(-u_1)\bar\phi_{T-t}(-u_2)\phi_t(-u_1-u_2)|}{(|u_1|+1)(|u_2|+1)}du_1 du_2
\end{align*}
for some $C>0$, where
\begin{align*}
f_1(u_1,u_2) &= (\psi(-u_1+i)-\bar\psi(-u_1))(\psi(-u_2+i)-\bar\psi(-u_2)), \\
 f_2(u_1,u_2) &= a^2 (-1-iu_1)(-1-iu_2) , \\
f_3(u_1,u_2) &= \int_{\rr} (e^{(-1-iu_1)z} - 1)(e^{(-1-iu_2)z} - 1)  \nu(\dd z) \\
& = \psi(-u_1-u_2+2i) - \psi(-u_1+i) - \psi(-u_2+i)) - f_2(u_1,u_2)
\intertext{and}
 f_4(u_1,u_2) &= \int_{\rr} e^{2z} (e^{(-1-iu_1)z} - 1)(e^{(-1-iu_2)z} - 1)  \nu(\dd z)\\  
&=  \psi(-u_1-u_2) - \psi(-u_1-i) - \psi(-u_2-i)+ \psi(-2i) - f_2(u_1,u_2)
\end{align*}
From Lemmas \ref{lem:PQUVpsi} and \ref{lem:UVpsi},
\begin{align}
|f_i(u_1+iR,u_2+iR)| \leq C(1+ \sqrt{|\Re \psi(u_1)|}) (1+ \sqrt{|\Re \psi(u_2)|})\label{fi.bound}
\end{align}
for some $C<\infty$ and $i \in \{1,2,3,4 \}$.  Corollary \ref{repsi.bound} and Lemma \ref{lem:PQUVpsi} then imply that $I_i(t) \leq J(t)$, where the function $J$ is defined by
\begin{align}
J(t) & = C\int_{\rr^2} \frac{(1+ \sqrt{|\Re \psi(u_1)}|)(1+ \sqrt{|\Re \psi(u_2)|})}{(1+|u_1|)(1+|u_2|)} e^{c(\Re \psi(u_1+u_2)t + \Re {\psi}(u_1) (T-t) + \Re {\psi}(u_2) (T-t))}  \dd u_1 \dd u_2\label{jbound}
\end{align}
for some constants $C>0$ and $c>0$ (which will later change from line to line). 

For the quadratic hedging strategy (proposition \ref{martquad.prop}), by the same arguments, we get that
\begin{align*}
I_i(t)  & \leq C\int_{\rr^2} \frac{(1+ \sqrt{|\Re \psi(u_1)}|)(1+ \sqrt{|\Re \psi(u_2)|})|\Upsilon(u_1+iR)\Upsilon(u_2+iR)|}{(1+|u_1|^2)|(1+|u_2|^2)} \\
&  \quad \times e^{c(\Re \psi(u_1+u_2)t + \Re {\psi}(u_1) (T-t) + \Re {\psi}(u_2) (T-t))}  \dd u_1 \dd u_2
\end{align*}
From Lemma \ref{lem:UVpsi} we now get that
$$
|\Upsilon(u+iR)|\leq C(1+\sqrt{|\psi(u)|}) \leq C(1+|u|),
$$
which implies $I_i(t)\leq J(t)$.

 It remains to show that $\int_0^T J(t) \dd t < \infty$, and the theorem will be proved.

\textit{Step 3.} Assume first that $\nu$ satisfies H1. The change of variables $u_1+u_2=v_1$ and $u_1-u_2=v_2$ together with \eqref{eqn:lem_H3} and Lemma \ref{lem:growth_imp} yields
\begin{align*}
J(t) \leq & C\int_{\rr^2}  \frac{(1+\sqrt{|\Re \psi ( (v_1+v_2)/2 )|})(1+ \sqrt{|\Re \psi ( (v_1-v_2)/2 )|})}{(1+|v_1+v_2|)(1+|v_1-v_2|)}   e^{c(\Re \psi( v_1)T+\Re {\psi}(v_2) (T-t))}  \dd v_1 \dd v_2 \, .
\end{align*}
Now by \eqref{eqn:lem_H4}
\begin{equation*}
J(t) \leq  C\int_{\rr^2} \frac{1 + |\Re \psi(v_1/2))|+|\Re \psi(v_2/2)|}{(1+|v_1+v_2|)(1+|v_1-v_2|)}  e^{c(\Re \psi(v_1)T+ \Re {\psi}(v_2) (T-t)}  \dd v_1 \dd v_2 \, .
\end{equation*}

\textit{Step 4.} In this last step we consider the integral of $J(t)$ over $[0,T]$. 
\begin{equation*}
\begin{split}
\int_0^T J(t) \dd t  &\leq  C  \int_{\rr^2} \frac{(1 + |\Re \psi(v_1/2))|+|\Re \psi(v_2/2)|)e^{c \Re \psi(v_1)T}}{(1+|v_1+v_2|)(1+|v_1-v_2|)}  \frac{1-e^{c \Re \psi(v_2) T}}{\Re \psi(v_2)}  \dd v_1 \dd v_2  \\
& \leq  C  \int_{\rr^2} \frac{(1+|\Re \psi(v_1/2))|+|\Re \psi(v_2/2)|)e^{c\Re \psi(v_1)T}}{(1+|v_1+v_2|)(1+|v_1-v_2|)} \frac{1}{1 + |\Re \psi(v_2)|} \dd v_1 \dd v_2 \\
& \leq C  \int_{\rr^2}  \frac{(1+|\Re \psi(v_1)|) e^{c \Re \psi(v_1)T} }{(1+|v_1+v_2|)(1+|v_1-v_2|)}  \dd v_1 \dd v_2 ,
\end{split}
\end{equation*}
where the last inequality follows from \eqref{growth.assum}. 

From Lemma \ref{doubleint} we then get
\begin{equation*}
\int_0^T J(t) \dd t \leq C  \int_{\rr} \frac{(1+|\Re \psi(v_1)|)(1+\log (1 + |v_1|)) e^{c \Re \psi(v_1)} }{1+|v_1|}  \dd v_1,
\end{equation*}
and also
\begin{equation*}
\int_0^T J(t) \dd t \leq C  \int_{\rr} \frac{(1+\log (1 + |v_1|)) e^{c \Re \psi(v_1)} }{1+|v_1|}  \dd v_1,
\end{equation*}
for different constants $c$ and $C$. Lemma now \ref{cond.lm} allows to conclude that this integral is finite, completing the proof of the theorem under the assumption H1.  

Suppose now that one of the two alternative assumptions is satisfied. Then, from Lemma \ref{cond.lm}, or, if $a>0$, directly from the form of the characteristic function, we get
\begin{equation*}
J(t) \leq  C\int_{\rr^2} \frac{(1+|v_1+v_2|^{\alpha/2})(1+|v_1-v_2|^{\alpha/2})}{(1+|v_1+v_2|)(1+|v_1-v_2|)} e^{- cT|v_1|^{\alpha}-c(T-t)|v_2|^\alpha}  \dd v_1 \dd v_2,
\end{equation*}
where we set $\alpha=2$ if $a>0$. To finish the proof in this case, it is now sufficient to repeat the arguments from the beginning of step 3 onwards, taking $\psi(u) = -|u|^\alpha$.
\end{proof}

Next, we turn to options with irregular pay-offs. In this case the convergence rate of the discretization error to zero is not necessarily $r(h)=h$, but depends on the properties of the Lévy measure of $X$ near zero. Therefore, we need to make a precise assumption about these properties. For the same reason (to compute the precise convergence rate and the constant rather than just an upper bound) it is necessary to fix the pay-off profile.  

\begin{theorem}[Delta hedging, digital options] \label{thm:digit_delta} Let the pay-off function be given by $G(S_T) = 1_{S_T\geq K}$ and assume \eqref{deltaint.eq} for some $R>0$.
Let the hedging strategy be given by Proposition \ref{delta.prop}.
\begin{enumerate}
\item Assume that $a=0$ and $\nu$ satisfies the assumption H4-$\alpha$ with $\alpha\in (1,2)$.
Then the hedging error satisfies  
$$
\lim_{h\downarrow 0}\frac{1}{r(h)}E\left[\left(\int_0^T F^h_{t}dS_t\right)^2\right] = \frac{A D_\alpha}{2\pi(f_+ + f_-)^{1/\alpha}} p_T(\log K),
$$
with $r(h) = h^{1-1/\alpha}$, where $D_\alpha$ is a constant depending only on $\alpha$ and given explicitly by
$$
D_\alpha := \frac{1}{(2\Gamma(-\alpha)\cos(\pi(2-\alpha)/2))^{1/\alpha}}\int_{\mathbb R}dv \frac{1-e^{-|v|^\alpha} - |v|^{\alpha}e^{-|v|^\alpha}}{|v|^\alpha (1-e^{-|v|^\alpha})}
$$
and $p_T$ is the density of $X_T$, which can be computed from the characteristic function via
$$
p_T(\log K) = \frac{1}{2\pi} \int_{\mathbb R} dv e^{-iv \log K} e^{T\psi(v)}.
$$

\item Assume that $a>0$. Then the hedging error satisfies  
$$
\lim_{h\downarrow 0}\frac{1}{\sqrt{h}}E\left[\left(\int_0^T F^h_{t}dS_t\right)^2\right] = \frac{A D}{2\pi a} p_T(\log K),
$$
with
\begin{align}
D := \int_{\mathbb R}dv \frac{1-e^{-v^2} - v^{2}e^{-v^2}}{v^2 (1-e^{-v^2})}.\label{D.eq}
\end{align}
\end{enumerate}
\end{theorem}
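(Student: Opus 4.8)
The plan is to apply Theorem \ref{levyerror} with $r(h) = h^{1-1/\alpha}$ in part 1 and $r(h) = \sqrt h$ in part 2, so that the whole problem reduces to evaluating the right-hand side of \eqref{limit}. Since $\sigma_t = 0$ when $a = 0$, in part 1 this amounts to computing the limit of $\frac{A}{r(h)}\int_0^T(\overline{\eta}(t)-t)I_4(t)\,dt$ with $I_4(t) := \ee[S_t^2\int_{\rr}\gamma_t^2(z)e^{2z}\nu(dz)]$, while in part 2 the diffusion term $I_2(t) := \ee[S_t^2\sigma_t^2]$ will turn out to dominate. Before that I first verify hypothesis \eqref{ass1}: using the representation \eqref{deltarep1}--\eqref{deltarep3}, the digital bound \eqref{digibound}, and the estimate \eqref{fi.bound} of Lemma \ref{lem:PQUVpsi}, the integrands $I_1$ and $I_3$ appearing in \eqref{ass1} blow up near $t = T$ at the same order as $I_4$; since \eqref{ass1} carries an extra factor $h/r(h)$, the scaling computation below shows the relevant expression is $O(h)\to 0$, so \eqref{ass1} holds and Theorem \ref{levyerror} applies.

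The key observation for the explicit evaluation is that for the digital delta the differentiation factor $(-iu)$ exactly cancels the denominator of $\hat g$: on the contour $u = v + iR$ one has $\hat g(u)(-iu) = K^{iv - R}$, a quantity of constant modulus. Consequently $I_4(t)$ can be written, after expanding $\gamma_t$ via \eqref{deltarep3} and taking the expectation, as a double Fourier integral
\begin{equation*}
I_4(t) = \frac{K^{-2R}}{4\pi^2}\int_{\rr^2} K^{i(v_1+v_2)} f_4(u_1,u_2)\, e^{(T-t)(\bar\psi(-u_1)+\bar\psi(-u_2))} e^{t\psi(-(u_1+u_2))}\,dv_1\,dv_2 ,
\end{equation*}
with $u_j = v_j + iR$ and $f_4$ as in the proof of Theorem \ref{thm:euro_delta}. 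I would then apply Fubini and carry out the $t$-integration exactly. Writing $\Lambda := -(\bar\psi(-u_1)+\bar\psi(-u_2))$ and using that the sawtooth weight equals $s - kh$ for $T - t \in [kh,(k+1)h)$, the geometric summation over rebalancing intervals produces the kernel $\frac{1 - e^{-h\Lambda} - h\Lambda e^{-h\Lambda}}{\Lambda^2(1 - e^{-h\Lambda})}$ (after replacing $e^{t\psi}$ by its value at $t = T$, legitimate because $e^{-(T-t)\Lambda}$ concentrates at scale $1/\Lambda$). This is precisely the origin of the integrand defining $D_\alpha$ and of the factor $1/(1-e^{-h\Lambda})$.

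It then remains to pass to the scaling limit. Changing variables to $p = v_1 + v_2$ and $q = v_1 - v_2$, the dominant \emph{resonant} region is $p = O(1)$, $q \sim h^{-1/\alpha}$: there $e^{T\psi(-(u_1+u_2))}$ keeps $p$ bounded, while the factor $K^{ip}$ together with $\int K^{ip} e^{T\psi(-p-2iR)}\,dp$ reconstructs the transition density and yields exactly $p_T(\log K)$. In this regime $u_1 \approx q/2$ and $u_2 \approx -q/2$, so by assumption H4-$\alpha$ both $\Lambda(q)$ and $f_4(q)$ are real to leading order (the odd imaginary parts cancel) and asymptotically equal to $c_\Lambda|q|^\alpha$ with $c_\Lambda \propto (f_+ + f_-)$; hence $f_4/\Lambda^2 \approx 1/\Lambda$. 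Substituting $q = h^{-1/\alpha}\rho$ and then $v = c_\Lambda^{1/\alpha}\rho$ turns the $q$-integral into $c_\Lambda^{-1/\alpha}\int_{\rr}\frac{1 - e^{-|v|^\alpha} - |v|^\alpha e^{-|v|^\alpha}}{|v|^\alpha(1 - e^{-|v|^\alpha})}\,dv$, while the prefactor $h^{1-1/\alpha}/r(h) = 1$ fixes the rate; collecting the constants (with $c_\Lambda$ expressed through $2\Gamma(-\alpha)\cos(\pi(2-\alpha)/2)(f_++f_-)$, the stable-exponent asymptotics under H4-$\alpha$) reproduces $\frac{A D_\alpha}{2\pi(f_++f_-)^{1/\alpha}}p_T(\log K)$. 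Part 2 is handled identically with $\alpha = 2$: the diffusion makes $\Re\psi(u)\sim-\tfrac{a^2}{2}u^2$, the resonant scale becomes $q\sim h^{-1/2}$, the rate is $\sqrt h$, and the $1/a$ in the answer arises from $c_\Lambda \propto a^2$; the pure-jump contributions are then of lower order and drop out.

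The main obstacle, and the part requiring the most care, is making the scaling limit rigorous rather than merely formal: one must justify the interchange of limit and integral (a dominated-convergence or uniform-integrability argument) controlling the non-resonant region $q = O(1)$ and the tails, bound the error from replacing $e^{t\psi}$ by $e^{T\psi}$ and from extending the geometric sum to $k = \infty$, and confirm that the approximations $f_4 \approx \Lambda$ and the cancellation of imaginary parts hold with errors that vanish after renormalization by $r(h)$. The uniform estimate \eqref{fi.bound} together with Lemma \ref{lem:PQUVpsi} (comparing $\psi$ and $\bar\psi$, whose small-jump behaviors coincide) is what makes these dominations available and simultaneously settles \eqref{ass1}.
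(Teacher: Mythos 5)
Your proposal is correct and follows essentially the same route as the paper's proof: reduction via Theorem \ref{levyerror}, the sum/difference change of variables with the difference variable living at the resonant scale $h^{-1/\alpha}$, exact evaluation of the sawtooth time integral (the paper's formula \eqref{explicit}, which produces exactly your kernel), the H4-$\alpha$ stable asymptotics \eqref{psilim} transferred to $\bar\psi$ via Lemma \ref{lem:PQUVpsi}, and dominated convergence justified by the uniform bound built from \eqref{fi.bound}, \eqref{powerbelow} and Lemma \ref{lem:growth_imp}, with the sum variable reconstructing $p_T(\log K)$ and the difference variable yielding $D_\alpha$. The only cosmetic difference is that the paper factors $e^{t\psi} = e^{T\psi}e^{-(T-t)\psi}$ exactly (absorbing the residual into the limit $L_2$) rather than treating the replacement of $e^{t\psi}$ by $e^{T\psi}$ as an approximation to be controlled.
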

\begin{proof}
We use the notation introduced in the proof of Theorem \ref{thm:euro_delta}. The proof below covers both cases by setting $\alpha=2$ in the case $a>0$.  
As a preliminary remark, observe that by Lemma \ref{lem:PQUVpsi}, the risk-neutral characteristic exponent $\bar \psi$ also has the property \eqref{powerbelow} of Lemma \ref{cond.lm}. This shows that condition \eqref{deltacond.eq} is satisfied and Proposition \ref{delta.prop} holds. 

\textit{Step 1.} Let
\begin{multline*}
e_i(u_1,u_2,t) := f_i(u_1+iR,u_2+iR)\bar\phi_{T-t}(-u_1-iR)\bar\phi_{T-t}(-u_2-iR) \\
\times \phi_t(-u_1-u_2-2iR)K^{iu_1 + iu_2 - 2R}.
\end{multline*}
Then, with a change of variables, 
\begin{align*}
I_i(t) = \frac{1}{4\pi^2} \int_{\mathbb R^2} e_i(u_1,u_2,t)du_1 du_2 = \frac{h^{-1/\alpha}}{8\pi^2} \int_{\mathbb R^2} e_i\left(\frac{v_1+v_2 h^{-1/\alpha}}{2},\frac{v_1-v_2 h^{-1/\alpha}}{2},t\right)dv_1 dv_2 \, .
\end{align*}
In this first step we would like to show that 
$$
\int_0^T h^{-1}(\bar \eta(t)-t)e_i\left(\frac{v_1+v_2 h^{-1/\alpha}}{2},\frac{v_1-v_2 h^{-1/\alpha}}{2},t\right) dt \, ,
$$
is bounded from above by a function which does not depend on $h$ and is integrable with respect to $v_1$ and $v_2$. This will on one hand prove the assumption \eqref{ass1} (with $r(h) = h^{1-1/\alpha}$) and on the other hand will enable us to use the dominated convergence theorem for computing the limit in \eqref{limit}. 

Using property \eqref{powerbelow}, corollary \ref{repsi.bound}, and the estimate \eqref{fi.bound}, 
$$
|e_i(u_1,u_2,t)| \leq C e^{-c\{t|u_1+u_2|^\alpha +(T-t)|u_1|^\alpha + (T-t)|u_2|^\alpha \}}(1+ |u_1|^{\alpha/2})(1 + |u_2|^{\alpha/2})
$$
for some constants $c,C>0$ which may change from line to line. By Lemma \ref{lem:growth_imp}, we then get:
$$
\left|e_i\left(\frac{v_1+v_2}{2},\frac{v_1-v_2}{2},t\right)\right| \leq C e^{-c\{T|v_1|^\alpha +(T-t)|v_2|^\alpha\}}(1+|v_1|^\alpha +|v_2|^\alpha),
$$
and finally, evaluating the time integral explicitly using the formula
\begin{align}
\int_0^T (\bar \eta(t)-t)e^{a(T-t)}dt = \sum_{i=1}^n e^{a(T-T_i)}\int_{0}^{h} te^{at}dt = \frac{(ah e^{ah} - e^{ah}+1)(1-e^{aT})}{a^2(1-e^{ah})},\label{explicit}
\end{align}
we get
\begin{align*}
&\int_0^T h^{-1}(\bar \eta(t)-t)\left|e_i\left(\frac{v_1+v_2 h^{-1/\alpha}}{2},\frac{v_1-v_2 h^{-1/\alpha}}{2},t\right)\right| dt\\
&\leq Ch^{-2} (h(1+|v_1|^\alpha) + |v_2|^\alpha) e^{-cT|v_1|^\alpha } \int_0^T (\bar \eta(t)-t) e^{-c(T-t)h^{-1}|v_2 |^\alpha}dt\\
& \leq C(1+|v_1|^\alpha) e^{-cT|v_1|^\alpha } 
\frac{-c|v_2|^\alpha e^{-c |v_2|^\alpha} - e^{-c |v_2|^\alpha}+1}{|v_2|^{\alpha}(1 - e^{-c |v_2|^{\alpha}})},
\end{align*}
where the last inequality follows from the bound $1-e^{-x}\leq x$, $x\geq 0$. Since the last expression is integrable with respect to $v_1$ and $v_2$ (it is bounded near zero and behaves like $\frac{1}{|v_2|^\alpha}$ at infinity), step 1 is completed. 

\textit{Step 2.}  Let us now compute the renormalized limiting hedging error \comment{Added `.' in the equation below.}
\begin{align*}
&\varepsilon_0:=\lim_{h\to 0} A h^{1/\alpha-1}E \int_0^T S_t^2 (\bar\eta(t)-t)\left(\sigma_t^2 + \int_{\mathbb R}\gamma_t^2(z)e^{2z}\nu(dz)\right)dt\\
&= \lim_{h\to 0} \frac{A}{8\pi^2}\int_{\mathbb R^2}dv_1 \, dv_2 \int_0^T h^{-1}(\bar \eta(t)-t)\{e_2+e_4\}\left(\frac{v_1+v_2 h^{-1/\alpha}}{2},\frac{v_1-v_2 h^{-1/\alpha}}{2},t\right) dt \, .
\end{align*}
By the dominated convergence theorem, whose application is justified by Step 1, we can compute the limit inside the integral with respect to $v_1$ and $v_2$. 
\begin{align*}
&\lim_{h\to 0} \int_0^T h^{-1}(\bar \eta(t)-t)\{e_2+e_4\}\left(\frac{v_1+v_2 h^{-1/\alpha}}{2},\frac{v_1-v_2 h^{-1/\alpha}}{2},t\right) dt = e^{T\psi(-v_1 - 2iR)}K^{iv_1 - 2R} L_1 L_2,
\end{align*}
with
\begin{align*}
L_1 &= \lim_{h\to 0}h\Biggl\{\psi(-v_1 -2iR) - \psi\left(-\frac{v_1+v_2h^{-1/\alpha}}{2}-i(R+1)\right) \\&\qquad - \psi\left(-\frac{v_1-v_2h^{-1/\alpha}}{2}-i(R+1)\right) +\psi(-2i)\Biggr\},\\
L_2 &= \lim_{h\to 0} \int_0^T dt(\bar \eta(t)-t)h^{-2} e^{(T-t)\{-\psi(-v_1 -2iR) + \bar \psi(-\frac{v_1+v_2h^{-1/\alpha}}{2}-iR) + \bar\psi(-\frac{v_1-v_2h^{-1/\alpha}}{2}-iR)\}}, 
\end{align*}
provided that both limits exist.
Now, a direct computation using Lemma \ref{lem:PQUVpsi} and equations \eqref{psilim} of Lemma \ref{cond.lm} yields $L_1 = 2^{-\alpha}(c_+ + c_-)|v_2|^\alpha$, $v_2 \neq 0$, where the constants $c_+$ and $c_-$ are defined in Lemma \ref{cond.lm}. To compute $L_2$, we first observe that for all $v_2 \neq 0$
\begin{align*}
&\lim_{h\to 0} \Biggl\{-\psi(-v_1 -2iR) + \bar \psi\left(-\frac{v_1+v_2h^{-1/\alpha}}{2}-iR\right) + \bar\psi\left(-\frac{v_1-v_2h^{-1/\alpha}}{2}-iR\right)\Biggr\} = -\infty,\\
& \lim_{h\to 0}h\Biggl\{-\psi(-v_1 -2iR) + \bar \psi\left(-\frac{v_1+v_2h^{-1/\alpha}}{2}-iR\right) \\&\hspace*{3cm}+ \bar\psi\left(-\frac{v_1-v_2h^{-1/\alpha}}{2}-iR\right)\Biggr\} = -2^{-\alpha}(c_++c_-)|v_2|^\alpha \neq 0.
\end{align*}
Combined with the explicit formula \eqref{explicit}, these two limits allow to conclude that
$$
L_2 = \frac{\kappa(v_2)e^{\kappa(v_2)}-e^{\kappa(v_2)}+1}{\kappa(v_2)^2(1-e^{\kappa(v_2)})}, \quad \kappa(v_2) = -2^{-\alpha}(c_++c_-) |v_2|^\alpha.
$$
Finally, assembling $L_1$ and $L_2$ together and performing the integration with respect to $v_1$ and $v_2$, the proof is completed.
\end{proof} 

The behavior of the quadratic hedging strategy for options with irregular pay-off is very different from that of delta hedging: the convergence rate improves rather than deteriorates when the Blumenthal-Getoor index $\alpha$ decreases, and in many cases the convergence takes place in the regular regime even for digital options. 

\begin{theorem}[Martingale quadratic hedging, digital options, regular regime] \label{thm:digit_quad} 
Let the pay-off function and the Lévy process satisfy the conditions \eqref{condopt1}--\eqref{condopt3} and \eqref{quadint.eq} for some $R\in \mathbb R$, and assume that one of the two alternative conditions holds:
\begin{itemize}
\item $a=0$ and $\nu$ satisfies the assumptions H1 and H3-$\alpha_+$ for some $\alpha_+\in(0,1]$.
\item $a=0$ and $\nu$ satisfies the assumptions H2-$\alpha_-$ and H3-$\alpha_+$ with $0<\alpha_- \leq \alpha_+ < \frac{3}{2}$.
\end{itemize}
Let the hedging strategy be given by Proposition \ref{martquad.prop}.
Then
\begin{equation*}
\begin{split}
\lim_{h \downarrow 0} \frac{1}{h} \ee \left[ \left( \int_0^T F_t^h  \dd S_t \right)^2 \right] & = \frac{A}{2} \ee \Biggl[ \int_0^T S_t^2 \left(\sigma_t^2 +  \int_{\rr}  \gamma_t(z) e^{2z} \nu(\dd z)\right) \dd t \Biggr] \, .
\end{split}
\end{equation*} 
\end{theorem}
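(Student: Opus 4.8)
The plan is to run the same machinery as in the proof of Theorem~\ref{thm:euro_delta}: verify that Proposition~\ref{martquad.prop} applies, reduce the whole statement to the finiteness of $\int_0^T I_i(t)\,dt$ for the quantities $I_1,\dots,I_4$ of \eqref{i1i2}--\eqref{i3i4} (now built from the quadratic coefficients \eqref{muquad.eq}--\eqref{gammaquad.eq}), and then conclude by Corollary~\ref{cor:levyerror}. Indeed, exactly as in Step~1 of Theorem~\ref{thm:euro_delta}, once $\int_0^T I_i(t)\,dt<\infty$ is established for $i\in\{1,2,3,4\}$, both hypothesis \eqref{cor:ass1} of Corollary~\ref{cor:levyerror} and hypothesis \eqref{ass1} of Theorem~\ref{levyerror} (with $r(h)=h$) hold, which places us in the regular regime and yields the asserted limit. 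The applicability of Proposition~\ref{martquad.prop} follows from \eqref{condopt1}--\eqref{condopt3}, \eqref{quadint.eq}, and the power-type decay of $\bar\phi$ furnished by Lemma~\ref{cond.lm} (or directly from the form of the characteristic function) together with Lemma~\ref{lem:PQUVpsi}.

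The single new difficulty compared with the regular-payoff case is the weaker Fourier decay of the digital profile: one only has \eqref{digibound}, $|\hat g(u+iR)|\le C/(1+|u|)$, losing one power of $|u|$ relative to \eqref{eurobound}. Writing each $I_i(t)$ as the usual double contour integral over $\mathbb{R}+iR$ and bounding the coefficient factor by \eqref{fi.bound}, I would obtain an integrand whose per-variable weight is $(1+\sqrt{|\Re\psi(u)|})\,|\Upsilon(u+iR)|/(1+|u|)$. The decisive point is that the extra factor $\Upsilon$, present only for the quadratic strategy, is essentially a \emph{divided difference} of $\bar\psi$: from its definition in Proposition~\ref{martquad.prop} one computes that its numerator equals $\int_{\mathbb{R}}(e^{-iuz}-1)(e^{z}-1)\bar\nu(dz)$, so that the refinement of Lemma~\ref{lem:UVpsi} gives the sharp bound $|\Upsilon(u+iR)|\le C\,(1+|\Re\psi(u)|)/(1+|u|)$ --- growth governed by $|\Re\psi(u)|/|u|$ rather than by the factor $-iu$ produced by differentiation in delta hedging. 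This slow growth is precisely what will keep the digital option in the regular regime.

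The main obstacle is the finiteness of the resulting $\int_0^T J(t)\,dt$, where $J$ has the shape of \eqref{jbound} with this heavier weight. I would follow Steps~3--4 of Theorem~\ref{thm:euro_delta}: change variables $v_1=u_1+u_2$, $v_2=u_1-u_2$, use Lemma~\ref{lem:growth_imp} and Corollary~\ref{repsi.bound} to split the exponent into an $e^{c\Re\psi(v_1)T}$ part (harmless, since it decays fast in $v_1$, controlled via H1 or H2-$\alpha_-$) and an $e^{c\Re\psi(v_2)(T-t)}$ part, and carry out the time integration, which turns the latter into the polynomial gain $1/(1+|\Re\psi(v_2)|)$. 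The whole difficulty then sits on the difference variable $v_2$, i.e.\ on the diagonal $u_1\approx-u_2$ where the sum $u_1+u_2=v_1$ is small and gives no decay. Here is the crux: on this diagonal the factor from \eqref{fi.bound} contributes $|\Re\psi(v_2)|$, each of the two $\Upsilon$'s contributes $|\Re\psi(v_2)|/(1+|v_2|)$, and each $\hat g$ contributes $1/(1+|v_2|)$, so that the integrand carries three powers of $|\Re\psi(v_2)|$ in the numerator and four powers of $1+|v_2|$ in the denominator, while the time-integration gain supplies one power of $1/(1+|\Re\psi(v_2)|)$. Cancelling this one power \emph{exactly} --- legitimate because both involve $\Re\psi$ evaluated at the same point $v_2$ --- leaves a $v_2$-integrand of order $|\Re\psi(v_2)|^{2}/(1+|v_2|)^{4}\le C|v_2|^{2\alpha_+-4}$ (using H3-$\alpha_+$), which is integrable at infinity if and only if $\alpha_+<3/2$. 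It is essential to perform this cancellation before estimating: a naive decoupled bound (upper-bounding the numerator by $|v_2|^{\alpha_+}$ via H3-$\alpha_+$ and lower-bounding the gain by $|v_2|^{-\alpha_-}$ via H2-$\alpha_-$) would only give the far weaker condition $3\alpha_+-\alpha_-<3$, which fails in the admissible range.

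Finally I would invoke Lemma~\ref{doubleint} to integrate out $v_2$ (producing at worst a logarithmic factor) and Lemma~\ref{cond.lm} to conclude that the remaining one-dimensional $v_1$-integral is finite, which gives $\int_0^T I_i(t)\,dt<\infty$ and completes the proof through Corollary~\ref{cor:levyerror}. In the first alternative (H1 and $\alpha_+\le1$) the factor $\Upsilon$ is in fact bounded, since $\int_{|z|\le1}|z|\,\bar\nu(dz)\le\int_{|z|\le1}|z|^{\alpha_+}\bar\nu(dz)<\infty$, so the same cancellation leaves a $v_2$-integrand of order $1/(1+|v_2|)^{2}$ and the power-type decay from Lemma~\ref{cond.lm} suffices; in the second alternative (H2-$\alpha_-$, H3-$\alpha_+$) the threshold $\alpha_+<3/2$ is used exactly as above. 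I expect the genuinely delicate step to be the sharp divided-difference bound on $\Upsilon$ together with the exact cancellation against the time-integration gain: this is what separates quadratic from delta hedging and turns the critical-looking digital case into the regular regime.
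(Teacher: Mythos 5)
Your overall skeleton is the paper's: reduce everything to $\int_0^T I_i(t)\,dt<\infty$ and conclude via Theorem \ref{levyerror} and Corollary \ref{cor:levyerror}; bound $I_i$ by a double Fourier integral; change variables $v_1=u_1+u_2$, $v_2=u_1-u_2$; cancel the $|\Re\psi(v_2)|$-type factor coming from \eqref{fi.bound} against the time-integration gain $1/(1+|\Re\psi(v_2)|)$; and finish with Lemmas \ref{doubleint} and \ref{cond.lm}. Your observation that this cancellation must be performed \emph{before} decoupling (a decoupled estimate only gives $3\alpha_+-\alpha_-<3$) is correct and is precisely what the paper's passage to \eqref{asin2} accomplishes. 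The genuine gap is your central new estimate on $\Upsilon$. You claim $|\Upsilon(u+iR)|\le C(1+|\Re\psi(u)|)/(1+|u|)$ as following from a ``refinement of Lemma \ref{lem:UVpsi}''. No such refinement exists --- Lemma \ref{lem:UVpsi} gives only $|\Upsilon(u+iR)|\le C(1+\sqrt{|\Re\psi(u)|})$, with no gain of $1/(1+|u|)$ --- and your stronger bound is in fact \emph{false} under the hypotheses of the theorem. Take, e.g., $P=Q$ and $\nu=\bar\nu$ equal to a stable-like density of index $\alpha_-<1$ on $[-1,1]$ plus a single large atom $m\delta_{x_0}$, $0<x_0<1$: then H1, H2-$\alpha_-$ and H3-$\alpha_+$ all hold, and $|\Re\psi(u)|=O(|u|^{\alpha_-})$, so your bound would force $\Upsilon(u+iR)\to 0$; but the atom contributes $m(e^{Rx_0-iux_0}-1)(e^{x_0}-1)$ to the numerator of $\Upsilon$, which oscillates without decaying, while the stable part's contribution stays bounded. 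Quite generally, the size of this divided difference cannot be read off from $\Re\psi$ alone whenever the ``upper'' index of the small jumps exceeds the ``lower'' index governing the growth of $|\Re\psi|$ --- exactly the regime $\alpha_-<\alpha_+$ that the theorem allows (your bound is sharp only under two-sided stable behavior, i.e.\ H4, which is the setting of Theorem \ref{thm:quad_irreg}, not this one).

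What the paper does instead is prove the purely polynomial bound \eqref{upsilonbound}, $|\Upsilon(u+iR)|\le C(1+|u|^{\alpha_+-1})$, directly from H3-$\alpha_+$ via the interpolation $|e^{-iux}-1|=|e^{-iux}-1|^{\alpha_+-1}\,|e^{-iux}-1|^{2-\alpha_+}\le |ux|^{\alpha_+-1}\,2^{2-\alpha_+}$ (with simple boundedness of $\Upsilon$ when $\alpha_+\le 1$, as you note for the first alternative). This weaker bound carries no $\Re\psi$ factor, so on the diagonal only the \emph{one} power of $|\Re\psi(v_2)|$ from \eqref{fi.bound} remains to be cancelled against the time-integration gain; after that cancellation the $v_2$-integrand is $O\bigl(|v_2|^{2(\alpha_+-1)}/(1+|v_2|)^{2}\bigr)=O(|v_2|^{2\alpha_+-4})$, handled by Lemma \ref{doubleint} with exponent $2-\alpha_+>\tfrac12$, i.e.\ exactly the threshold $\alpha_+<\tfrac32$. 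Note that once you apply H3-$\alpha_+$ (through $|\Re\psi(u)|\le C(1+|u|^{\alpha_+})$) to your claimed bound, you land on the paper's bound and the same power count, so your final arithmetic is consistent; but it rests on an intermediate inequality that is unprovable and false, and two of the three powers of $\Re\psi$ you propose to cancel are spurious. The repair is local: replace the divided-difference claim by the H3-based estimate \eqref{upsilonbound}, and the rest of your outline goes through as written.
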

\begin{proof}
From the assumption H3-$\alpha_+$, for all $u\in \mathbb R$,
\begin{align}
|\Upsilon(u+iR)| &= \frac{1}{\bar A} \left|\int_{\mathbb R}\nu(dx) (e^{Rx-iux}-1)(e^x-1)\right|\leq C + C \left|\int_{|x|\leq 1}\nu(dx) (e^{-iux}-1)e^{Rx}(e^x-1)\right| \notag \\
&\leq C \int_{|x|\leq 1}\nu(dx)|ux|^{\alpha_+-1} |e^{-iux}-1|^{2-\alpha_+}e^{Rx}|e^x-1| \leq C(1+|u|^{\alpha_+-1}).\label{upsilonbound}
\end{align}
for some constant $C>0$ which changes from line to line. The same argument as in the proof of Theorem \ref{thm:euro_delta} then yields
\begin{align*}
I_i(t)\leq J(t) & := C\int_{\rr^2}  \frac{(1+ \sqrt{|\Re \psi(u_1)}|)(1+ \sqrt{|\Re \psi(u_2)|})}{(1+|u_1|^{(2-\alpha_+)\wedge 1})(1+|u_2|^{(2-\alpha_+)\wedge 1})} \\
&  \quad \times e^{c(\Re \psi(u_1+u_2)t + \Re {\psi}(u_1) (T-t) + \Re {\psi}(u_2) (T-t))}  \dd u_1 \dd u_2,
\end{align*}
which leads to
\begin{equation}
\int_0^T J(t) \dd t  \leq C  \int_{\rr^2}  \frac{(1+|\Re \psi(v_1)|) e^{c \Re \psi(v_1)T} }{(1+|v_1+v_2|^{(2-\alpha_+)\wedge 1})(1+|v_1-v_2|^{(2-\alpha_+)\wedge 1})}  \dd v_1 \dd v_2.\label{asin2}
\end{equation}
If $\alpha_+\leq 1$, the above expression reduces to
\begin{equation*}
\int_0^T J(t) \dd t  \leq C  \int_{\rr^2}  \frac{(1+|\Re \psi(v_1)|) e^{c \Re \psi(v_1)T} }{(1+|v_1+v_2|)(1+|v_1-v_2|)}  \dd v_1 \dd v_2,
\end{equation*} 
which is exactly the same as in Theorem \ref{thm:euro_delta}, and so the proof is completed. 

Suppose $\alpha_+>1$. By assumptions of the theorem this means that H2-$\alpha_-$ is satisfied with $\alpha_->0$. By Lemmas \ref{cond.lm} and \ref{doubleint}, the integral \eqref{asin2} then reduces to
\begin{align*}
\int_0^T J(t) \dd t  \leq C  \int_{\rr} \frac{(1+|\Re \psi(v_1)|) e^{c \Re \psi(v_1)T} }{1+|v_1|^{2\alpha_+ - 3}}  \dd v_1 \leq C  \int_{\rr} \frac{(1+|\Re \psi(v_1)|) e^{-c T |v_1|^{\alpha_-}} }{1+|v_1|^{2\alpha_+ - 3}}  \dd v_1,
\end{align*}
which is clearly finite. 
\end{proof}

\begin{theorem}[Martingale quadratic hedging, digital options, irregular regime] \label{thm:quad_irreg}
Let the pay-off function be given by $G(S_T)=1_{S_T\geq K}$ and assume \eqref{quadint.eq} for some $R>0$.
Let the hedging strategy be given by Proposition \ref{martquad.prop}.
\begin{enumerate}
\item Let $a=0$ and let $\nu$ satisfy the assumption H4-$\alpha$ with $\alpha \in \left(\frac{3}{2},2\right)$. 
Then the hedging error satisfies  
$$
\lim_{h\downarrow 0}\frac{1}{r(h)}E\left[\left(\int_0^T F^h_{t}dS_t\right)^2\right] = \frac{1}{2\pi}A Q_\alpha \frac{\gamma_+ \gamma_-}{\bar A^2} (f_+ + f_-)^{\frac{3}{\alpha}-2} p_T(\log K)
$$
with $r(h) = h^{\frac{3}{\alpha}-1},$ where $Q_\alpha$ is a constant depending only on $\alpha$ and given by
$$
Q_\alpha := \left(2\Gamma(-\alpha)\cos\left(\frac{\pi(2-\alpha)}{2}\right)\right)^{\frac{3}{\alpha}-2}\int_{\mathbb R}dv \frac{1-e^{-|v|^\alpha} - |v|^{\alpha}e^{-|v|^{\alpha}}}{|v|^{4-\alpha} (1-e^{-|v|^\alpha})},
$$
and the constants $\gamma_+,\gamma_-$ are defined in equations \eqref{upslim1}--\eqref{upslim2} in terms of $f_+$, $f_-$ and $\alpha$.
\item Let $a>0$. Then the hedging error satisfies
$$
\lim_{h\downarrow 0}\frac{1}{\sqrt{h}}E\left[\left(\int_0^T F^h_{t}dS_t\right)^2\right] = \frac{AD}{2\pi a} p_T(\log K) \frac{a^4}{\bar A^2}
$$
with $D$ as in \eqref{D.eq}. 
\end{enumerate}
\end{theorem}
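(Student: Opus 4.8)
The plan is to imitate the proof of Theorem~\ref{thm:digit_delta} line by line, handling both cases simultaneously by putting $\alpha=2$ in the case $a>0$, and to reduce everything to a single structural modification. Comparing the martingale representations in Propositions~\ref{delta.prop} and~\ref{martquad.prop}, the only difference between delta and quadratic hedging is that the prefactor $(-iu)$ is replaced by $\Upsilon(u)$. For the digital payoff $\hat g(u+iR)=K^{iu-R}/(R-iu)$, and since $-i(u+iR)=R-iu$, the delta product $\hat g(u+iR)(-iu)$ collapsed to the clean term $K^{iu-R}$, whereas $\hat g(u+iR)\Upsilon(u+iR)=K^{iu-R}\,\Upsilon(u+iR)/(R-iu)$ retains an extra factor $\Upsilon(u+iR)/(R-iu)$ per frequency variable. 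Thus, writing $e_i$ for the kernel from the proof of Theorem~\ref{thm:digit_delta}, the entire argument should go through with $e_i$ replaced by
\[
e_i^Q(u_1,u_2,t):=\frac{\Upsilon(u_1+iR)}{R-iu_1}\,\frac{\Upsilon(u_2+iR)}{R-iu_2}\,e_i(u_1,u_2,t),
\]
where only the indices $i\in\{2,4\}$ enter \eqref{limit} and $f_2\equiv 0$ when $a=0$.

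The preliminary step is to record the asymptotics of the new factor. As in Theorem~\ref{thm:digit_delta}, Lemma~\ref{lem:PQUVpsi} shows that $\bar\psi$ inherits the power lower bound \eqref{powerbelow}, so Proposition~\ref{martquad.prop} applies and the representation \eqref{muquad.eq}--\eqref{gammaquad.eq} is valid. The scaling limits \eqref{upslim1}--\eqref{upslim2}, which follow from H4-$\alpha$ together with Lemma~\ref{cond.lm} (or, when $a>0$, from the quadratic leading term of $\bar\psi$, giving $\Upsilon(u+iR)\sim(a^2/\bar A)|u|$), yield $\Upsilon(u+iR)\sim\gamma_\pm|u|^{\alpha-1}$ as $u\to\pm\infty$, and hence $\Upsilon(u+iR)/(R-iu)$ is of order $|u|^{\alpha-2}$.

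Step~1 is the domination estimate, and I expect it to be the main obstacle. Performing the same change of variables $u_1=(v_1+v_2h^{-1/\alpha})/2$, $u_2=(v_1-v_2h^{-1/\alpha})/2$, one has $|u_j|\sim|v_2|h^{-1/\alpha}/2$ with $u_1,u_2$ of opposite signs, so the two new factors contribute a magnitude of order $\gamma_+\gamma_-\,|v_2|^{2\alpha-4}h^{(4-2\alpha)/\alpha}$. Carrying this through the bookkeeping of the delta proof, in particular the explicit time integral \eqref{explicit}, the bound \eqref{fi.bound}, Corollary~\ref{repsi.bound} and the lower bound \eqref{powerbelow}, shifts the rate from $h^{1-1/\alpha}$ to $r(h)=h^{3/\alpha-1}$ and produces an $h$-independent dominating function that behaves like $|v_2|^{2\alpha-4}$ near $v_2=0$ and like $|v_2|^{\alpha-4}$ at infinity. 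This is exactly where the hypothesis $\alpha>3/2$ is used: the factor $|v_2|^{2\alpha-4}$ is integrable near the origin precisely when $2\alpha-4>-1$, i.e.\ $\alpha>3/2$, while integrability at infinity (needing $\alpha<3$) is automatic. This verifies assumption \eqref{ass1} with $r(h)=h^{3/\alpha-1}$ and licenses dominated convergence.

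Step~2 then computes the limit by passing it inside the $(v_1,v_2)$-integral, exactly as in Theorem~\ref{thm:digit_delta}: the inner limits $L_1,L_2$ are unchanged in form, the $v_1$-integration reconstructs the density $p_T(\log K)$ through the inversion formula for the characteristic function, and the $v_2$-integration produces $Q_\alpha$ instead of $D_\alpha$, the only change being the extra power $|v_2|^{2\alpha-4}$ that turns the denominator $|v_2|^\alpha$ into $|v_2|^{4-\alpha}$. The two constants $\gamma_\pm$ assemble into the prefactor $\gamma_+\gamma_-/\bar A^2$; in the case $a>0$ the same computation with $\alpha=2$ replaces $\gamma_+\gamma_-$ by $a^4/\bar A^2$, $Q_\alpha$ by $D$ of \eqref{D.eq} and $r(h)$ by $\sqrt h$, giving the stated constants in both cases.
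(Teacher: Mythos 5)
Your proposal is correct and follows essentially the same route as the paper: the paper's proof defines exactly your kernel $e_i^Q$ (the delta-hedging kernel multiplied by $\Upsilon(u_j+iR)/(R-iu_j)$ in each variable), derives the asymptotics $\Upsilon(u+iR)\sim(\gamma_\pm/\bar A)|u|^{\alpha-1}$ (these are equations \eqref{upslim1}--\eqref{upslim2}, established inside this very proof by a computation parallel to part 3 of Lemma \ref{cond.lm}, not quoted from it), obtains a dominating function behaving like $|v_2|^{2\alpha-4}$ at the origin and $|v_2|^{\alpha-4}$ at infinity --- with $\alpha>3/2$ needed precisely for integrability near zero, as you say --- and then computes the limit as $L_1L_2L_3$ with your extra factor $L_3=\gamma_+\gamma_-\bar A^{-2}(|v_2|/2)^{2\alpha-4}$ (or $a^4/\bar A^2$ when $a>0$). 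The only detail you gloss over is the bookkeeping step that removes the residual positive power of $h$ in the domination bound, which the paper handles via the elementary inequality $1-e^{-x}\leq x^{3-4/\alpha}$, valid because $3-4/\alpha\in(\tfrac13,1)$ for $\alpha\in(\tfrac32,2)$.
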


\begin{remark}
The limiting case when $a=0$ and $\nu$ satisfies H4-$\frac{3}{2}$ is not covered by Theorems \ref{thm:digit_quad} and \ref{thm:quad_irreg}. While it is easy to show that the convergence rate in this case will be better than $r(h)=h^{1-\varepsilon}$ for every $\varepsilon>0$, it may not necessarily be equal to $r(h)=h$ but include, for example, additional logarithmic factors.  
\end{remark}

\begin{remark}
When $a>0$ the convergence rate is $r(h)=\sqrt{h}$ both for delta hedging and the quadratic hedging, but the corresponding constant differs by a factor $\frac{a^4}{\bar A^2}$, which is strictly smaller than one whenever the underlying  Lévy process has jumps. Therefore, also in this case the quadratic hedging strategy is superior to delta hedging. 
\end{remark}
\begin{proof}
As in the proof of Theorem \ref{thm:digit_delta}, we establish the result in two steps: first, we find an upper bound and second, we will use the dominated convergence theorem to compute the limiting renormalized hedging error. If $a>0$, we set $\alpha=2$.  

\textit{Step 1.} Let
\begin{multline*}
e_i(u_1,u_2,t):= f_i(u_1+iR,u_2+iR)\bar\phi_{T-t}(-u_1-iR)\bar\phi_{T-t}(-u_2-iR) \\ \times \phi_t(-u_1-u_2-2iR) K^{iu_1 + iu_2 - 2R} \frac{\Upsilon(u_1+iR)\Upsilon(u_2+iR)}{(R-iu_1)(R-iu_2)}.
\end{multline*}
Then, with a change of variables, 
\begin{align*}
\frac{1}{r(h)}\int_0^T(\bar \eta(t)-t) I_i(t)dt = \frac{1}{2}\int_0^T h^{1-\frac{4}{\alpha}} (\bar\eta(t)-t) \int_{\mathbb R^2} e_i\left(\frac{v_1+v_2 h^{-1/\alpha}}{2},\frac{v_1-v_2 h^{-1/\alpha}}{2},t\right)dv_1 dv_2 
\end{align*}
In this first step we would like to show that 
$$
\int_0^T h^{1-\frac{4}{\alpha}}(\bar \eta(t)-t)e_i\left(\frac{v_1+v_2 h^{-1/\alpha}}{2},\frac{v_1-v_2 h^{-1/\alpha}}{2},t\right) dt
$$
has an integrable bound. 

First, we need to analyze the behavior of $\Upsilon(u)$ as $u\to \infty$. Suppose first that $a=0$. Then 
\begin{align*}
\lim_{u\to +\infty} \frac{\Upsilon(u+iR)}{u^{\alpha-1}} &= \lim_{u\to +\infty} \frac{1}{u^{\alpha-1}\bar A}\int_{\mathbb R}(e^{Rx-iux}-1)(e^x - 1)\nu(dx) \\
&=  \lim_{u\to +\infty} \frac{1}{u^{\alpha-1}\bar A} \Biggl\{\int_{\mathbb R}(e^{-iux}-1)x\nu(dx) \\ &+ \int_{\mathbb R}e^{-iux}\{e^{(1+R)x}-e^{Rx}-x\}\nu(dx)+ \int_{\mathbb R}\{1+x-e^x\}\nu(dx)\Biggr\}.
\end{align*}
Since the two terms in the last line are bounded and $\alpha>1$, 
\begin{align*}
&\lim_{u\to +\infty} \frac{\Upsilon(u+iR)}{u^{\alpha-1}} = \lim_{u\to +\infty} \frac{1}{u^{\alpha-1}\bar A} \int_{\mathbb R}(e^{-iux}-1)x\nu(dx)\\
& \qquad= \lim_{u\to +\infty} \frac{1}{u^{\alpha-1}\bar A} \left\{\int_0^\varepsilon (e^{-iux}-1)\frac{f(x)}{x^{\alpha}}dx -  \int_0^\varepsilon (e^{iux}-1)\frac{f(-x)}{x^{\alpha}}dx\right\},
\end{align*}
where $\varepsilon$ is chosen such that $|f(x)|\leq N$ for all x with $|x|\leq \varepsilon$ and some $N<\infty$. By a change of variables and dominated convergence we then get
\begin{align*}
&\lim_{u\to +\infty} \frac{\Upsilon(u+iR)}{u^{\alpha-1}} = \lim_{u\to +\infty} \frac{1}{\bar A} \left\{\int_0^{\varepsilon u} (e^{-ix}-1)\frac{f(x/u)}{x^{\alpha}}dx -  \int_0^{\varepsilon u} (e^{ix}-1)\frac{f(-x/u)}{x^{\alpha}}dx\right\}\\
&\qquad = \frac{f_+}{\bar A}\int_0^\infty \frac{(e^{-ix}-1)}{x^\alpha}dx - \frac{f_-}{\bar A}\int_0^\infty \frac{(e^{ix}-1)}{x^\alpha}dx.
\end{align*}
Evaluating the integrals (see \cite[lemma 14.11]{sato}) and treating the limit $u\to -\infty$ in a similar manner, we finally obtain \comment{Added `,' and `.' in the equations below.}
\begin{align}
\lim_{u\to +\infty} \frac{\Upsilon(u+iR)}{|u|^{\alpha-1}}&= \frac{\Gamma(1-\alpha)}{\bar A} \{f_+ e^{-i\pi(1-\alpha)/2}-f_- e^{i\pi(1-\alpha)/2} \}:=\frac{\gamma_+}{\bar A} \, , \label{upslim1}\\
\lim_{u\to -\infty} \frac{\Upsilon(u+iR)}{|u|^{\alpha-1}}&= \frac{\Gamma(1-\alpha)}{\bar A} \{f_+ e^{i\pi(1-\alpha)/2}-f_- e^{-i\pi(1-\alpha)/2} \}:=\frac{\gamma_-}{\bar A} \, . \label{upslim2}
\end{align}
If $a>0$, a similar computation which is omitted to save space yields
\begin{align}
\lim_{u\to \infty} \frac{\Upsilon(u+iR)}{u} = -i\frac{a^2}{\bar A}.\label{upslimgauss}
\end{align}

Using property \eqref{powerbelow}, corollary \ref{repsi.bound}, estimate \eqref{fi.bound} and limits \eqref{upslim1}, \eqref{upslim2} and \eqref{upslimgauss},  
\begin{align*}
|e_i(u_1,u_2,t)| \leq C e^{-c\{t|u_1+u_2|^\alpha +(T-t)|u_1|^\alpha + (T-t)|u_2|^\alpha \}}(1+ |u_1|^{3\alpha/2 - 2})(1 + |u_2|^{3\alpha/2 - 2})
\end{align*}
for some constants $c,C>0$ which may change from line to line. By Lemma \ref{lem:growth_imp}, we then get:
$$
\left|e_i\left(\frac{v_1+v_2}{2},\frac{v_1-v_2}{2},t\right)\right| \leq C e^{-c\{T|v_1|^\alpha +(T-t)|v_2|^\alpha\}}(1+|v_1|^{3\alpha-4} +|v_2|^{3\alpha-4}),
$$
and finally, evaluating the time integral using the formula \eqref{explicit},
\begin{align*}
&\int_0^T h^{1-\frac{4}{\alpha}}(\bar \eta(t)-t)\left|e_i\left(\frac{v_1+v_2 h^{-1/\alpha}}{2},\frac{v_1-v_2 h^{-1/\alpha}}{2},t\right)\right| dt\\
&\leq C (h^{1-\frac{4}{\alpha}}(1+|v_1|^{3\alpha-4}) + h^{-2}|v_2|^{3\alpha-4}) e^{-cT|v_1|^\alpha }  \int_0^T (\bar \eta(t)-t) e^{-c(T-t)h^{-1}|v_2 |^\alpha}dt\\
& \leq C(1+|v_1|^{3\alpha-4}) e^{-cT|v_1|^\alpha } 
\frac{-c|v_2|^\alpha e^{-c |v_2|^\alpha} - e^{-c |v_2|^\alpha}+1 }{|v_2|^{4-\alpha}(1 - e^{-c |v_2|^{\alpha}})},
\end{align*}
where the last inequality follows from the bound $1-e^{-x}\leq x^{3-\frac{4}{\alpha}}$, $x\geq 0$, which holds because $3-\frac{4}{\alpha}\in(\frac{1}{3},1)$. Since the last expression is integrable with respect to $v_1$ and $v_2$ (it behaves like $\frac{1}{|v_2|^{4-2\alpha}}$ near zero and like $\frac{1}{|v_2|^{4-\alpha}}$ at infinity), step 1 is completed. 

\textit{Step 2.} Similarly to the proof of Theorem \ref{thm:digit_delta}, we compute 
\begin{align*}
&\lim_{h\to 0} \int_0^T h^{1-\frac{4}{\alpha}}(\bar \eta(t)-t)\{e_2+e_4\}\left(\frac{v_1+v_2 h^{-1/\alpha}}{2},\frac{v_1-v_2 h^{-1/\alpha}}{2},t\right) dt\\
& = e^{T\psi(-v_1 - 2iR)}K^{iv_1 - 2R} L_1 L_2 L_3,
\end{align*}
where $L_1$ and $L_2$ are the same as in the proof of Theorem \ref{thm:digit_delta}, and 
$$
L_3 = \lim_{h\to 0} h^{2-\frac{4}{\alpha}} \frac{\Upsilon\left(\frac{v_1 + v_2 h^{-1/\alpha}}{2}+iR\right)\Upsilon\left(\frac{v_1 - v_2 h^{-1/\alpha}}{2}+iR\right)}{\left(R-i\frac{v_1 + v_2 h^{-1/\alpha}}{2}+iR\right)\left(R-i\frac{v_1 - v_2 h^{-1/\alpha}}{2}\right)} =\gamma_+ \gamma_- \left(\frac{v_2}{2}\right)^{2\alpha-4}
$$
if $a=0$
and $L_3 = \frac{a^4}{\bar A^2}$ if $a>0$.
Assembling the three factors together, the proof is completed.
\end{proof}

\section*{Acknowledgements}
This research was supported by the Chair Financial Risks of the Risk Foundation sponsored by Société Générale, the Chair Derivatives of the Future sponsored by the Fédération Bancaire Française, the Chair Finance and Sustainable Development sponsored by EDF and Calyon, and the Royal Physiographic Society in Lund.


\appendix

\small
\section{Characteristic function estimates in exponential Lévy models and other useful results}
Below we use the common notation introduced in the beginning of section \ref{pricing.sec}.
\begin{lemma}\label{cond.lm}${}$
\begin{enumerate}
\item Let the Lévy measure $\nu$ satisfy the assumption H1 on page \pageref{H1}. Then (i) for every $t>0$ there exist constants $C>0$ and $c>0$ such that 
$$
|\phi_t(z)| \leq C|z|^{-c},\quad z\in \mathbb R
$$
and (ii) there exists a constant $c$ such that 
\begin{equation}
u \geq v \quad \text{implies} \quad \Re \psi(u) \leq \Re \psi(v)+c \quad \text{for all} \quad u , v >0 \, . \label{growth.assum}
\end{equation}
\item Let $\nu$ satisfy the assumption H2-$\alpha$ with $\alpha\in(0,2)$. Then there exist $c>0$ and $C>0$ such that
$$
|\phi_t(z)|\leq C e^{-ct|z|^\alpha}, \quad \forall t>0, \forall z.
$$ 
\item Let $\nu$ satisfy the assumption H4-$\alpha$ with $\alpha\in(1,2)$ and let $a=0$. Then the characteristic exponent $\psi$ satisfies
\begin{align}
\lim_{u\to +\infty} \frac{\psi(u)}{|u|^{\alpha}}=-c_+ \quad \text{and}\quad \lim_{u\to -\infty} \frac{\psi(u)}{|u|^{\alpha}}= -c_-,\label{psilim}
\end{align}
where 
\begin{align*}
c_+= -\Gamma(-\alpha) \{f_+ e^{-i\pi\alpha/2}+f_- e^{i\pi\alpha/2} \},\qquad c_- = -\Gamma(-\alpha) \{f_+ e^{i\pi\alpha/2}+f_- e^{-i\pi\alpha/2} \}
\end{align*}
and there exist constants $c_1,c_3\in \mathbb R$ and $c_2,c_4 >0$ such that for all $u\in \mathbb R$, 
\begin{align}
c_1 - c_2 |u|^\alpha< \Re\psi(u)<c_3 - c_4 |u|^\alpha.\label{powerbelow}
\end{align}
If $a>0$ then equations \eqref{psilim} hold with $c_+=c_- = \frac{a^2}{2}$ and inequality \eqref{powerbelow} holds with $\alpha=2$.
\end{enumerate}
\end{lemma}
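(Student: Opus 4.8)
The plan is to reduce all three parts to estimates on the real part of the characteristic exponent. By the L\'evy--Khintchine formula, for the present triple one has
$$
\Re\psi(u) = -\tfrac12 a^2 u^2 - \int_{\mathbb R}(1-\cos(ux))\,\nu(\dd x),\qquad |\phi_t(u)| = e^{t\Re\psi(u)},
$$
so that every bound on $|\phi_t|$ is equivalent to a bound on $g(u):=\int_{\mathbb R}(1-\cos(ux))\,\nu(\dd x)$, while the asymptotics of $\psi$ follow from those of $\int_{\mathbb R}(e^{\ii ux}-1-\ii ux\,1_{|x|\le1})\,\nu(\dd x)$, the drift $\ii\gamma u$ being of lower order. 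All the work is therefore in controlling $g$ and the compensated integral.

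For part 1 I would decompose $\nu=\nu_0+\nu_1$ as in H1; the finite part contributes $\int(1-\cos(ux))\,\nu_0(\dd x)\in[0,2\nu_0(\mathbb R)]$, bounded uniformly in $u$, hence affects only the additive constant. For the power decay (i), I restrict the $\nu_1$-integral to a neighbourhood of the origin where $k$ is bounded below by a positive constant (by its monotonicity), and after the substitution $y=ux$ use $\int_0^{u\delta}\frac{1-\cos y}{y}\,\dd y=\log u+O(1)$ to obtain $g(u)\ge c\log|u|-C$, whence $|\phi_t(u)|\le C|u|^{-ct}$. For the almost-monotonicity (ii), the key observation is that the $\nu_1$-part of $g$ equals $\int_0^\infty(1-\cos y)\frac{k(y/u)}{y}\,\dd y$ on the positive half-line (and analogously on the negative one); since $k$ is decreasing on $(0,\infty)$ and increasing on $(-\infty,0)$, the integrand is non-decreasing in $u$, so $u\mapsto g(u)$ is non-decreasing. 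As $-\tfrac12 a^2u^2$ only helps, $\Re\psi(u)-\Re\psi(v)\le 2\nu_0(\mathbb R)$ for $u\ge v>0$, which is \eqref{growth.assum}.

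For part 2 the central estimate is $1-\cos\theta\ge c_0\theta^2$ for $|\theta|\le1$; restricting the integral defining $g$ to $\{|x|\le 1/|u|\}$ gives $-\Re\psi(u)\ge c_0 u^2\int_{[-1/|u|,1/|u|]}x^2\,\nu(\dd x)$, and assumption H2-$\alpha$, together with the monotonicity of $r\mapsto\int_{[-r,r]}x^2\,\nu(\dd x)$, upgrades this to $-\Re\psi(u)\ge c|u|^\alpha$ for $|u|$ large; combined with $\Re\psi\le0$ on the bounded frequency range this yields $|\phi_t(u)|\le Ce^{-ct|u|^\alpha}$. I expect this to be the main obstacle: one must convert the integral ($\limsup$) condition into a genuinely pointwise lower bound valid for \emph{all} large $|u|$, and verify that the constant $C$ can be taken uniform over the relevant range of $t$, the delicacy being that near the origin $-\Re\psi$ is only of order $u^2<|u|^\alpha$, so the exponential bound cannot be expected to hold for arbitrarily large $t$ at small frequencies.

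For part 3, under H4-$\alpha$ with $1<\alpha<2$ and $a=0$ I compute $\psi(u)/|u|^\alpha$ directly: substituting $y=ux$ in $\int_{|x|\le1}(e^{\ii ux}-1-\ii ux)\,\nu(\dd x)$ produces $u^\alpha\int_{|y|\le u}(e^{\ii y}-1-\ii y)\frac{f(y/u)}{|y|^{1+\alpha}}\,\dd y$, where $f(y/u)\to f_\pm$ and the integrand is dominated (it is $O(|y|^{1-\alpha})$ near $0$ and $O(|y|^{-\alpha})$ on the compensated tail, both integrable precisely because $\alpha\in(1,2)$). Dominated convergence gives the limit $\int_0^\infty(e^{\ii y}-1-\ii y)\frac{f_+}{y^{1+\alpha}}\,\dd y+\int_0^\infty(e^{-\ii y}-1+\ii y)\frac{f_-}{y^{1+\alpha}}\,\dd y$, which by the standard evaluation $\int_0^\infty(e^{\pm \ii y}-1\mp \ii y)y^{-1-\alpha}\,\dd y=\Gamma(-\alpha)e^{\mp \ii\pi\alpha/2}$ (see \cite[Lemma 14.11]{sato}) equals exactly $-c_\pm$; the tail $\int_{|x|>1}(e^{\ii ux}-1)\,\nu(\dd x)$ is $O(1)$ by Riemann--Lebesgue and the drift is $O(u)$, both $o(|u|^\alpha)$, giving \eqref{psilim}. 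The two-sided bound \eqref{powerbelow} follows from the same scaling, the lower bound on $-\Re\psi$ coming from a neighbourhood of $0$ where $f\ge f_0>0$ and the upper bound from $\int_{|x|\le1}x^2\,\nu(\dd x)<\infty$ plus the finite mass on $\{|x|>1\}$. Finally, when $a>0$ the Gaussian term dominates: $\frac1{u^2}\int(1-\cos(ux))\,\nu(\dd x)\to0$ by dominated convergence (the integrand is bounded by $x^2/2\in L^1(\nu)$ and tends pointwise to $0$), so $\psi(u)/u^2\to-a^2/2$ and \eqref{powerbelow} holds with $\alpha=2$.
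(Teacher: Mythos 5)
Your route coincides with the paper's almost everywhere. For part 1(ii) and part 3 your arguments are essentially the paper's own proof: the change of variables $y=ux$ combined with the monotonicity of $k$ for \eqref{growth.assum} (the paper, like you, reduces to $a=0$ and bounds the $\nu_0$-contribution by $2\nu_0(\mathbb R)$), and the scaling substitution plus dominated convergence plus the explicit integrals of \cite[Lemma 14.11]{sato} for \eqref{psilim}. For parts 1(i) and 2 the paper gives no argument at all but cites \cite{sato} (Lemma 28.5, and the proof of Proposition 28.3 respectively); what you wrote is precisely the content of those proofs, so your version is more self-contained but not different in substance. One imprecision in part 3: the left half of \eqref{powerbelow}, i.e. $-\Re\psi(u)\le c_2|u|^\alpha-c_1$, does not follow from $\int_{|x|\le 1}x^2\nu(\dd x)<\infty$, which only gives $O(u^2)$; you need the scaling estimate $\int_{|x|\le\varepsilon}(1-\cos(ux))N|x|^{-1-\alpha}\dd x\le CN|u|^\alpha$ with $N$ a bound for $f$ near the origin, or, as the paper does, you can deduce both halves of \eqref{powerbelow} from \eqref{psilim} together with continuity and nonpositivity of $\Re\psi$ on compact sets.

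The one genuine gap is exactly the step you flag as "the main obstacle" in part 2, and your instinct is correct: it cannot be closed as stated. Monotonicity of $h(r):=\int_{[-r,r]}x^2\nu(\dd x)$ together with the $\limsup$ hypothesis does \emph{not} yield $-\Re\psi(u)\ge c|u|^\alpha$ for all large $|u|$, because the scales realizing the $\limsup$ may be very sparse. Concretely, take $a=0$ and $\nu=\sum_n x_n^{-\alpha}(\delta_{x_n}+\delta_{-x_n})$ with $x_n=2^{-2^n}$: then $x_n^{\alpha-2}h(x_n)\ge 2$, so H2-$\alpha$ in its $\limsup$ form holds, yet for $u_n=(x_nx_{n-1})^{-1/2}$ the jumps of size at most $x_n$ contribute at most $u_n^2h(x_n)\le Cu_n^{\alpha}(u_nx_n)^{2-\alpha}$ and the jumps of size at least $x_{n-1}$ contribute at most $Cx_{n-1}^{-\alpha}=Cu_n^\alpha(u_nx_{n-1})^{-\alpha}$; since $u_nx_n\to0$ and $u_nx_{n-1}\to\infty$, one gets $-\Re\psi(u_n)=o(u_n^\alpha)$, so the conclusion of part 2 itself fails for this L\'evy measure. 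The statement is true, and your one-line argument with $r=1/|u|$ closes the proof immediately, under the $\liminf$ version of H2-$\alpha$ — which is precisely the hypothesis of the result the paper cites (the proof of Proposition 28.3 in \cite{sato}); the $\limsup$ in H2-$\alpha$ should therefore be read as a $\liminf$. Your secondary worry about uniformity in $t$ is also a correct criticism of the literal statement (for fixed small $|z|$ and $t\to\infty$ the claimed bound fails whenever $-\Re\psi(z)=o(|z|^\alpha)$ as $z\to0$, which is typical), but it is harmless for the paper, since the lemma is only ever applied with $t\in[0,T]$ and constants allowed to depend on $T$.
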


\begin{proof} ${}$
\begin{enumerate}
\item The property (i) is Lemma 28.5 in \cite{sato}; let us concentrate on property (ii). Since this property is linear in $\psi$ and clearly satisfied by a Lévy process with zero Lévy measure, we can suppose without loss of generality that $a=0$. 
 Let $u \geq v$ and $u, v > 0$, then
\begin{align}
\Re \psi(u)-\Re \psi(v) & = \left\{ \int_{\rr} (\cos(ux)-1) \nu_0(d x)- \int_{\rr} (\cos(vx)-1) \nu_0(d x) \right \} \notag \\
& + \left\{ \int_{\rr} (\cos(ux)-1) \frac{k(x)}{|x|} d x - \int_{\rr} (\cos(vx)-1) \frac{k(x)}{|x|} d x \right\} \, . \label{prop:proof:1}
\end{align}
A change of variables ($y=ux$ and $y=vx$) yields for the second term
\begin{align*}
\int_{\rr} (\cos(ux)-1) \frac{k(x)}{|x|} d x - \int_{\rr} (\cos(vx)-1) \frac{k(x)}{|x|} d x  
= \int_{\rr} \frac{(\cos(y)-1)}{|y|} (k(y/u)-k(y/v)) d y  \leq 0 .
\end{align*}
Thus, \eqref{growth.assum} is satisfied with $c=2 \int_{\rr} \nu_0(d x)$.
\item This follows from the proof of Proposition 28.3 in \cite{sato}.
\item Choose $\varepsilon>0$ and $N<\infty$ such that $|f(x)|\leq N$ for all $x$ with $|x|\leq \varepsilon$. Since $\alpha>1$,
\begin{align*}
\lim_{u\to+\infty} \frac{\psi(u)}{u^\alpha} &= \lim_{u\to +\infty} \frac{1}{u^\alpha} \int_{|x|\leq \varepsilon} (e^{iux}-iux-1)\nu(dx) \\ &= \lim_{u\to +\infty} \frac{1}{u^\alpha} \int_{|x|\leq \varepsilon} (e^{iux}-iux-1)\frac{f(x)}{|x|^{1+\alpha}}dx.
\end{align*}
By a change of variables and dominated convergence we then get
\begin{align*}
\lim_{u\to+\infty} \frac{\psi(u)}{u^\alpha}& = \lim_{u\to +\infty}  \int_{|x|\leq \varepsilon u} (e^{ix}-ix-1)\frac{f(x/u)}{|x|^{1+\alpha}}dx\\
&= f_- \int_{-\infty}^0 (e^{ix}-ix-1)\frac{dx}{|x|^{1+\alpha}} + f_+ \int_0^{+\infty} (e^{ix}-ix-1)\frac{dx}{|x|^{1+\alpha}}. 
\end{align*}
These integrals are explicitly computed in \cite[page 84]{sato}, and the case $u\to -\infty$ can be treated in a similar manner.  The estimates \eqref{powerbelow} follow directly from \eqref{psilim}. For the case $a>0$ see \cite[page 16]{bertoin}. 
\end{enumerate}
\end{proof}

\begin{lemma}\label{doubleint}
Let $\alpha >\frac{1}{2},\quad \alpha\neq 1$. Then there exists $C<\infty$ with
\begin{align*}
\int_{\mathbb R} \frac{dv}{(1+|u+v|^\alpha)(1+|u-v|^\alpha)} \leq C(1+|u|)^{1-2\alpha}.
\end{align*}
In the case $\alpha = 1$, there exits $C<\infty$ with
\begin{align*}
\int_{\mathbb R} \frac{dv}{(1+|u+v|)(1+|u-v|)} \leq C\frac{1+\log(1+|u|)}{1+|u|}.
\end{align*}
\end{lemma}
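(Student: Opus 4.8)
The plan is to exploit the two symmetries of the integrand and reduce everything to a single elementary one‑dimensional estimate. Writing $I(u)$ for the left‑hand side, the substitution $v\mapsto -v$ leaves the integrand unchanged while $u\mapsto -u$ merely interchanges the two factors, so $I$ is even in $u$ and one may assume $u\ge 0$; moreover the integrand is even in $v$, giving $I(u)=2\int_0^\infty \frac{dv}{(1+(u+v)^\alpha)(1+|u-v|^\alpha)}$. Since $2\alpha>1$ the integrand decays like $|v|^{-2\alpha}$ at infinity, so $I$ is finite and continuous, hence bounded on $\{|u|\le 1\}$; as $(1+|u|)^{1-2\alpha}$ is bounded away from zero there, the claimed inequalities hold automatically on this set, and it remains to treat $u\ge 1$.

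For $u\ge 1$ I would split the half‑line at the natural scale $v=2u$. On the outer region $v\ge 2u$ one has $u+v\ge v$ and $v-u\ge v/2$, so both factors are of order $v^\alpha$ and the integrand is dominated by $2^\alpha v^{-2\alpha}$; integrating (using $2\alpha>1$) produces a contribution of order $u^{1-2\alpha}$ when $\alpha\neq 1$ and of order $u^{-1}$ when $\alpha=1$. On the inner region $0\le v\le 2u$ the first factor satisfies $1+(u+v)^\alpha\ge 1+u^\alpha\ge u^\alpha$, so after the shift $s=v-u$ this part is bounded by
\[
\frac{1}{u^\alpha}\int_0^{2u}\frac{dv}{1+|u-v|^\alpha}=\frac{2}{u^\alpha}\int_0^{u}\frac{ds}{1+s^\alpha}.
\]

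The crux is therefore the behaviour of the scalar integral $\int_0^u(1+s^\alpha)^{-1}ds$, which is where the two cases of the lemma separate. For $\tfrac12<\alpha<1$ it grows like $\tfrac{u^{1-\alpha}}{1-\alpha}$ (split at $s=1$ and integrate $s^{-\alpha}$), so the inner contribution is again of order $u^{1-\alpha}/u^\alpha=u^{1-2\alpha}$; combining with the outer estimate gives $I(u)\le C\,u^{1-2\alpha}\le C'(1+u)^{1-2\alpha}$. For $\alpha=1$ the same integral equals $\log(1+u)$, so the inner part is of order $u^{-1}\log(1+u)$, and adding the outer $u^{-1}$ term yields $I(u)\le C\,(1+\log(1+u))/(1+u)$, as asserted.

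I expect the only genuinely delicate points to be the choice of the splitting scale $2u$ (which is what makes both factors simultaneously comparable to $v^\alpha$ in the outer region and lets the inner factor be pulled out as $u^{-\alpha}$) and the verification that all constants remain uniform in $u\ge 1$; the remaining steps are direct power‑ and logarithm‑integrals. The two stated inequalities then follow by adding the inner and outer contributions and absorbing the bounded range $|u|\le 1$ into the constant. (The argument delivers exactly $u^{1-2\alpha}$ in the regime $\tfrac12<\alpha<1$ relevant to the applications of the lemma, the inner integral being the term that dictates the final exponent.)
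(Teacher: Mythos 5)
Your proof is correct on the range it actually treats, namely $\tfrac12<\alpha<1$ and $\alpha=1$, and it follows essentially the same route as the paper's own argument: reduction to $u\ge 0$ by symmetry, splitting the $v$-integral at $v=2u$, pulling the bound $1+(u+v)^\alpha\ge u^\alpha$ out of the inner region, and reducing to the scalar integral $\int_0^u(1+s^\alpha)^{-1}ds$, whose growth ($\asymp u^{1-\alpha}$ for $\alpha<1$, $\asymp\log(1+u)$ for $\alpha=1$) dictates the final exponent. Your explicit treatment of $|u|\le 1$ by continuity is a minor, correct addition.

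The one discrepancy is that you drop the case $\alpha>1$, which the hypothesis ``$\alpha>\tfrac12$, $\alpha\neq 1$'' permits. Be aware, however, that this cannot be repaired, because the stated bound is false for $\alpha>1$: for $\alpha=2$ the substitution $w=u+v$ turns the integral into the Cauchy-type convolution $\int_{\mathbb R}\frac{dw}{(1+w^2)(1+(2u-w)^2)}=\frac{\pi}{2(1+u^2)}$, which decays like $u^{-2}$, not like $u^{1-2\alpha}=u^{-3}$; more generally, for $\alpha\in(1,2]$ the region $|v-u|\le u$ alone contributes at least $c\,u^{-\alpha}\gg u^{1-2\alpha}$. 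The paper's own proof has exactly the same restriction: its final inequality implicitly uses $\int_0^{|u|}(1+v)^{-\alpha}dv\asymp(1+|u|)^{1-\alpha}$, which holds only for $\alpha<1$; for $\alpha>1$ that integral is bounded and the inner term is only $O((1+|u|)^{-\alpha})$. Since Lemma \ref{doubleint} is invoked in the paper only with exponents $(2-\alpha_+)\wedge 1\in(\tfrac12,1]$ (Theorem \ref{thm:digit_quad}) and $1$ (Theorem \ref{thm:euro_delta}), your restriction coincides with both the range where the statement is true and the range where it is used; the proper fix is to restate the lemma for $\tfrac12<\alpha\le 1$, the correct bound for $\alpha>1$ being $C(1+|u|)^{-\alpha}$.
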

\begin{proof}
In the case $\alpha\neq 1$, we have
\begin{align*}
&\int_{\mathbb R} \frac{dv}{(1+|u+v|^\alpha)(1+|u-v|^\alpha)} = \int_{0}^\infty \frac{dv}{(1+||u|+v|^\alpha)(1+||u|-v|^\alpha)} \\
& \qquad \leq \frac{2}{1+|u|^\alpha}\int_0^{2|u|} \frac{du}{1+||u|-v|^\alpha} + 2 \int_{2|u|}^\infty \frac{dv}{(1+|v|^\alpha)(1+|\frac{v}{2}|^\alpha)}\\
& \qquad \leq \frac{C}{(1+|u|)^\alpha}\int_0^{|u|}\frac{dv}{(1+v)^\alpha} + C \int_{2|u|}^\infty \frac{dv}{(1+v)^{2\alpha}} \leq \frac{C}{(1+|u|)^{2\alpha-1}},
\end{align*}
where $C$ is a constant which may change from inequality to inequality. In the case $\alpha=1$ the proof is done in a similar manner (the logarithmic factor appears in the first integral of the last line). 
\end{proof}

\begin{lemma} \label{lem:growth_imp}${}$
\begin{enumerate}
\item For any Lévy process $X$,
\begin{equation}
\Re \psi(u) \leq \frac{1}{4} \Re \psi(2u),\quad u\in\mathbb R.\label{eqn:lem_growth_prop_2}
\end{equation}
\item Assume that there exists a constant $C>0$ such that
\begin{equation}
u \geq v \quad \Rightarrow \quad \Re\psi(u) \leq \Re\psi(v) + C \, , \quad u,v > 0 \, .
\label{eqn:lem_growth_prop_1}
\end{equation}
Then
\begin{align}
& \Re\psi((u+v)/2) + \Re\psi((u-v)/2) \leq (\Re\psi(u)+\Re\psi(v))/8 + C/4  \label{eqn:lem_H3}
\intertext{and}
& \sqrt{|\Re\psi((u+v)/2)||\Re\psi((u-v)/2)|} \leq 8 (|\Re\psi(u/2)|+|\Re\psi(v/2)|) + 2C  \label{eqn:lem_H4}
\end{align}
for all $u,v\in \mathbb R$. 
\end{enumerate}
\end{lemma}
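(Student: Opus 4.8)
The plan is to reduce everything to two elementary facts about $p(u):=\Re\psi(u)$. By the L\'evy--Khintchine formula,
$$
p(u) = -\tfrac12 a^2 u^2 + \int_{\mathbb{R}}(\cos(ux)-1)\,\nu(dx),
$$
so that (i) $p$ is even and (ii) $p(u)\le 0$ for every $u$, since both the Gaussian term and the integrand are nonpositive. Evenness lets me assume throughout, by symmetry of both sides of each claimed inequality, that $u\ge v\ge 0$; the sign property $p\le 0$ is what fixes the direction of all the estimates, and tracking it correctly is the only genuine subtlety.

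For part 1 I would argue under the integral sign. The Gaussian contributions to $p(u)$ and to $\tfrac14 p(2u)$ coincide, so it suffices to prove the pointwise bound $\cos\theta-1\le \tfrac14(\cos 2\theta-1)$ for all real $\theta$ and integrate against $\nu$. Writing $\cos\theta-1=-2\sin^2(\theta/2)$ and $\cos 2\theta-1=-8\sin^2(\theta/2)\cos^2(\theta/2)$, the inequality reduces to $2\sin^2(\theta/2)\big(1-\cos^2(\theta/2)\big)\ge 0$, which holds trivially. This gives \eqref{eqn:lem_growth_prop_2}, equivalently $|p(2w)|\le 4|p(w)|$, a reformulation I will reuse in part 2.

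For \eqref{eqn:lem_H3}, applying part 1 to the arguments $(u\pm v)/2$ yields $p((u+v)/2)+p((u-v)/2)\le \tfrac14\big(p(u+v)+p(u-v)\big)$. Since $u+v\ge u>0$ and $u+v\ge v$, the growth hypothesis \eqref{eqn:lem_growth_prop_1} gives both $p(u+v)\le p(u)+C$ and $p(u+v)\le p(v)+C$; averaging these and using $p(u-v)\le 0$ produces $p(u+v)+p(u-v)\le \tfrac12(p(u)+p(v))+C$. Multiplying by $\tfrac14$ gives exactly the claimed bound, the degenerate case $v=0$ being immediate from $p\le 0$.

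The inequality \eqref{eqn:lem_H4} is where the direction of the estimates matters most, and I expect it to be the main obstacle. By the arithmetic--geometric mean inequality the left-hand side is at most $\max\big(|p((u+v)/2)|,|p((u-v)/2)|\big)$, so it suffices to bound each factor \emph{from above}, i.e.\ to bound $p$ at these arguments \emph{from below}. The growth hypothesis is phrased as an upper bound on $p$, so I exploit that both $(u+v)/2$ and $(u-v)/2$ lie in $[0,u]$: applying \eqref{eqn:lem_growth_prop_1} with the larger argument $u$ gives $p(u)\le p((u\pm v)/2)+C$, hence $|p((u\pm v)/2)|\le |p(u)|+C$. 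Combining with $|p(u)|\le 4|p(u/2)|$ from part 1 yields $|p((u\pm v)/2)|\le 4|p(u/2)|+C$, comfortably below the target $8(|p(u/2)|+|p(v/2)|)+2C$; degenerate arguments such as $u=v$ follow at once from $p(0)=0$.
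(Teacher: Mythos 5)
Your proof is correct, and its structure matches the paper's for the first two claims but diverges on the last one. For part 1 and for \eqref{eqn:lem_H3} your argument is essentially the paper's: the paper proves the doubling inequality \eqref{eqn:lem_growth_prop_2} from the identity $\cos 2\theta-1=2(\cos\theta-1)^2+4(\cos\theta-1)$, discarding the nonnegative square term (trivially equivalent to your half-angle computation), and then obtains \eqref{eqn:lem_H3} exactly as you do, by applying \eqref{eqn:lem_growth_prop_1} at the point $u+v$, averaging, and dropping the nonpositive term $\Re\psi((u-v)/2)$, with the reduction to nonnegative arguments done via evenness. The genuine difference is in \eqref{eqn:lem_H4}. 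The paper deduces it from \eqref{eqn:lem_H3}: since $\Re\psi\le 0$, taking absolute values flips \eqref{eqn:lem_H3} into the lower bound $|\Re\psi((u+v)/2)|+|\Re\psi((u-v)/2)|\ge(|\Re\psi(u)|+|\Re\psi(v)|)/8-C/4$, and applying this to the pair $\bigl((u+v)/2,(u-v)/2\bigr)$, whose half-sum and half-difference are $u/2$ and $v/2$, gives $|\Re\psi((u+v)/2)|+|\Re\psi((u-v)/2)|\le 8(|\Re\psi(u/2)|+|\Re\psi(v/2)|)+2C$; the elementary bound $\sqrt{AB}\le A+B$ for $A,B\ge 0$ then concludes. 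You instead prove \eqref{eqn:lem_H4} directly: you bound the geometric mean by the maximum, control each factor from above by invoking \eqref{eqn:lem_growth_prop_1} with $u$ as the \emph{larger} argument to get $|\Re\psi((u\pm v)/2)|\le|\Re\psi(u)|+C$, and then use part 1 in the form $|\Re\psi(u)|\le 4|\Re\psi(u/2)|$. Your route avoids the change-of-variables trick, makes transparent why the hypothesis yields upper bounds on $|\Re\psi|$ at intermediate points, and in fact produces the sharper bound $4|\Re\psi(u/2)|+C$; the paper's route buys economy, since \eqref{eqn:lem_H4} is recycled from \eqref{eqn:lem_H3} with no further use of the hypothesis. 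Your handling of the degenerate cases $v=0$ and $u=v$, where \eqref{eqn:lem_growth_prop_1} is inapplicable because an argument vanishes, is also correct.
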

\begin{proof}
By the Lévy-Khintchine formula
\begin{align*}
\Re \psi(2 u) & = -4 a^2 \frac{u^2}{2} + \int_{\rr}(\cos(2 u x)-1) \nu(\dd x) \\
& = -4 a^2 \frac{u^2}{2} + 2 \int_{\rr} (\cos(u x)-1)^2 \nu(\dd x) + 4 \int_{\rr} (\cos(u x)-1) \nu(\dd x) \\
& \geq -4 a^2 \frac{u^2}{2} + 4 \int_{\rr} (\cos(u x)-1) \nu(\dd x) \, ,
\end{align*}
which proves \eqref{eqn:lem_growth_prop_2}. Combined with \eqref{eqn:lem_growth_prop_1}, this immediately yields
\begin{align}
\Re\psi((u+v)/2) \leq  (\Re\psi(u) + \Re\psi(v))/8 + C/4 \, ,\quad u,v >0 \, , \label{eqn:lem_H1}
\end{align}
and therefore, since $\Re\psi \leq 0$, for all $u,v \in \mathbb R$,
\begin{multline*}
\Re\psi((u+v)/2) + \Re\psi((u-v)/2) = \Re\psi(|u+v|/2) + \Re\psi(|u-v|/2) \\
= \Re\psi((|u|+|v|)/2) + \Re\psi((|u|-|v|)/2) \leq (\Re\psi(|u|) + \Re\psi(|v|))/8 + C/4 \, .
\end{multline*}
Finally, taking absolute values in the above inequality, we have
\begin{align*}
|\Re\psi((u+v)/2)| + |\Re\psi((u-v)/2)| \geq (|\Re\psi(|u|)| + |\Re\psi(|v|)|)/8 - C/4,
\end{align*}
and after a change of variables,
$$
|\Re\psi((u+v)/2)| + |\Re\psi((u-v)/2)| \leq 8 (|\Re\psi(u/2)|+|\Re\psi(v/2)|)+2C,
$$
from which \eqref{eqn:lem_H4} follows. 
\end{proof}

\begin{lemma} \label{lem:UVpsi}
Let $R, R'\in \mathbb R$ with $R\leq R'$ and assume
$$
\int_{|x|>1}e^{-xR }\nu(dx)<\infty,\quad\text{and}\quad \int_{|x|>1}e^{-xR' }\nu(dx)<\infty. 
$$
Then there exists $C>0$ such that for all $u,v \in \mathbb C$ with $\Im u \in [R,R']$, $\Im v \in [R,R']$ and $\Im u + \Im v\in [R,R']$, 
\begin{equation*}
|\psi(u+v)-\psi(u)-\psi(v)| \leq C(1+\sqrt{|\Re \psi(\Re u)|})(1 +  \sqrt{|\Re \psi(\Re v)|}).
\end{equation*}
\end{lemma}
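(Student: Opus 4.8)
The plan is to start from the L\'evy--Khintchine representation $\psi(w) = -\tfrac12 a^2 w^2 + i\gamma w + \int_{\mathbb R}(e^{iwx}-1-iwx\,1_{|x|\leq 1})\nu(dx)$ and exploit the algebraic structure of the second-order difference $\psi(u+v)-\psi(u)-\psi(v)$. Substituting $w=u+v,\,u,\,v$, the linear and truncation terms cancel, the quadratic terms combine into $-a^2uv$, and the jump terms telescope via the factorization $e^{i(u+v)x}-e^{iux}-e^{ivx}+1=(e^{iux}-1)(e^{ivx}-1)$. This yields the clean identity
$$
\psi(u+v)-\psi(u)-\psi(v) = -a^2 uv + \int_{\mathbb R}(e^{iux}-1)(e^{ivx}-1)\nu(dx),
$$
valid under the integrability hypotheses, which guarantee $\psi$ is finite on the relevant strip; the integral converges because the dominant large-$x$ term $e^{i(u+v)x}$ has modulus $e^{-(\Im u+\Im v)x}$ and $\Im u+\Im v\in[R,R']$.

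First I would dispose of the Gaussian part. Since $\Re\psi(\xi)=-\tfrac12 a^2\xi^2+\int_{\mathbb R}(\cos(\xi x)-1)\nu(dx)\leq -\tfrac12 a^2\xi^2$, we have $\sqrt{|\Re\psi(\Re u)|}\geq \tfrac{a}{\sqrt 2}|\Re u|$. Combined with $|u|\leq|\Re u|+\max(|R|,|R'|)$, this gives $a|u|\leq C(1+\sqrt{|\Re\psi(\Re u)|})$, so that $a^2|u||v|$ is already bounded by the right-hand side of the claim (trivially when $a=0$). The remaining work is the jump integral, which I would split at $|x|=1$. On $|x|\leq 1$ I would apply Cauchy--Schwarz after establishing
$$
\int_{|x|\leq 1}|e^{iux}-1|^2\nu(dx)\leq C\bigl(1+|\Re\psi(\Re u)|\bigr),
$$
obtained by writing $e^{iux}-1=e^{i(\Re u)x}(e^{(\Im u)x}-1)+(e^{i(\Re u)x}-1)$: the first summand is $O(|x|)$ uniformly for $\Im u\in[R,R']$ and contributes $\int_{|x|\leq1}x^2\nu(dx)<\infty$, while the second gives $2\int_{|x|\leq1}(1-\cos((\Re u)x))\nu(dx)\leq|\Re\psi(\Re u)|$. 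Taking square roots produces exactly the factorized bound $C(1+\sqrt{|\Re\psi(\Re u)|})(1+\sqrt{|\Re\psi(\Re v)|})$.

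The main obstacle --- and the reason a global $L^2$ estimate cannot be used --- is the large-jump region, where $\int_{|x|>1}|e^{iux}-1|^2\nu(dx)$ need not converge, since $e^{-2(\Im u)x}$ is generally not $\nu$-integrable. Here I would keep the product structure intact, expand $(e^{iux}-1)(e^{ivx}-1)=e^{i(u+v)x}-e^{iux}-e^{ivx}+1$, and estimate each term by its modulus. Because $\Im u,\,\Im v$ and $\Im u+\Im v$ all lie in $[R,R']$, and because $\int_{|x|>1}e^{-\eta x}\nu(dx)\leq\int_{x>1}e^{-Rx}\nu(dx)+\int_{x<-1}e^{-R'x}\nu(dx)<\infty$ uniformly for $\eta\in[R,R']$ (using monotonicity of $\eta\mapsto e^{-\eta x}$ on each half-line), together with $\nu(\{|x|>1\})<\infty$, the entire large-jump integral is bounded by a constant independent of $u,v$. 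This constant is absorbed into the right-hand side since $C\leq C(1+\sqrt{|\Re\psi(\Re u)|})(1+\sqrt{|\Re\psi(\Re v)|})$, and adding the Gaussian, small-jump, and large-jump estimates completes the proof.
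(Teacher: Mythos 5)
Your proof is correct and follows essentially the same route as the paper: the product identity $\psi(u+v)-\psi(u)-\psi(v)=-a^2uv+\int_{\mathbb R}(e^{\ii ux}-1)(e^{\ii vx}-1)\nu(\dd x)$, a constant bound on the large-jump region using the strip hypotheses, Cauchy--Schwarz on $|x|\leq 1$ with the estimate $\int_{|x|\leq 1}|e^{\ii ux}-1|^2\nu(\dd x)\leq C(1+|\Re\psi(\Re u)|)$, and absorption of the Gaussian term via $|\Re\psi(\Re u)|\geq \tfrac12 a^2(\Re u)^2$. (Only cosmetic differences remain, e.g.\ your splitting of $e^{\ii ux}-1$ into two summands versus the paper's exact expansion of $|e^{\ii ux}-1|^2$, plus a harmless sign typo in $e^{(\Im u)x}-1$, which should read $e^{-(\Im u)x}-1$.)
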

\begin{proof}
From the Lévy-Khintchine formula,
\begin{align}
&|\psi(u+v)-\psi(u)-\psi(v)|  = \left| -a^2uv + \int_{\rr} (e^{\ii u x}-1) (e^{\ii v x}-1)  \nu(\dd x)  \right| \notag\\
& \leq  c_1 + c_2 a^2 (1+|\Re u|)(1+(|\Re v|) + \left( \int_{|x|\leq 1} |e^{\ii u x}-1|^2 \nu(\dd x)  \right)^{\frac{1}{2}} \left( \int_{|x|\leq 1} |e^{\ii v x}-1|^2  \nu(\dd x)  \right)^{\frac{1}{2}}.\label{bound3psi}
\end{align}
for some constants $c_1$ and $c_2$. 
Let $u=\alpha + i\beta$, then
\begin{align}
\Re \psi(\Re u) = -\frac{a^2\alpha^2}{2} - \int_{\mathbb R}(\cos \alpha x -1)\nu(dx). \label{repsi}
\end{align}
Therefore,
\begin{multline}
\int_{|x|\leq 1} |e^{iux}-1|^2 \nu(\dd x) \\
= \int_{|x|\leq 1} (e^{-\beta x}-1)^2 \nu(\dd x) + 2 \int_{|x|\leq 1} e^{-\beta x} (1-\cos(\alpha x)) \nu( \dd x) \leq c_3 + 2 e^{|\beta|} |\Re \psi (\alpha)|\label{boundsquare}
\end{multline}
for some $c_3<\infty$. Combining \eqref{bound3psi}, \eqref{repsi} and \eqref{boundsquare}, the proof is completed. 
\end{proof}

\begin{corollary}\label{repsi.bound}
Let $R$ and $R'$ be as in Lemma \ref{lem:UVpsi}. 
Then there exist constants $C_1\in \mathbb R$ and $C_2 >0$ for all $u\in \mathbb C$ with $\Im u \in [R,R']$, 
$$
\Re \psi(u) \leq C_1 + C_2 \Re \psi(\Re u). 
$$ 
\end{corollary}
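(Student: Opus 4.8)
The plan is to write $u=\alpha+i\beta$ with $\beta=\Im u\in[R,R']$ and $\alpha=\Re u$, to expand both $\Re\psi(u)$ and $\Re\psi(\alpha)$ via the L\'evy--Khintchine formula, and to control their difference. Since $\Re(e^{iux})=e^{-\beta x}\cos(\alpha x)$, I would start from
$$\Re\psi(u)=-\tfrac{a^2}{2}(\alpha^2-\beta^2)-\gamma\beta+\int_{\mathbb R}\big(e^{-\beta x}\cos(\alpha x)-1+\beta x 1_{|x|\le1}\big)\nu(\dd x),$$
together with the companion expression $\Re\psi(\alpha)=-\tfrac{a^2\alpha^2}{2}+\int_{\mathbb R}(\cos(\alpha x)-1)\nu(\dd x)$ underlying \eqref{repsi}. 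Writing $e^{-\beta x}\cos(\alpha x)=\cos(\alpha x)+(e^{-\beta x}-1)\cos(\alpha x)$ and subtracting, I obtain
$$\Re\psi(u)=\Re\psi(\alpha)+\tfrac{a^2\beta^2}{2}-\gamma\beta+\int_{\mathbb R}\big((e^{-\beta x}-1)\cos(\alpha x)+\beta x 1_{|x|\le1}\big)\nu(\dd x),$$
so the whole task reduces to bounding the last integral, which I call $I$, from above by $\varepsilon\,|\Re\psi(\alpha)|$ plus a constant, for a suitably small $\varepsilon$.

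For the large jumps I would use $\int_{|x|>1}\big|(e^{-\beta x}-1)\cos(\alpha x)\big|\nu(\dd x)\le\int_{|x|>1}(e^{-\beta x}+1)\nu(\dd x)$, which is finite and bounded uniformly over $\beta\in[R,R']$ since $e^{-\beta x}\le e^{-Rx}+e^{-R'x}$ and both exponential moments are assumed finite (the same hypotheses as in Lemma \ref{lem:UVpsi}). For the small jumps I would Taylor-expand $e^{-\beta x}-1=-\beta x+r(x)$ with $|r(x)|\le C_\beta x^2$ on $|x|\le1$; the remainder contributes $\int_{|x|\le1}r(x)\cos(\alpha x)\nu(\dd x)$, bounded by $C_\beta\int_{|x|\le1}x^2\nu(\dd x)<\infty$, while the dominant piece is the cross term $\int_{|x|\le1}\beta x(1-\cos(\alpha x))\nu(\dd x)$. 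The Gaussian and drift contributions $\tfrac{a^2\beta^2}{2}-\gamma\beta$ are plainly bounded on $\beta\in[R,R']$.

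The main obstacle is precisely this cross term, because the crude estimate $|x|\le1$ only gives $\big|\int_{|x|\le1}\beta x(1-\cos(\alpha x))\nu(\dd x)\big|\le\beta_{\max}\int_{\mathbb R}(1-\cos(\alpha x))\nu(\dd x)\le\beta_{\max}|\Re\psi(\alpha)|$, which produces the coefficient $1-\beta_{\max}$ in front of $\Re\psi(\alpha)$ and is negative as soon as the strip is wide ($\beta_{\max}=\max_{\beta\in[R,R']}|\beta|>1$), violating the required $C_2>0$. To salvage a positive coefficient I would fix a small $\delta>0$ and split $\{|x|\le\delta\}\cup\{\delta<|x|\le1\}$: on the inner region the bound $|x|\le\delta$ yields $\int_{|x|\le\delta}|x|(1-\cos(\alpha x))\nu(\dd x)\le\delta\int_{\mathbb R}(1-\cos(\alpha x))\nu(\dd x)\le\delta|\Re\psi(\alpha)|$, whereas on $\{\delta<|x|\le1\}$ the measure $\nu$ has finite mass, so that region contributes only a constant $C_\delta$. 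Choosing $\delta$ with $\beta_{\max}\delta<1$ gives $|I|\le\beta_{\max}\delta\,|\Re\psi(\alpha)|+(\text{const})$, and assembling all the pieces produces $\Re\psi(u)\le(1-\beta_{\max}\delta)\Re\psi(\alpha)+C_1$, which is the assertion with $C_2=1-\beta_{\max}\delta\in(0,1)$. Throughout I would use $\Re\psi\le0$ (so that $-\Re\psi(\alpha)=|\Re\psi(\alpha)|$, with the harmless $-\tfrac{a^2\alpha^2}{2}\le0$ already absorbed), and the exponential-moment conditions to ensure $\psi$ is well defined on the strip and the manipulations are legitimate.
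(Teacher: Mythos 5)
Your proof is correct, and it is a genuinely different route from the paper's. The paper gives no written proof precisely because it intends the statement as an immediate consequence of Lemma \ref{lem:UVpsi}: write $\psi(u)=\psi(\Re u)+\psi(i\Im u)+\bigl(\psi(u)-\psi(\Re u)-\psi(i\Im u)\bigr)$, apply the lemma to the pair $(\Re u,\,i\Im u)$ after harmlessly enlarging the strip $[R,R']$ to $[R\wedge 0,\,R'\vee 0]$ (the integrability hypotheses persist because $\nu(|x|>1)<\infty$); since $\Re\psi(0)=0$, the lemma's bound collapses to $C\bigl(1+\sqrt{|\Re\psi(\Re u)|}\bigr)$, the real number $\psi(i\Im u)$ is bounded for $\Im u\in[R,R']$, and the elementary estimate $C\sqrt{|\Re\psi(\Re u)|}\le \tfrac12|\Re\psi(\Re u)|+\tfrac{C^2}{2}=-\tfrac12\Re\psi(\Re u)+\tfrac{C^2}{2}$ gives the claim with $C_2=\tfrac12$. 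You instead redo the L\'evy--Khintchine computation from scratch; your identification of the cross term $\int\beta x(1-\cos\alpha x)\,\nu(\dd x)$ as the only piece that can grow like $|\Re\psi(\Re u)|$, and your $\delta$-splitting (needed exactly because the crude bound produces the useless coefficient $1-\beta_{\max}$) are both sound, and you invoke the exponential-moment hypotheses exactly where they are needed. What the lemma route buys is brevity and reuse of estimates already established (this is why the paper can state it as a corollary); what your route buys is a self-contained elementary argument with transparent constants, showing $C_2$ can be taken as any value in $(0,1)$. As a final remark, your cross term can be avoided altogether: group the compensator with the exponential rather than with the cosine, i.e. write the integrand as $\bigl(e^{-\beta x}-1+\beta x 1_{|x|\le 1}\bigr)+e^{-\beta x}(\cos\alpha x-1)$; the first piece integrates to a quantity bounded uniformly in $\beta\in[R,R']$, while the second satisfies $e^{-\beta x}(\cos\alpha x-1)\le e^{-\beta_{\max}}(\cos\alpha x-1)$ on $|x|\le 1$ and is $\le 0$ on $|x|>1$, which yields the corollary directly with $C_2=e^{-\beta_{\max}}$ and no $\delta$-splitting.
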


\begin{lemma}\label{lem:PQUVpsi}
Let $R, R', \bar R, \bar R'\in \mathbb R$ with $R\leq R'$ and $\bar R\leq \bar R'$, such that 
\begin{align*}
&\int_{|x|>1}(e^{-xR }+e^{-xR'})\nu(dx)<\infty,\quad\text{and}\quad &\int_{|x|>1}(e^{-x\bar R }+e^{-x\bar R' })\bar \nu(dx)<\infty.
\end{align*}
Then there exists $C>0$ such that for all $u,v\in \mathbb C$ with $\Re u = \Re v$, $\Im u \in[R,R']$ and $\Im v \in [\bar R,\bar R']$,
$$
|\psi(u)-\bar{\psi}(v)| \leq C(1+ \sqrt{|\Re \psi(\Re u)|}).
$$
\end{lemma}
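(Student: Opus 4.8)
The plan is to use that $\Re u = \Re v =: \alpha$ and to compare both exponents to their common real frequency $\alpha$. First I would write out the L\'evy--Khintchine representations of $\psi$ and $\bar\psi$. Since $Q\sim P$ and $X$ is L\'evy under both measures, the theorem on equivalence of L\'evy processes (Sato \cite[Theorem 33.1]{sato}) tells us that the Gaussian coefficient $a^2$ is common to the two triples, that $\bar\nu = e^{\theta}\nu$ for a function $\theta$ with $\theta(x)\to 0$ as $x\to 0$ and $\int_{|x|\le 1}(e^{\theta(x)/2}-1)^2\nu(dx)<\infty$, and that the drifts obey $\bar\gamma-\gamma = a^2\eta + \int_{|x|\le 1}x(e^{\theta}-1)\nu(dx)$. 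I would then split
$$|\psi(u)-\bar\psi(v)| \le |\psi(u)-\psi(\alpha)| + |\psi(\alpha)-\bar\psi(\alpha)| + |\bar\psi(\alpha)-\bar\psi(v)|,$$
so that the outer terms are ``shifts to the real axis'' under $P$ and $Q$, while the middle term is a pure change-of-measure comparison at the real point $\alpha$.

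For $\psi(u)-\psi(\alpha)$ with $u=\alpha+i\beta$, $\beta\in[R,R']$, I would argue exactly as in the proof of Lemma \ref{lem:UVpsi}. The Gaussian contribution $-\tfrac{a^2}{2}(u^2-\alpha^2)$ has modulus $O(a^2(1+|\alpha|))$, which is dominated by $C(1+\sqrt{|\Re\psi(\alpha)|})$ since $|\Re\psi(\alpha)|\ge a^2\alpha^2/2$ (and vanishes when $a=0$); the large-jump part is bounded via the exponential-moment hypotheses on $\nu$ using $e^{-\beta x}\le e^{-Rx}+e^{-R'x}$. For the small jumps I would isolate the truncation and write the integrand as $e^{i\alpha x}(e^{-\beta x}-1+\beta x)+\beta x(1-e^{i\alpha x})$: the first piece is $O(x^2)$ and integrable, while the second is controlled by Cauchy--Schwarz, $\int_{|x|\le1}|x|\,|1-e^{i\alpha x}|\,\nu \le (\int x^2\nu)^{1/2}(\int|1-e^{i\alpha x}|^2\nu)^{1/2}\le C\sqrt{|\Re\psi(\alpha)|}$. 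This yields $|\psi(u)-\psi(\alpha)|\le C(1+\sqrt{|\Re\psi(\alpha)|})$. The term $|\bar\psi(\alpha)-\bar\psi(v)|$ is treated identically with $\bar\nu,\bar\beta$, giving $C(1+\sqrt{|\Re\bar\psi(\alpha)|})$, which I convert into a bound in $\Re\psi$ via $|\Re\bar\psi(\alpha)|\le C(1+|\Re\psi(\alpha)|)$; this last comparison follows from the real-part estimate of the next step, since $\Re\psi(\alpha)-\Re\bar\psi(\alpha)=\int(1-\cos\alpha x)(\bar\nu-\nu)$.

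The main obstacle is the real-axis difference $\psi(\alpha)-\bar\psi(\alpha)$, whose imaginary part carries a term linear in $\alpha$ that must ultimately disappear, because for a pure-jump process $\sqrt{|\Re\psi(\alpha)|}$ grows strictly more slowly than $|\alpha|$. Its real part $\int(1-\cos\alpha x)(\bar\nu-\nu)$ I would bound by $C\sqrt{|\Re\psi(\alpha)|}$ using $\bar\nu-\nu=(e^\theta-1)\nu$, the estimate $|e^\theta-1|\le C|e^{\theta/2}-1|$ near zero, Cauchy--Schwarz against $\int(e^{\theta/2}-1)^2\nu$, and Lemma \ref{lem:growth_imp}(1) to replace $\Re\psi(2\alpha)$ by $\Re\psi(\alpha)$, the large jumps contributing only a bounded term. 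The delicate imaginary part equals $(\gamma-\bar\gamma)\alpha+\int(\sin\alpha x-\alpha x\,1_{|x|\le1})(\nu-\bar\nu)$; here I would insert the Girsanov drift relation, which makes the piece $\alpha\int_{|x|\le1}x(\nu-\bar\nu)$ cancel exactly against the truncation term, leaving only $-a^2\eta\,\alpha$ (zero if $a=0$, and $O(|\alpha|)=O(\sqrt{|\Re\psi(\alpha)|})$ if $a>0$) together with $\int\sin\alpha x(\nu-\bar\nu)$, again estimated by $C\sqrt{|\Re\psi(\alpha)|}$ through the same Cauchy--Schwarz-plus-Lemma \ref{lem:growth_imp} device.

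Collecting the three estimates gives the claim. The genuinely hard point I expect is the imaginary part of the change-of-measure term: everything rests on the exact cancellation between the drift adjustment $\bar\gamma-\gamma$ and the compensator truncation, which is precisely why the full Girsanov relation of Sato \cite[Theorem 33.1]{sato} (and not merely the exponential-moment assumptions written into the lemma) is indispensable, alongside the absolute-continuity bound $\int(e^{\theta/2}-1)^2\nu<\infty$ that feeds each Cauchy--Schwarz step.
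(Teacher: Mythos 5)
Your proposal is correct and follows essentially the same route as the paper's proof: both compare $\psi(u)$ and $\bar\psi(v)$ to the common real point $\alpha=\Re u=\Re v$, bound the two off-axis differences by the Cauchy--Schwarz argument of Lemma \ref{lem:UVpsi}, handle the real-axis term $\psi(\alpha)-\bar\psi(\alpha)$ via Sato's Theorem 33.1 --- whose drift relation produces exactly the cancellation you identify as the crux, followed by Cauchy--Schwarz against $\int(e^{\theta/2}-1)^2\nu(dx)$ --- and finish by converting $\sqrt{|\Re\bar\psi(\alpha)|}$ into $\sqrt{|\Re\psi(\alpha)|}$ through the same change-of-measure estimate. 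The only cosmetic deviations are that the paper adds and subtracts the bounded quantities $\psi(i\Im u)$ and $\bar\psi(i\Im v)$ so as to invoke Lemma \ref{lem:UVpsi} verbatim (and treats the change-of-measure term as a single complex integral rather than splitting real and imaginary parts), and that, where you assume $\theta(x)\to 0$ near $x=0$ to get $|e^{\theta}-1|\leq C|e^{\theta/2}-1|$, the paper more carefully restricts that step to the set $\{|\theta|\leq 1\}$ and disposes of its complement by Sato's Remark 33.3, a refinement you would need since Theorem 33.1 does not actually give pointwise smallness of $\theta$.
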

\begin{proof}
Let $z = \Re u = \Re v$. The difference in question can be rewritten as
\begin{align}
\psi(u)-\bar{\psi}(v) &= \psi(z)-\bar{\psi}(z) \notag\\ &+ \psi(u)-\psi(\Re u) - \psi (i \Im u) \notag\\ &+ \bar\psi(\Re v) + \bar\psi(i \Im v) - \bar\psi(v) \notag\\ &+ \psi(i \Im u) - \psi(i \Im v).\label{decompos}
\end{align}
Let us start with the first line.  
From Theorem 33.1 in~\cite{sato}, $\gamma-\bar \gamma-\int_{-1}^{1} x (\nu-\bar{\nu}) (\dd x) = a^2\eta$ for some $\eta\in \mathbb R$. This relation yields
$$
|\psi(z)-\bar{\psi}(z)| = \left| a^2\eta z + \int_{\rr} (e^{\ii z x}-1) (e^{\varphi(x)}-1)  \nu(\dd x)   \right|  \, ,
$$
where $\varphi(x)$ is defined by $e^{\varphi(x)}=\nu(\dd x)/\bar{\nu}(\dd x)$. Equation \eqref{repsi} shows that when $a>0$, the first term in the right-hand side satisfies the required bound; let us focus on the second term (the integral). In the following, $C$ denotes a constant which may change from line to line.
\begin{align*}
&\left|\int_{\rr} (e^{\ii z x}-1) (e^{\varphi(x)}-1)  \nu(\dd x)   \right| \leq C + \left|\int_{|x|\leq 1} (e^{\ii z x}-1) (e^{\varphi(x)}-1)  \nu(\dd x)\right| \\
&\leq C + \left|\int_{\{x:|x|\leq 1\}\cap \{x:|\varphi(x)|\leq 1\}} (e^{\ii z x}-1) (e^{\varphi(x)}-1)  \nu(\dd x)\right| \\ 
&\leq C + \left(\int_{\{x:|x|\leq 1\}} |e^{\ii z x}-1|^2   \nu(\dd x)\right)^{1/2} \left(\int_{\{x:|\varphi(x)|\leq 1\}} (e^{\varphi(x)}-1)^2  \nu(\dd x)\right)^{1/2}\\
&\leq C + C \left(\int_{\{x:|x|\leq 1\}} |e^{\ii z x}-1|^2   \nu(\dd x)\right)^{1/2} \leq C(1+ \sqrt{|\Re \psi(z)|}), 
\end{align*}
where the second and the fourth inequality follow from \cite[Remark 33.3 ]{sato} and the last one follows from \eqref{boundsquare}. Finally,
\begin{align}
|\psi(z)-\bar{\psi}(z)| \leq C(1+ \sqrt{|\Re \psi(z)|}).\label{diffequiv}
\end{align}
Applying Lemma \ref{lem:UVpsi} to the second and the third line in \eqref{decompos}, and observing that the fourth line is bounded by a constant, we get
$$
|\psi(u)-\bar{\psi}(v)| \leq C\left(1+ \sqrt{|\Re \psi(z)|} + \sqrt{|\Re \bar\psi(z)|}\right).
$$
Now, using \eqref{diffequiv} for a second time, the proof is completed.\comment{Removed a couple of blank lines as well as added a full stop in the end of the sentence.}
\end{proof}

\section{Martingale representations for Fourier integrals}

\begin{lemma}\label{martrep.prop}
Let the process $F$ be defined by \comment{Added `,' and `.' in some of the equations below.}
\begin{align}
F_t = \int_{\mathbb R} f(u)\bar\phi_{T-t}(-u-iR)S_t^{R'-iu}du,\label{strategy.eq}
\end{align}
where $f$ is a deterministic function satisfying
\begin{align}
\int_{\mathbb R}|f(u)\bar\phi_{T-t}(-u-iR)|du<\infty,\quad \forall t<T.\label{martcond.eq}
\end{align}
Assume
\begin{align}
&\int_{|x|>1}e^{2R'x}\nu(dx)<\infty\quad \text{and}\quad \int_{|x|>1}e^{Rx}\bar \nu(dx)<\infty.\label{histint.eq}
\end{align}
Then the representation \eqref{levyito} holds for $F$ with
\begin{align}
\mu_t &=  \int_{\mathbb R}f(u)\bar\phi_{T-t}(-u-iR)S_t^{R'-iu}(\psi(-u-iR')-\bar\psi(-u-iR))du,\\
\sigma_t &= {a} \int_{\mathbb R} f(u)\bar\phi_{T-t}(-u-iR)(R'-iu)S_t^{R'-iu}du,\\
\gamma_t(z)&=  \int_{\mathbb R} f(u)\bar\phi_{T-t}(-u-iR) S_t^{R'-iu}(e^{(R'-iu)z}-1)du.
\end{align}
\end{lemma}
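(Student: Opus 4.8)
The plan is to reduce the claim to a per-frequency computation, establish the L\'evy--It\^o decomposition for each fixed $u$, and then integrate over $u$ by a stochastic Fubini argument. Fix $u$ and set
\[
M_t^u := \bar\phi_{T-t}(-u-iR)S_t^{R'-iu} = e^{(T-t)\bar\psi(-u-iR)}\,e^{(R'-iu)X_t},
\]
so that $F_t = \int_{\mathbb R} f(u) M_t^u\,du$. The second condition in \eqref{histint.eq} guarantees that $E^Q[e^{RX_{T-t}}]<\infty$, hence $\bar\psi(-u-iR)$ and $\bar\phi_{T-t}(-u-iR)$ are finite, while the first condition guarantees $E[S_t^{2R'}]<\infty$, so $M_t^u$ is square integrable.

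The key observation is to isolate the exponential (Esscher) martingale. Since $E[e^{(R'-iu)X_t}] = e^{t\psi(-u-iR')}$, the process $\mathcal E_t^u := e^{(R'-iu)X_t - t\psi(-u-iR')}$ is a square-integrable $P$-martingale under the first assumption in \eqref{histint.eq}, and its martingale representation is the standard one for exponentials of L\'evy processes,
\[
\mathcal E_t^u = 1 + \int_0^t \mathcal E_{s-}^u\Bigl( (R'-iu)a\,dW_s + \int_{\mathbb R}(e^{(R'-iu)z}-1)\,\tilde J(ds\times dz)\Bigr),
\]
the drift cancelling precisely because of the normalisation by $e^{-t\psi(-u-iR')}$ (this is where the small-jump compensation is handled automatically, avoiding delicate truncation bookkeeping). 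Writing $M_t^u = \beta(t)\mathcal E_t^u$ with the deterministic $C^1$ factor $\beta(t) = e^{(T-t)\bar\psi(-u-iR)+t\psi(-u-iR')}$, whose derivative is $\beta'(t) = \beta(t)(\psi(-u-iR')-\bar\psi(-u-iR))$, integration by parts (with no covariation term, $\beta$ being continuous and of finite variation) yields
\[
dM_t^u = M_{t-}^u\Bigl[(\psi(-u-iR')-\bar\psi(-u-iR))\,dt + (R'-iu)a\,dW_t + \int_{\mathbb R}(e^{(R'-iu)z}-1)\,\tilde J(dt\times dz)\Bigr].
\]
Since $M_{t-}^u = \bar\phi_{T-t}(-u-iR)S_{t-}^{R'-iu}$, the drift, diffusion and jump coefficients of $f(u)M_t^u$ are exactly the integrands of $\mu_t$, $\sigma_t$ and $\gamma_t(z)$ in the statement.

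It then remains to integrate this identity against $f(u)\,du$ and to interchange $\int_{\mathbb R}du$ with the Lebesgue integral $\int_0^T dt$ and the stochastic integrals $\int dW$, $\int \tilde J$, i.e. to invoke a stochastic Fubini theorem. This is the main technical obstacle: one must check the attendant absolute-integrability conditions in $(u,t,z,\omega)$. Finiteness of $F_t$ itself is exactly \eqref{martcond.eq}; for the three coefficients one must control the extra factors $|R'-iu|$, $\bigl(\int_{\mathbb R}|e^{(R'-iu)z}-1|^2\,\nu(dz)\bigr)^{1/2}$ and $|\psi(-u-iR')-\bar\psi(-u-iR)|$. Each of these grows in $u$ at most of order $1+\sqrt{|\Re\psi(u)|}=O(1+|u|)$, by Lemmas \ref{lem:UVpsi} and \ref{lem:PQUVpsi}, while the $L^2$ bounds needed for the Brownian and jump integrands are furnished by \eqref{histint.eq}. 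Once the relevant iterated integrals are shown to be finite---the polynomial-in-$u$ weights being absorbed by the decay of $|f(u)\bar\phi_{T-t}(-u-iR)|$ in the settings where the lemma is applied---the stochastic Fubini theorem applies and produces the representation \eqref{levyito} with the stated $\mu$, $\sigma$ and $\gamma$.
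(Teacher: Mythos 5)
Your per-frequency decomposition is correct and is essentially the paper's own first step in disguise: the paper applies It\^o's formula under the integral sign to $\bar\phi_{T-t}(-u-iR)S_t^{R'-iu}$ for fixed $u$, which is exactly what your Esscher factorization accomplishes (cleanly), and then both arguments must interchange $\int f(u)\,du$ with the time integral and the stochastic integrals. The problem is that this interchange --- which you correctly flag as ``the main technical obstacle'' --- is where the paper's proof does all of its real work, and you do not carry it out. Two concrete issues. First, your blanket claim that each of the three weights is $O(1+\sqrt{|\Re\psi(u)|})$ is false for the Brownian weight $|R'-iu|$: for a pure-jump process, $\Re\psi(u)$ may grow like $|u|^{\alpha}$ with $\alpha<2$, or even remain bounded (compound Poisson), so $|u|$ is not dominated by $1+\sqrt{|\Re\psi(u)|}$. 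The paper resolves this by a case split on the diffusion coefficient: if $a=0$ the Brownian coefficient vanishes identically (it carries the factor $a$), while if $a>0$ the Gaussian bound \eqref{gaussbound} absorbs any polynomial in $u$; the same dichotomy handles the jump term, where in the case $a=0$ one uses the identity $\int_{\mathbb R}|e^{(R'-iu)z}-1|^2\nu(dz)=\psi(-2iR')-2\Re\psi(-u-iR')$ together with Lemma \ref{lem:PQUVpsi}.

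Second, and more fundamentally, you close the argument by saying the weights are ``absorbed by the decay of $|f(u)\bar\phi_{T-t}(-u-iR)|$ in the settings where the lemma is applied.'' That is not a proof of the lemma as stated: its conclusion must follow from \eqref{martcond.eq} and \eqref{histint.eq} alone, not from extra decay of $f$ available in later applications. The paper shows this is possible, via a mechanism your proposal misses: the weight $1+\sqrt{|\Re\bar\psi(u)|}$ is absorbed by the exponential decay of $\bar\phi$ itself, since Corollary \ref{repsi.bound} and Lemma \ref{lem:PQUVpsi} let one replace $\Re\bar\psi(-u-iR)$ by $c\,\Re\bar\psi(u)$ in the exponent, and $\sqrt{x}\,e^{-\delta x}$ is bounded for $x\ge 0$; this costs a slight shrinking of the exponent, which is affordable precisely because \eqref{martcond.eq} is assumed for \emph{all} $t<T$, so one may invoke it at a time closer to $T$. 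Moreover, for the drift term no expectation bound is needed at all: a pathwise Fubini argument suffices, using $\sup_{s\le t}S_s^{R'}<\infty$ a.s.\ since $S$ is c\`adl\`ag. Without the case split on $a$, the absorption into $\bar\phi$'s own decay, and the exploitation of \eqref{martcond.eq} at times arbitrarily close to $T$, your argument establishes the lemma only under additional, unstated hypotheses on $f$, i.e.\ it leaves the stated result unproved.
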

\begin{proof}
Let $t<T$. Applying the Itô formula under the integral sign in \eqref{strategy.eq}, we find, under the condition \eqref{histint.eq},
\begin{align}
F_t - F_0 &=  \int_{\mathbb R} du f(u) \int_0^t \bar\phi_{T-s}(-u-iR) S_s^{R'-iu}(\psi(-u-iR')-\bar\psi(-u-iR))ds
\notag\\ &+  \int_{\mathbb R} du f(u) \int_0^t \bar\phi_{T-s}(-u-iR) (R'-iu)  S_s^{R'-iu} a dW_s\notag\\
& +  \int_{\mathbb R} du f(u) \int_0^t  \bar\phi_{T-s}(-u-iR)  S_{s-}^{R'-iu}\int_{\mathbb R}(e^{(R'-iu)z}-1)\tilde J_X(ds\times dz).\label{itoprice}
\end{align}
To finish the proof, we apply a suitable Fubini-type theorem to each of the three terms. For the first term, we use the standard Fubini theorem (path by path), whose applicability condition is
\begin{align*}
&\int_{\mathbb R} du \int_0^t \left|f(u)  \bar\phi_{T-s}(-u-iR) S_s^{R'-iu}(\psi(-u-iR')-\bar\psi(-u-iR))\right|ds\\
& \leq C\sup_{s\leq t}S_s^{R'} \int_{\mathbb R} du \int_0^t \left|f(u)  e^{(T-s)\Re\bar\psi(-u-iR)}\right|\left(1+\sqrt{|\Re\bar\psi(u)|}\right)ds\\
& \leq C\sup_{s\leq t}S_s^{R'} \int_{\mathbb R} du \int_0^t \left|f(u)  e^{(T-s)c\Re\bar\psi(u)}\right|\left(1+\sqrt{|\Re\bar\psi(u)|}\right)ds\\
&\leq C\sup_{s\leq t}S_s^{R'} \int_{\mathbb R} du \left|f(u)  e^{(T-t)c\Re\bar\psi(u)}\right|<\infty \quad a.s.,
\end{align*}
where we used Lemma \ref{lem:PQUVpsi} to pass from line 1 to line 2 and Corollary \ref{repsi.bound} from line 2 to line 3, and the constants $c>0$ and $C>0$ may change from line to line. Note that $\sup_{s\leq t}S_s^{R'}<\infty$ a.s. because $S$ is càdlàg.

Let us now assume that $\sigma>0$ and study the second term in the right-hand side of \eqref{itoprice}, which can be written as \comment{Added `,' and `.' in some of the equations below.}
\begin{align}
\int_{\mathbb R} \mu(du) \int_0^t H^u_s dW_s,\label{fubini}
\end{align}
where $\mu(du) = |f(u)\bar\phi_{T-t}(-u-iR)| du$ is a finite positive measure on $\mathbb R$
and 
$$
H^u_s = \frac{a f(u) \bar\phi_{T-s}(-u-iR)}{2\pi |f(u)\phi_{T-t}(-u-iR)|}(R'-iu) S_s^{R'-iu}.
$$
By the Fubini theorem for stochastic integrals (see \cite[page 208]{protter2nd}), we can interchange the two integrals in \eqref{fubini} provided that
\begin{align}
E\int_0^t \mu(du) |H^u_s|^2 ds < \infty.\label{fubinicond} 
\end{align}
From Corollary \ref{repsi.bound} it follows that 
$$
\frac{|\bar\phi_{T-s}(-u-iR)|}{|\bar\phi_{T-t}(-u-iR)|} \leq C,
$$
for all $s\leq t\leq T$ for some constant $C>0$ which does not depend on $s$ and $t$. To prove \eqref{fubinicond} it is then sufficient to check
$$
E \int_0^t \int_{\mathbb R}|f(u)\bar\phi_{T-t}(-u-iR)| | S_s^{2(R'-iu)}|^2 (R'-iu)^2 du dt<\infty.
$$
After evaluating the expectation explicitly using \eqref{histint.eq}, the finiteness of this integral follows from
\begin{align}
|\bar\phi_{T-t}(-u-iR)|\leq Ce^{-(T-t)\frac{\sigma^2 u^2}{2}}.\label{gaussbound}
\end{align}
Therefore, the second term on the right-hand side of \eqref{itoprice} is equal to $\int_0^t \sigma_s dW_s$.
 
Let us now turn to the third term in the right-hand side of \eqref{itoprice}. Here we need to apply the Fubini theorem for stochastic integrals with respect to a compensated Poisson random measure \cite[Theorem 5]{appelbaum.06} and the applicability condition boils down to
\begin{align}
E\int_0^t \int_{\mathbb R}|f(u)\bar\phi_{T-t}(-u-iR)| | S_s^{2(R'-iu)}|^2 \int_{\mathbb R} |e^{(R'-iu)z}-1|^2\nu(dz) du dt<\infty.\label{integrgamma}
\end{align}
If $\sigma>0$, this is once again guaranteed by \eqref{gaussbound}, and when $\sigma=0$, 
$$
\int_{\mathbb R} |e^{(R'-iu)z}-1|^2\nu(dz) = \psi(-2iR') - 2 \Re \psi(-u-iR').
$$
Therefore, evaluating the expectation explicitly, and using Lemma \ref{lem:PQUVpsi}, the integrability condition \eqref{integrgamma} reduces to
$$
\int_{\mathbb R}|f(u)\bar\phi_{T-t}(-u-iR)| (1+|\Re\bar\psi(u)|) du <\infty,
$$
which holds by Corollary \ref{repsi.bound} and assumption \eqref{martcond.eq}.  
\end{proof}

\end{document}